\documentclass[letterpaper,11pt]{article}
\usepackage{fullpage}
\usepackage{comment}
\usepackage[whole,autotilde]{bxcjkjatype}
\usepackage{amsthm,amsmath,amssymb}
\usepackage{algorithm,algpseudocode}
\usepackage[unicode]{hyperref}
\usepackage{booktabs}
\usepackage{authblk}
\usepackage{cite}
\usepackage{geometry}
\usepackage{enumerate}
  
\theoremstyle{definition}
\newtheorem{theorem}{Theorem}[section]
\newtheorem{Claim}       {Claim}[section]
\newtheorem{lemma}      {Lemma}[section]

\newtheorem{corollary}   {Corollary}[section]

\newtheorem{definition}  {Definition}[section]
\newtheorem{example}     {Example}[section]
\newtheorem{remark}      {Remark}[section]
\newtheorem{assumption}  {Assumption}[section]

\numberwithin{equation}{section}
\numberwithin{figure}{section}
\numberwithin{table}{section}


\newcommand{\floor}[1]{\left\lfloor #1 \right\rfloor}

\newcommand{\E}{\mathbb{E}}
\newcommand{\supp}{\mathrm{supp}}
\newcommand{\OMIT}[1]{}

\makeatletter
\@addtoreset{equation}{section}
\makeatother

\usepackage{color}
\usepackage{CJKutf8}
\usepackage{ifthen}
\newcommand{\COMM}[2]{{
\begin{CJK}{UTF8}{ipxm}
\ifthenelse{\equal{#1}{YY}}{\color{blue}}{
\ifthenelse{\equal{#1}{TM}}{\color{red}}{
\ifthenelse{\equal{#1}{AA}}{\color{cyan}}{
\ifthenelse{\equal{#1}{BB}}{\color{magenta}}}}}
[#1: #2]
\end{CJK}
}}

\title{Stochastic Packing Integer Programs with Few Queries\footnote{This is the final draft of the paper accepted for publication in Mathematical Programming (Series A). A preliminary version of this paper appeared in SODA 2018.}}
\author[1]{Takanori Maehara\footnote{Email: takanori.maehara@riken.jp}}
\affil[1]{RIKEN Center for Advanced Intelligence Project}
\author[2,1]{Yutaro Yamaguchi\footnote{Email: yutaro\_yamaguchi@ist.osaka-u.ac.jp}}
\affil[2]{Osaka University}
\date{\empty}

\begin{document}
\maketitle

\thispagestyle{empty}
\begin{abstract}
We consider a stochastic variant of the packing-type integer linear programming problem,
which contains random variables in the objective vector.
We are allowed to reveal each entry of the objective vector by conducting a query,
and the task is to find a good solution by conducting a small number of queries.
We propose a general framework of adaptive and non-adaptive algorithms for this problem, and provide a unified methodology for analyzing the performance of those algorithms.
We also demonstrate our framework
by applying it to a variety of stochastic combinatorial optimization problems
such as matching, matroid, and stable set problems.
\end{abstract}
\clearpage
\thispagestyle{empty}
\tableofcontents
\clearpage
\setcounter{page}{1}

\section{Introduction}

\subsection{Problem Formulation}\label{sec:formulation}
We study a stochastic variant of linear programming (LP) with the 0/1-integer constraint,
which enables us to discuss such variants of various packing-type combinatorial optimization problems
such as matching, matroid, and stable set problems in a unified manner.
Specifically, we introduce the \emph{stochastic packing integer programming problem} defined as follows:
\begin{align}
\label{eq:packing}
\begin{array}{ll}
\text{maximize} &\ \tilde c^\top x \\[1mm]
\text{subject to} &\ A x \le b, \\[1mm]
&\ x \in \{0, 1\}^m,
\end{array}
\end{align}
where $A \in \mathbb{Z}_+^{n \times m}$ and $b \in \mathbb{Z}_+^n$,
and $\mathbb{Z}_+$ denotes the set of nonnegative integers.
The objective vector $\tilde c \in \mathbb{Z}_+^m$ is \emph{stochastic} in the following sense.
\begin{itemize}
\item
  The entries $\tilde c_j$ $(j = 1, 2, \ldots, m)$ are independent random variables
  with some hidden distributions for which we are given the following information: for each $j$,
  \begin{itemize}
  \item
    the domain of $\tilde c_j$ is an integer interval $\{c_j^-, c_j^- + 1, \ldots, c_j^+\}$ given by $c_j^-, c_j^+ \in \mathbb{Z}_+$, and
  \item
    the probability that $\tilde c_j = c_j^+$ is at least a given constant $p \in (0, 1]$ (which is independent from $j$), i.e., $c_j^- \leq \tilde c_j \le c_j^+ - 1$ occurs with probability at most $1 - p$.
  \end{itemize}
\item
  When an instance ($A$, $b$, and the above information on $\tilde c$) is given,
  the {\em realized values} of all $\tilde c_j$, denoted by $c_j$,
  are hiddenly fixed by nature according to the above distributions.
\item
  For each $j$, we are allowed to conduct a query to reveal the realized value $c_j$ of $\tilde c_j$.
\end{itemize}
Note that, since all $\tilde c_j$ are independent,
we can consider at any time that each realized value $c_j$ is determined just when a query for $j$ is conducted. 

\begin{example}\label{ex:SM}
Our problem captures the \emph{stochastic matching problem} introduced by Blum et al.~\cite{blum2015ignorance} as follows.
In the stochastic matching problem, we are given an undirected graph $G = (V, E)$ such that
each edge $e \in E$ is \emph{realized} with probability at least $p \in (0, 1]$,
and the goal is to find a large matching that consists of realized edges.
We can know whether each edge is realized or not by conducting a query.
A naive formulation of this situation as our problem is obtained by restricting the domain of $\tilde{c} \in \mathbb{Z}_+^E$ to $\{0, 1\}^{E}$,
by letting $A \in \mathbb{Z}_+^{V \times E}$ be the vertex-edge incidence matrix of $G$, and by setting $b = 1$.
Section~\ref{sec:nonbipartite} gives a more detailed discussion with general edge weights.
\end{example}

Our aim is to find a feasible solution to \eqref{eq:packing} with a large objective value by conducting a small number of queries.
Note that we can definitely obtain an optimal solution by solving the corresponding non-stochastic problem after conducting queries for all $j$.
Our interest is therefore in the trade-off between the number of queries and the quality of the obtained solution.

\subsection{Our Contributions and Technique}\label{sec:contributions}
\begin{table}[t]
\centering
\caption{Results obtained for the adaptive strategy, where $n$ and $m$ denote the number of vertices in the graph and the ground set size of the matroids (or the number of edges), respectively, in question. We omit $O( \cdot )$ in the iteration column. Also, all the coefficients are assumed to be $O(1)$. For the non-adaptive strategy, the approximation ratio is halved.}
\label{tbl:results}
\begin{tabular}{l|c|c} \hline
Problem & Approximation Ratio & Number of Iterations $T$ \\ \hline
Bipartite Matching & $1 - \epsilon$ & $\log (1/\epsilon p) / \epsilon p$ \\
Non-bipartite Matching & $1 - \epsilon$ & $\log (n/\epsilon)/\epsilon p$ \\
$k$-Hypergraph Matching & $(1 - \epsilon) /(k - 1 + 1/k)$ & $(k \log (k/\epsilon p) + 1/\epsilon) / \epsilon p$ \\
$k$-Column Sparse PIP & $(1 - \epsilon)/2k$ & $(k \log (k/\epsilon p) + 1/\epsilon) / \epsilon p$ \\[1mm]
Matroid (Max.~Independent Set)& $1 - \epsilon$ & $\log (m/\epsilon) / \epsilon p$ \\
Matroid Intersection & $1 - \epsilon$ & $\log (m/\epsilon) / \epsilon p$ \\
$k$-Matroid Intersection & $(1 - \epsilon) / k$ & $k \log m \log (m /\epsilon) / \epsilon^3 p$ \\
Matchoid & $(1 - \epsilon) 2/3$ & $\log (m/\epsilon) /\epsilon p$ \\
Degree Bounded Matroid & $1 - \epsilon$ {\small $\displaystyle\left(\begin{array}{c}\text{each constraint}\\\text{is violated}\\\text{at most $d-1$}\end{array}\right)$}
& $d \log (n/\epsilon)/\epsilon^2 p$ \\[4mm]
Stable Set in Chordal Graphs & $1 - \epsilon$ & $\log n / \epsilon p$ \\
Stable Set in t-Perfect Graphs & $1 - \epsilon$ & $\log n \log (n/\epsilon) /\epsilon^3 p$ \\
\hline
\end{tabular}\vspace{-1mm}
\end{table}

\subsubsection*{Contributions}
We propose a general framework of adaptive and non-adaptive algorithms for the stochastic packing integer programming problem.
Here, an algorithm is \emph{non-adaptive} if it reveals all queried items simultaneously, and \emph{adaptive} otherwise.

In the adaptive strategy\footnote{Algorithms \ref{alg:adaptive} and \ref{alg:nonadaptive} have freedom of the choices of algorithms for solving LPs and for finding an integral solution in the last step; in particular, the latter depends heavily on each specific problem before formulated as an integer LP. For this reason, we use the term ``strategy'' rather than ``algorithm'' to refer them.} (which is formally shown in Algorithm~\ref{alg:adaptive} in Section~\ref{sec:two_strategies}), we iteratively compute an optimal fractional solution $x \in [0,1]^m$ to the \emph{optimistic LP} (the LP relaxation of \eqref{eq:packing} in which all the unrevealed $\tilde c_j$ are supposed to be $c_j^+$), and conduct a query for each element $j$ with probability $x_j$.
After the iterations, we find an integral feasible solution to the \emph{pessimistic LP} (in which all the unrevealed $\tilde c_j$ are supposed to be $c_j^-$) by using some algorithms for the corresponding non-stochastic problem.

Similarly, in the non-adaptive strategy (Algorithm~\ref{alg:nonadaptive}), we iteratively compute an optimal fractional solution $x$ to the optimistic LP, and round down each element $j$ (i.e., suppose $\tilde c_j$ to be $c_j^-$ instead of revealing $c_j$) with probability $x_j$.
After the iterations, we reveal all the rounded-down elements and find an integral feasible solution to the pessimistic LP.

In application, we need to decide how to execute the last step,
and the performance of the resulting algorithm depends on combinatorial structure of each specific problem.
Our main contribution is a \emph{proof technique} for analyzing the performance of the algorithms.
Using this technique, we obtain results for the problem classes summarized in Table~\ref{tbl:results}.

\subsubsection*{Technique}
Our technique is based on \emph{LP duality} and \emph{enumeration}.
A brief overview of the technique follows, where we focus on the adaptive strategy.

Let $\tilde\mu$ be the optimal value of the \emph{omniscient LP} (the LP relaxation of \eqref{eq:packing} in which all $\tilde c_j$ are revealed).
Note that $\tilde\mu$ is a random variable depending on the realization of $\tilde c_j$.
Our goal is to evaluate the number of iterations $T$ such that
the optimal value of the pessimistic LP after $T$ iterations
is at least $(1 - \epsilon) \tilde\mu$ with high probability\footnote{Here we consider two types of randomness together.
One is on the realization of $\tilde c_j$, which is contained in the ``stochastic'' input and determines the omniscient optimal value $\tilde\mu$.
The other is on the choice of queried elements, which is involved in our ``randomized'' algorithms and affects the pessimistic LP obtained after the iterations.}. 
Then, if we have an LP-relative $\alpha$-approximation algorithm~\cite{parekh2014generalized} (which outputs an integral feasible solution whose objective value is at least $\alpha$ times the LP-optimal value) for the corresponding non-stochastic problem,
we obtain a $(1 - \epsilon) \alpha$-approximate solution to our problem with high probability.


To discuss the optimal value of the pessimistic LP, we consider the dual LP.
By the LP strong duality, it is sufficient to prove that the dual pessimistic LP
after $T$ iterations has no feasible solution whose objective value is less than $(1 - \epsilon) \tilde\mu$ with high probability.

Here, we introduce a finite set $W \subseteq \mathbb{R}_+^n$ of dual vectors,
called a \emph{witness cover},
for every possible objective value $\mu$ (a candidate of $\tilde\mu$)
that satisfies the following property: if all $y \in W$ are infeasible, there is no feasible solution whose objective value is less than $(1 - \epsilon) \mu$.
Intuitively, $W$ represents all the candidates for dual feasible solutions whose objective values are less than $(1 - \epsilon) \mu$. 
We evaluate the probability that each $y \in W$ becomes infeasible after $T$ iterations, and then estimate the sufficient number of iterations by using the union bound for $W$.

In application, we only need to show the existence of a small witness cover for each specific problem.
We also give general techniques to construct small witness covers
when the considered problem enjoys some nice properties,
e.g., when the constraint system $Ax \leq b$, $x \geq 0$ is totally dual integral.

\subsection{Related Work}
As described in Example~\ref{ex:SM},
our stochastic packing integer programming problem generalizes the \emph{stochastic (unweighted) matching problem}~\cite{blum2015ignorance,assadi2016stochastic,assadi2017stochastic} and the \emph{stochastic (unweighted) $k$-hypergraph matching problem}~\cite{blum2015ignorance}, which have recently been studied in EC (Economics and Computation) community.
These problems are motivated to find an optimal strategy for kidney exchange~\cite{roth2004kidney,dickerson2016organ}.

For the stochastic unweighted matching problem, Blum et al.~\cite{blum2015ignorance} proposed adaptive and non-adaptive algorithms that achieve approximation ratios of $(1 - \epsilon)$ and of $(1/2 - \epsilon)$, respectively, \emph{in expectation}, by conducting $O(\log (1/ \epsilon)/p^{2/\epsilon})$ queries per vertex.
Their technique is based on the existence of disjoint short augmenting paths.
Assadi et al.~\cite{assadi2016stochastic} proposed adaptive and non-adaptive algorithms that respectively achieve the same approximation ratios \emph{with high probability}, by conducting $O(\log (1/ \epsilon p)/\epsilon p)$ queries per vertex.
Their technique is based on the Tutte--Berge formula and vertex sparsification.
Our proposed strategies coincide with those of Assadi et al.\ when they are applied to the stochastic unweighted matching problem and we always find integral optimal solutions to the LP relaxations, i.e.,
solve the (non-stochastic) unweighted matching problem every time.
Our analysis looks similar to theirs since they both use the duality,
but ours is simpler and can also be used for the weighted and capacitated situation.
On the other hand, our analysis shows that $O(\log (n/\epsilon)/\epsilon p)$ queries per vertex are required\footnote{Very recently, Behnezhad and Reyhani~\cite{behnezhad2017almost} claimed that the same algorithm as ours achieves an approximation ratio of $1 - \epsilon$ by conducting a constant number of queries that depends on only $\epsilon$ and $p$. 
Their analysis uses augmenting paths, like Blum et al.~\cite{blum2015ignorance}.}, which is worse than theirs. 

Recently, Assadi et al.~\cite{assadi2017stochastic} proposed a non-adaptive algorithm that achieves an approximation ratio of strictly better than $1/2$ \emph{in expectation}. 
However, this technique is tailored to the unweighted matching problem, so we could not generalize it to our problem.

For the stochastic unweighted $k$-hypergraph matching problem, Blum et al.~\cite{blum2015ignorance} proposed adaptive and non-adaptive algorithms that find $(2 - \epsilon)/k$- and $(4 - \epsilon)/(k^2 + 2k)$-approximate matchings, respectively, \emph{in expectation}, by conducting $O(s_{k,\epsilon} \log (1/ \epsilon)/p^{s_{k,\epsilon}})$ queries per vertex, where $s_{k,\epsilon}$ is a constant depending on $k$ and $\epsilon$.
Their technique is based on the local search method of Hurkens and Schrijver~\cite{hurkens1989size}.
For the adaptive case, our strategy achieves a worse approximation ratio than theirs
because the same is true of the LP-based algorithm versus the local search. 
On the other hand, our algorithm requires an exponentially smaller number of queries and runs in polynomial time both in $n$ and $1/\epsilon$.
In addition, our algorithm can be used for the weighted case.
For the non-adaptive case, our algorithm outperforms theirs,
all in terms of approximation ratio, the number of queries, and running time.

Other variants of the stochastic packing integer programming problem with queries have been studied.
However, many of them employ the \emph{query-commit model}~\cite{dean2004approximating,dean2005adaptivity,molinaro2011query,costello2012stochastic}, in which the queried elements must be a part of the output.
Some studies~\cite{adamczyk2011improved,chen2009approximating,bansal2012lp} also impose additional budget constraints on the number of queries.
In the \emph{stochastic probing problem}~\cite{gupta2013stochastic,adamczyk2016submodular,gupta2017adaptivity}, both the queried and realized elements must satisfy given constraints.
Blum et al.~\cite{blum2013harnessing} studied a stochastic matching problem without query-commit condition, but with a budget constraint on the number of queries.

\subsection{Organization}

The rest of the paper is organized as follows.
In Section~\ref{sec:general}, we describe our framework of adaptive and non-adaptive algorithms for the stochastic packing integer programming problem,
and explain a general technique for providing a bound on the number of iterations.
In Section~\ref{sec:witness}, we outline how to construct a small witness cover in general.
In Section~\ref{sec:applications}, we apply the technique to a variety of specific combinatorial problems.
In Section~\ref{sec:sparsification}, we provide a vertex sparsification lemma that can be used to improve the performance of the algorithms for several problems.

\section{General Framework}
\label{sec:general}

Throughout the paper (with one exception as remarked later),
we assume that the constraints in \eqref{eq:packing} satisfy several reasonable conditions.

\begin{assumption}\label{asmp:constraint}
  We assume that $A \in \mathbb{Z}_+^{n \times m}$ and $b \in \mathbb{Z}_+^n$ in \eqref{eq:packing} satisfy the following three conditions\footnote{The first two are assumed without loss of generality (by removing the corresponding constraints and variables if violated).
    The third one is for simplicity, which holds for most of applications.
    The generalizability to remove it is discussed in Section~\ref{sec:k-CSPIP} with a specific application.}:
  \begin{enumerate}
    \renewcommand{\labelenumi}{\alph{enumi}.}
    \item
      $b \geq 1$;
    \item
      $A\chi_j \leq b$ for each $j = 1, 2, \ldots, m$, where $\chi_j \in \{0, 1\}^m$ denotes the $j$-th unit vector;
    \item
      $Ax \leq b$ and $x \geq 0$ imply $x \leq 1$.
  \end{enumerate}
\end{assumption}

We give a general framework of adaptive and non-adaptive algorithms for our problem in Section~\ref{sec:two_strategies},
and then describe a unified methodology for its performance analysis in Section~\ref{sec:performance_analysis}.
The main results are stated as Theorems \ref{thm:adaptive} and \ref{thm:nonadaptive}, whose proofs are separately shown in Section~\ref{sec:proofs}.

\subsection{Two Strategies}\label{sec:two_strategies}

To describe two strategies,
we formally define two auxiliary problems, the \emph{optimistic LP} and the \emph{pessimistic LP}.
We define the \emph{optimistic vector} $\overline{c} \in \mathbb{Z}_+^m$ and the \emph{pessimistic vector} $\underline{c} \in \mathbb{Z}_+^m$ as follows:
\begin{align}
  \overline{c}_j = \begin{cases} c_j & j \text{ has been queried}, \\ c_j^+ & \text{otherwise}, \end{cases}\quad
  \underline{c}_j = \begin{cases} c_j & j \text{ has been queried}, \\ c_j^- & \text{otherwise},  \end{cases}
\end{align}
where recall that $c_j \in \mathbb{Z}_+$ denotes the realized value of $\tilde c_j$.
The optimistic and pessimistic LPs are obtained from the original stochastic problem \eqref{eq:packing}
by replacing the objective vector $\tilde c$ with $\overline{c}$ and with $\underline{c}$, respectively,
and by relaxing the constraint $x \in \{0, 1\}^m$ to $x \in \mathbb{R}_+^m$,
where $\mathbb{R}_+$ denotes the set of nonnegative reals.
By Assumption~\ref{asmp:constraint}.c ($Ax \leq b$ and $x \geq 0$ imply $x \leq 1$),
the relaxed constraint is equivalent to $x \in [0, 1]^m$.
Note that these problems are no longer stochastic, i.e., contain no random variables.

\begin{algorithm}[t]
\caption{Adaptive strategy.}
\label{alg:adaptive}
\begin{algorithmic}[1]
\For{$t = 1, 2, \ldots, T$}\label{line:1}
\State{Find an optimal solution $x$ to the optimistic LP.} \label{line:2}
\State{For each $j = 1, \ldots, m$, conduct a query to reveal $\tilde c_j$ with probability $x_j$.}\label{line:3}
\EndFor
\State{Find an integral feasible solution to the pessimistic LP and return it.}\label{line:5} 
\end{algorithmic}
\end{algorithm}

\begin{algorithm}[t]
\caption{Non-adaptive strategy.}
\label{alg:nonadaptive}
\begin{algorithmic}[1]
\For{$t = 1, 2, \ldots, T$}
\State{Find an optimal solution $x$ to the optimistic LP.}\label{line:2'}
\State{For each $j = 1, \ldots, m$, suppose $\tilde c_j = c_j^-$ with probability $x_j$.}\label{line:3'}
\EndFor
\State{For every $j$ with $\tilde c_j = c_j^-$ supposed at Line~\ref{line:3'}, conduct a query to reveal $\tilde c_j$.}
\State{Find an integral feasible solution to the pessimistic LP and return it.}\label{line:6}
\end{algorithmic}
\end{algorithm}

First, we describe the {\em adaptive strategy} shown in Algorithm~\ref{alg:adaptive}.
In this strategy, we iteratively compute an optimal solution $x \in [0, 1]^m$ to the optimistic LP\footnote{\label{ft:5}Note that,
if the optimal solution $x$ is written as a convex combination $\sum_{i} \lambda_i x^{(i)}$ of basic feasible solutions $x^{(i)}$,
then every $x^{(i)}$ is also optimal and one can replace $x$ with any $x^{(i)}$.
In particular, when the considered polyhedron is integral (i.e., every extreme point is an integral vector),
Algorithms~\ref{alg:adaptive} and \ref{alg:nonadaptive} can be derandomized based on this observation.},
and reveal each $\tilde c_j$ with probability $x_j$.
After $T$ iterations, we find an integral feasible solution to the pessimistic LP,
where we have freedom of the choice of algorithms for the corresponding non-stochastic problem.
As remarked in Section~\ref{sec:contributions}, how to execute the last step depends heavily on each specific problem.

Next, we describe the {\em non-adaptive strategy} shown in Algorithm~\ref{alg:nonadaptive}.
As with the adaptive strategy, we solve the optimistic LP at each step.
To be non-adaptive, the algorithm tentatively assigns values to $\tilde c_j$ pessimistically instead of revealing their realized values.
After the iterations, it reveals all these values and then computes an integral feasible solution to the pessimistic LP by some algorithms for the non-stochastic problem.

\subsection{Performance Analysis}\label{sec:performance_analysis}

We now analyze the performance of algorithms within our framework.
First, we consider the adaptive strategy (Algorithm~\ref{alg:adaptive}).
As described in Section~\ref{sec:formulation},
we evaluate the trade-off between the following two factors,
each of which is reasonably decomposed into two factors.
\begin{itemize}
\item[(1)]
  {\bf The number of conducted queries.}
  \begin{itemize}
  \item[(1-a)]
    \emph{Expected number of queries at Line~\ref{line:3}}.
    If this number is large, the algorithm may reveal all relevant $\tilde c_j$ in a few iterations, making the algorithm trivial.

  \item[(1-b)]
    \emph{Required number of iterations $T$ at Line~\ref{line:1}}.
    If $T$ is very large, then, as in (1-a), the algorithm may reveal all relevant $\tilde c_j$, making the algorithm trivial.
  \end{itemize}
\item[(2)]
  {\bf The quality of the output solution.}
  Basically, we want to find a feasible solution to \eqref{eq:packing} with a large objective value,
  which is at most the {\em omniscient} optimal value of \eqref{eq:packing} after all $\tilde c_j$ are revealed.
  \begin{itemize}
  \item[(2-a)]
    \emph{Closeness between the pessimistic and omniscient LPs}.
    The omniscient optimal value of \eqref{eq:packing} is at most the optimal value $\tilde\mu$ of the {\em omniscient LP},
    which is obtained by revealing all $\tilde c_j$ and by relaxing $x \in \{0, 1\}^m$ to $x \in \mathbb{R}_+^m$.
    If the pessimistic LP-optimal value at Line~\ref{line:5} is close to $\tilde\mu$,
    then, at least as an LP, the pessimistic problem is close to the problem that we want to solve.
  \item[(2-b)]
    \emph{LP-relative approximation ratio at Line~\ref{line:5}}.
    If one can find an integral feasible solution such that
    the ratio between its objective value and the LP-optimal value is bounded, then, combined with (2-a),
    a reasonable bound on the objective value of the output solution can be obtained.
  \end{itemize}
\end{itemize}

Essentially, (1-a) and (2-b) are properties of each specific problem and its LP formulation.
Thus, we postpone these two factors to the discussion on applications in Section~\ref{sec:applications},
and focus on (1-b) and (2-a) in the general study in this section.
%
That is, our goal here is to estimate $T$ such that the optimal value of the pessimistic LP after $T$ iterations
is at least $(1 - \epsilon) \tilde\mu$ with high probability,
where $\tilde\mu$ is the optimal value of the omniscient LP and $\epsilon > 0$ is a parameter one can choose.
Note again that $\tilde\mu$ is a random variable depending on the realization of $\tilde c_j$.

To evaluate the number of iterations $T$, we consider the dual of the pessimistic LP:
\begin{align}
\label{eq:packingdual}
\begin{array}{ll}
\text{minimize} &\ y^\top b \\[1mm]
\text{subject to} &\ y^\top A \ge \underline{c}^\top\!, \\[1mm] 
&\ y \in \mathbb{R}_+^n.
\end{array}
\end{align}
By the LP strong duality, it is sufficient to evaluate the probability that this dual LP has no feasible solution whose objective value is less than $(1 - \epsilon) \tilde\mu$.

Now we introduce the notion of a \emph{witness cover}, which is the most important concept in this study.
Intuitively, a witness cover for $\mu \in \mathbb{R}_+$
is a set of ``representatives'' of all the dual feasible solutions
with objective values of at most $(1 - \epsilon) \mu$.
More specifically, for any primal objective vector, if some dual feasible solution has the objective value at most $(1 - \epsilon)\mu$,
then a witness cover contains at least one such dual solution. 

\begin{definition}\label{def:witness}
Let $A \in \mathbb{Z}_+^{n\times m}$, $b \in \mathbb{Z}_+^n$, and $\epsilon, \epsilon' \in \mathbb{R}_+$ with $0 < \epsilon' \leq \epsilon$.
A finite set $W \subseteq \mathbb{R}_+^n$ of dual vectors is an \emph{$(\epsilon,\epsilon')$-witness cover for $\mu \in \mathbb{R}_+$} if it satisfies the following two properties.
\begin{enumerate}
\item For every $c \in \mathbb{Z}_+^m$,
  if $y^\top A \geq c^\top$ is violated (i.e., $(y^\top A)_j < c_j$ for some $j$) for all $y \in W$,
  then $y^\top A \geq c^\top$ is violated for all $y \in \mathbb{R}_+^n$ with $y^\top b \leq (1 - \epsilon)\mu$.
\item $y^\top b \leq (1 - \epsilon') \mu$ holds for all $y \in W$.
\end{enumerate}
\end{definition}

\begin{example}
Consider the bipartite matching case (see Section~\ref{sec:bipartite} for the detail).
In the LP relaxation of the naive formulation \eqref{eq:bipartite},
the constraint system is totally dual integral (see Section~\ref{sec:tdipoly} for the detail),
and each dual vector is an assignment of nonnegative reals to vertices, whose sum is the objective value.
Hence, the set of assignments of nonnegative integers to vertices whose sum is at most $(1 - \epsilon)\mu$
is an $(\epsilon, \epsilon)$-witness cover for $\mu$. 
\end{example}

During the iterations, the constraints in the dual pessimistic LP \eqref{eq:packingdual} become successively stronger.
Hence, for any witness cover $W$ for the omniscent LP-optimal value $\tilde\mu$, every $y \in W$ eventually becomes infeasible to \eqref{eq:packingdual} (by the second condition in Definition~\ref{def:witness}).
By evaluating the probability that all $y \in W$ become infeasible after $T$ iterations, we obtain a bound on the required number of iterations.
Note again that $\tilde\mu$ is a random variable, and hence
we assume that there exists a relatively small witness cover
for every possible objective value $\mu$,
which can be restricted to $\mu \geq 1$ due to Assumption~\ref{asmp:constraint}.b
(see the proof for the detail).

\begin{theorem}
\label{thm:adaptive}
Let $M \in \mathbb{R}_+$,
and suppose that there exists an $(\epsilon, \epsilon')$-witness cover of size at most $M^\mu$ for every $\mu \geq 1$.
Then, by taking
\begin{align}\label{eq:iterations}
  T \geq \frac{\Delta_c}{\epsilon' p}\log\left(\frac{M}{\delta}\right),
\end{align}
the pessimistic LP at Line~\ref{line:5} of Algorithm~\ref{alg:adaptive} has a $(1 - \epsilon)$-approximate solution with probability at least $1 - \delta$, where $\Delta_c = \max_j (c_j^+ - c_j^-)$ and $0 < \delta < 1$.
\end{theorem}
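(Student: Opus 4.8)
The plan is to bound, for a fixed witness vector $y \in W$, the probability that $y$ remains feasible for the dual pessimistic LP \eqref{eq:packingdual} after $T$ iterations, and then apply the union bound over a witness cover whose size is controlled by the hypothesis $|W| \le M^{\mu}$. First I would fix the realization of $\tilde c$, which fixes the omniscient value $\tilde\mu$; since removing all-zero columns (Assumption~\ref{asmp:constraint}.b) guarantees $\tilde\mu \ge 1$, we may apply the witness-cover hypothesis at $\mu = \tilde\mu$. Take $W$ to be an $(\epsilon,\epsilon')$-witness cover for $\tilde\mu$ of size at most $M^{\tilde\mu}$. By strong LP duality, it suffices to show that with probability at least $1-\delta$ every $y \in W$ becomes infeasible for \eqref{eq:packingdual}: then the dual pessimistic LP has no feasible point of value $\le (1-\epsilon)\tilde\mu$ (property~1 of Definition~\ref{def:witness}), so its optimum — equal to the primal pessimistic optimum — exceeds $(1-\epsilon)\tilde\mu$, and composing with the LP-relative approximation in the last step yields a $(1-\epsilon)$-approximate solution.

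The core estimate is the per-witness infeasibility probability. Fix $y \in W$. Since $y$ is infeasible once some constraint $(y^\top A)_j < \underline c_j$ holds, and $\underline c_j$ increases from $c_j^-$ toward $c_j$ only when $j$ is queried, I would identify an index $j^\star$ that "witnesses" infeasibility in the sense that revealing $\tilde c_{j^\star}$ to its realized value $c_{j^\star}$ forces $(y^\top A)_{j^\star} < c_{j^\star} = \underline c_{j^\star}$; such a $j^\star$ must exist because property~2 of Definition~\ref{def:witness} gives $y^\top b \le (1-\epsilon')\tilde\mu < \tilde\mu$, so $y$ cannot be feasible for the omniscient dual, and weak duality then supplies a violated constraint. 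The key probabilistic observation is that in each iteration in which $j^\star$ has not yet been queried, the optimistic LP optimum $x$ satisfies $x_{j^\star} \ge$ (something bounded below): indeed the optimistic LP still "sees" $\overline c_{j^\star} = c_{j^\star}^+$, and because $y$ is still feasible for that optimistic dual its value is $\le (1-\epsilon')\tilde\mu$, forcing the optimistic primal optimum to be $\le (1-\epsilon')\tilde\mu$; but this optimum is at least $\tilde\mu$ minus the loss from pessimism on the unqueried coordinates, and that loss is at most $\Delta_c$ times the fractional mass placed on still-unqueried witnessing coordinates, giving $x_{j^\star} \ge \epsilon'\tilde\mu/\Delta_c \ge \epsilon'/\Delta_c$. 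Hence in each such iteration $j^\star$ is queried with probability at least $\epsilon' p /\Delta_c$ (probability $x_{j^\star}$ of querying, times probability $\ge p$ that the realized value is $c_{j^\star}^+$; if it is not $c_{j^\star}^+$ one reselects a witnessing index, only helping). Therefore the probability that $y$ is still feasible after $T$ iterations is at most $\bigl(1 - \epsilon' p/\Delta_c\bigr)^{T} \le \exp(-\epsilon' p T/\Delta_c)$.

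I would then union-bound: the probability that some $y \in W$ survives is at most $M^{\tilde\mu}\exp(-\epsilon' p T/\Delta_c)$. Because $\tilde\mu$ is random, I would instead phrase the whole argument conditionally on the realization of $\tilde c$ — for each fixed realization the bound reads $M^{\tilde\mu}\exp(-\epsilon' p T/\Delta_c)$ with $\tilde\mu \ge 1$ — but this is not automatically $\le \delta$ since $\tilde\mu$ can be large; the cleanest fix is to absorb the $M^{\tilde\mu}$ factor by choosing $T$ so that $\exp(-\epsilon' p T/\Delta_c) \le \delta/M^{\tilde\mu}$, i.e. $T \ge \frac{\Delta_c}{\epsilon' p}\bigl(\tilde\mu \log M + \log(1/\delta)\bigr)$; and then observe that in the intended applications $\log M = O(1)$ so, after rescaling constants, the stated bound $T \ge \frac{\Delta_c}{\epsilon' p}\log(M/\delta)$ is the right form once one notes the surviving-witness event for value $\mu$ can itself only occur if $\tilde\mu \ge \mu$, which happens with geometrically small probability in $\mu$, so summing over $\mu$ recovers a clean $1-\delta$ bound. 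The main obstacle I anticipate is precisely this handling of the random $\tilde\mu$ together with the $M^{\mu}$-sized cover: making the union bound simultaneously over the random value $\tilde\mu$ and over $W$ requires either a stratification over $\mu \ge 1$ with a decreasing tail for $\Pr[\tilde\mu \ge \mu]$ or a direct argument that only the cover for the realized $\tilde\mu$ matters; getting the exponents to line up so that the geometric factor $M^{-\mu}$ in iteration count beats the $M^{\mu}$ cover size (plus whatever tail bound on $\tilde\mu$ is available) is the delicate bookkeeping. The second, more routine obstacle is the lower bound $x_{j^\star} \ge \epsilon'/\Delta_c$: one must argue carefully that the optimistic primal optimum, which is squeezed below $(1-\epsilon')\tilde\mu$ by $y$'s feasibility there, still differs from $\tilde\mu$ only through a $\le \Delta_c \sum x_j$ pessimism-loss supported on unqueried witnessing coordinates — this needs the witness cover's property~1 applied to the optimistic vector $\overline c$ rather than $\underline c$, and a comparison of the omniscient and optimistic LPs, but it should go through with no surprises.
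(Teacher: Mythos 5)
Your overall skeleton matches the paper's: fix a witness cover $W_{\tilde\mu}$ for the realized omniscient value $\tilde\mu \geq 1$, bound the probability that a fixed $y \in W_{\tilde\mu}$ survives $T$ rounds, and union-bound over the cover. But there is a genuine gap in the per-round infeasibility estimate, and the obstacle you flag at the end is real, not a bookkeeping nuisance. You isolate a single witnessing index $j^\star$ and claim $x_{j^\star} \ge \epsilon'/\Delta_c$, which caps the per-round "kill" probability at roughly $p\,\epsilon'/\Delta_c$ and hence the $T$-round survival probability at $\exp(-\epsilon' p T/\Delta_c)$; the exponent has no factor of $\tilde\mu$. (As an aside, the intermediate inequality $x_{j^\star} \ge \epsilon'\tilde\mu/\Delta_c$ cannot be right since $x_{j^\star} \le 1$.) Against the cover size $M^{\tilde\mu}$, this gives a union bound of $M^{\tilde\mu}\exp(-\epsilon' p T/\Delta_c)$, which for the stated $T$ does \emph{not} collapse to $\delta$ when $\tilde\mu$ is large — precisely your worry.

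The paper closes this gap not by any tail bound on $\tilde\mu$ (none is available in general — $\tilde\mu$ can be deterministically large, e.g.\ when $c_j^-$ are large) but by getting $\tilde\mu$ into the per-round exponent. Inequality \eqref{eq:c-x}–\eqref{eq:delta-mu} shows that the \emph{total} fractional mass that the optimistic LP solution puts on indices $j$ where $\overline{c}_j > (y^\top A)_j$ is at least $\epsilon'\mu/\Delta_c$. Since $y$ dies as soon as \emph{any one} of those indices both gets queried and realizes to $c_j^+$, the per-round survival probability is $\prod_j(1-px_j) \le \exp(-p\sum_j x_j) \le \exp(-p\epsilon'\mu/\Delta_c)$, with the factor of $\mu$. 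Then $M^{\tilde\mu}\exp(-\epsilon' p\tilde\mu T/\Delta_c) = \exp\bigl(\tilde\mu(\log M - \epsilon' p T/\Delta_c)\bigr) \le \exp(\tilde\mu\log\delta) \le \delta$ for every $\tilde\mu \ge 1$ under the stated $T$ — the $\tilde\mu$ factors balance exactly, so no stratification over $\mu$, no geometric tail, and no rescaling of constants is needed. In short: replace "one violating index has mass $\ge \epsilon'/\Delta_c$" with "all violating indices together carry mass $\ge \epsilon'\tilde\mu/\Delta_c$," and the stated bound on $T$ falls out directly; the patches you sketch (tail on $\Pr[\tilde\mu \ge \mu]$, changing $T$ with $\mu$) are dead ends. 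A secondary caution: your "if $j^\star$ doesn't realize to $c_{j^\star}^+$, reselect another witness" step is an adaptive stopping argument that would need care to make rigorous; the paper's formulation, which works with the full multiset of potentially violating indices and a simple product bound, avoids it entirely.
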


For the non-adaptive algorithm (Algorithm~\ref{alg:nonadaptive}), by conducting a similar analysis with a case analysis, we obtain the required number of iterations with a provable approximation ratio.

\begin{theorem}
\label{thm:nonadaptive}
Under the same assumption as Theorem~\ref{thm:adaptive},
by taking $T$\! as \eqref{eq:iterations},
the pessimistic LP at Line~\ref{line:6} of Algorithm~\ref{alg:nonadaptive} has a $(1 - \epsilon)/2$-approximate solution with probability at least $1 - \delta$.
\end{theorem}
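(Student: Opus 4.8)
\textbf{Proof proposal for Theorem~\ref{thm:nonadaptive}.}

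The plan is to reduce the non-adaptive case to the adaptive analysis of Theorem~\ref{thm:adaptive} by a careful accounting of the two sources of randomness. In Algorithm~\ref{alg:nonadaptive}, the effect of Line~\ref{line:3'} is exactly the same as Line~\ref{line:3} of Algorithm~\ref{alg:adaptive} from the point of view of the constraints in the dual pessimistic LP \eqref{eq:packingdual}: an element $j$ gets ``pinned'' to a known value with probability $x_j$ in each iteration, and once pinned its contribution to $\underline{c}_j$ stops fluctuating. The crucial difference is that, in the non-adaptive strategy, a pinned element is temporarily treated as $c_j^-$ rather than as its realized value $c_j$. So I would first run the witness-cover argument from the proof of Theorem~\ref{thm:adaptive} verbatim, but with respect to the ``rounded-down'' vector $\underline c'$ in which every element selected at Line~\ref{line:3'} is set to $c_j^-$ (and every other element to $c_j^-$ as well), concluding that after $T$ iterations as in \eqref{eq:iterations}, with probability at least $1 - \delta$, every $y$ in the witness cover for $\tilde\mu$ is infeasible for the dual LP with right-hand side $\underline c'$. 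By the witness-cover property (Definition~\ref{def:witness}.1), this means the pessimistic LP based on $\underline c'$ has optimal value at least $(1-\epsilon)\tilde\mu$.

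The second step is to pass from $\underline c'$ back to the true pessimistic vector $\underline c$ obtained after Line~\ref{line:6}, where the pinned elements are revealed and take their realized values $c_j \geq c_j^-$. Since $\underline c \geq \underline c'$ entrywise, the pessimistic LP based on $\underline c$ only has \emph{larger} optimal value, so that bound is preserved — but this is not quite what we want, because we must also relate it to the omniscient value and produce the factor $1/2$. Here I expect the genuinely new ingredient: a case analysis on each element $j$ comparing the ``optimistic'' weight $c_j^+$ used when solving the optimistic LP against the value $c_j^-$ used in the pinned-down bookkeeping. The point is that an element is either (i) eventually revealed, in which case the true $c_j$ is available and the pessimistic LP sees it, or (ii) never selected in $T$ iterations, in which case we only know $c_j^- \le c_j$. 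The factor-of-two loss should come from arguing that the omniscient LP value $\tilde\mu$ is at most the optimistic LP value plus the pessimistic LP value (each unrevealed element contributes $c_j^+$ to the former and $c_j^-$ to the latter, and $c_j \le c_j^+ \le c_j^- + (c_j^+ - c_j^-)$, but more usefully $c_j^+ + c_j^- \ge c_j^+ \ge c_j$, so any feasible $x$ satisfies $\underline c^\top x + \overline c^\top x \ge \tilde c^\top x$ after all values are fixed), and that once the dual witnesses are exhausted the optimistic contribution over unrevealed elements is itself controlled.

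Concretely, I would structure the argument as: (a) show that after $T$ iterations the dual pessimistic LP (with right-hand side $\underline c'$, hence also with $\underline c$) has no feasible point of value below $(1-\epsilon)\tilde\mu$ outside the witness cover, using the same union-bound-over-$W$ calculation as Theorem~\ref{thm:adaptive} with the $1/\epsilon' p$ rate and the $M^\mu$ size bound summed over integer $\mu \ge 1$ (the geometric series converges because of the choice of $T$); (b) observe that each $y \in W$ itself becomes infeasible with the claimed probability, so with probability $\ge 1-\delta$ the pessimistic LP value is $\ge (1-\epsilon)\tilde\mu'$ for the appropriate truncated target; (c) invoke LP strong duality to conclude the primal pessimistic LP has a fractional optimum of that value; and (d) combine with an LP-relative $\alpha$-approximation at Line~\ref{line:6} — but since Theorem~\ref{thm:nonadaptive} is stated only about the LP value, step (d) is not needed here and the $1/2$ must already appear in the LP comparison. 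The main obstacle I anticipate is making step (b)--(c) honest: because the pinned-down value $c_j^-$ can be strictly below both $c_j$ and $c_j^+$, the witness cover built for $\tilde\mu$ (which depends on the realized values) must be shown to still ``cover'' the dual of the rounded-down LP, and the cleanest way to do that is to note that the first witness-cover property is quantified over \emph{all} $c \in \mathbb{Z}_+^m$, so it applies to $\underline c'$ directly; the factor $1/2$ then has to be extracted from the inequality $\tilde\mu \le (\text{optimistic LP value on unrevealed part}) + (\text{pessimistic LP value})$ together with the fact that, conditioned on the good event, the optimistic LP's optimum restricted to never-selected elements cannot exceed the pessimistic optimum either — this symmetry is what yields the halving, and verifying it rigorously is where most of the work lies.
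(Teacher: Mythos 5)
There is a genuine gap, and it is precisely where you acknowledge it: extracting the factor $1/2$. The paper's proof introduces a crisp pivot quantity $\tilde\mu'$, defined as the \emph{minimum} optimistic-LP value seen over the $T$ iterations (equivalently, the value at the $T$-th iteration, since it is monotone nonincreasing), and then performs a clean case split on $\tilde\mu'$. If $\tilde\mu' \geq \tilde\mu/2$, the adaptive-style union bound (Claim~\ref{lem:feasibility} applied with $\mu = \tilde\mu'$, which is legitimate because $\overline\mu \geq \tilde\mu'$ holds at every iteration even though $\overline\mu \geq \tilde\mu$ may fail in the non-adaptive setting) gives pessimistic LP value $\geq (1-\epsilon)\tilde\mu' \geq (1-\epsilon)\tilde\mu/2$ whp. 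If $\tilde\mu' < \tilde\mu/2$, the bound is \emph{deterministic}: take an omniscient optimum $x^*$, split into revealed ($R$) and never-revealed ($N$) coordinates; $(0, x_N^*)$ is feasible for the $T$-th optimistic LP so $c_N^{+\top} x_N^* \leq \tilde\mu' < \tilde\mu/2$, hence $\underline c^\top x^* \geq c_R^\top x_R^* = \tilde\mu - \tilde c_N^\top x_N^* \geq \tilde\mu - c_N^{+\top} x_N^* > \tilde\mu/2$. Your proposal gestures toward the correct ingredient ($\tilde\mu \leq$ optimistic-on-$N$ $+$ pessimistic), but then tries to close the argument by claiming the optimistic contribution on $N$ cannot exceed the pessimistic optimum, which is not true in general and is not what the paper uses. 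The missing idea is that the optimistic-on-$N$ contribution is bounded by $\tilde\mu'$ (not by the pessimistic optimum), and only the case analysis joins this to the witness-cover bound.

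A secondary confusion is in your step (a): the vector $\underline c'$ you describe (everything set to $c_j^-$) is just $c^-$ itself; feasibility against $c^-$ is the \emph{weakest} possible constraint and no witness in $W_{\tilde\mu}$ need ever become infeasible against it, so ``running the adaptive argument verbatim against $\underline c'$'' does not conclude anything. What the adaptive argument actually uses in each iteration is the event ``$j$ is selected and its realization is $c_j^+$'', which is well-defined and has probability $\geq px_j$ irrespective of whether the algorithm gets to see $c_j$ during the loop; the infeasibility is measured against the final $\underline c$ (after Line~\ref{line:6} reveals the selected coordinates). Making this precise is why the paper applies Claim~\ref{lem:feasibility} directly with $\tilde\mu'$ rather than constructing an auxiliary vector.
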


These theorems show that if there exists a small witness cover (for each possible $\mu$),
Algorithms~\ref{alg:adaptive} and \ref{alg:nonadaptive} will find good solutions in a reasonable number of iterations.
It is worth emphasizing that we only have to prove the existence of such a witness cover, i.e., we do not have to construct it algorithmically.
We discuss how to prove the existence of such witness covers
(i.e., how to construct them theoretically)
in general and in each specific application,
in Sections~\ref{sec:witness} and \ref{sec:applications}, respectively.

\subsection{Proofs of Main Theorems}\label{sec:proofs}
\subsubsection*{Proof of Theorem~\ref{thm:adaptive}}
Let $\tilde\mu$ be the optimal value of the omniscient LP.
If $\tilde\mu = 0$ then the statement obviously holds (with probability 1). 
Thus we restrict ourselves to the case when $\tilde\mu > 0$.
Note that $\tilde \mu > 0$ implies $\tilde\mu \geq 1$ as follows.
If $\tilde \mu > 0$ then $\tilde c_j = c_j \geq 1$ for some $j$, and by Assumption~\ref{asmp:constraint}.b, the $j$-th unit vector $\chi_j \in \{0, 1\}^m$ is feasible
(i.e., $A \chi_j \leq b$); therefore $\tilde\mu \ge c^\top \chi_j = c_j \geq 1$.

For each $\mu \geq 1$, fix an $(\epsilon, \epsilon')$-witness cover $W_\mu$ of size $|W_\mu| \le M^\mu$.
We first evaluate the probability that each $y \in W_{\tilde\mu}$ is feasible after $T$ iterations.
Since some $\tilde c_j$ is newly revealed,
some constraints may be violated (i.e., $(y^\top A)_j < c_j$ may happen).
Once $y$ has become infeasible, it never returns to feasible due to the monotonicity of $\underline{c}$ throughout Algorithm~\ref{alg:adaptive}. 
Therefore, $y$ is feasible after $T$ iterations only if
$y$ is feasible at every iteration step.

Fix $t = 1, 2, \ldots, T$, and we evaluate the probability that
a vector $y$ in each witness cover that is feasible at the beginning of the $t$-th step
remains feasible at the end of the step.
Let $\overline{c}, \underline{c} \in \mathbb{Z}_+^m$ be
the optimistic and pessimistic vectors, respectively, at Line~\ref{line:2} in the $t$-th step,
and $\overline{\mu}, \underline{\mu} \in \mathbb{R}_+^m$
the optimal values of the corresponding LPs.
Note that $\overline{\mu}$ and $\underline{\mu}$ are respectively upper and lower bounds on $\tilde\mu$ at that time.

\begin{Claim}
\label{lem:feasibility}
For every $\mu \in [\underline{\mu}, \overline{\mu}]$ and each $y \in W_{\mu}$ with $y^\top A \geq \underline{c}^\top$\!,
the probability that $y$ is feasible after Line~\ref{line:3} is at most $\exp \left( - \epsilon' p\mu/\Delta_c \right)$.
\end{Claim}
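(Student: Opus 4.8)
The plan is to bound, for a single iteration $t$, the probability that a fixed dual vector $y \in W_\mu$ which is currently feasible (i.e.\ $y^\top A \ge \underline c^\top$ at the start of the step) survives the queries conducted at Line~\ref{line:3}. The key observation is that $y$ becomes infeasible as soon as some newly queried coordinate $j$ has realized value $c_j$ larger than $(y^\top A)_j$. Since $y$ is currently feasible, $(y^\top A)_j \ge \underline c_j$, and for an \emph{unqueried} $j$ we have $\underline c_j = c_j^-$; so if $(y^\top A)_j < c_j^+$ — call such a $j$ ``live'' — then querying $j$ makes $y$ infeasible precisely when the realized value exceeds $(y^\top A)_j$, which includes the event $\tilde c_j = c_j^+$ (probability $\ge p$ by assumption). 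Thus for each live coordinate $j$, the query at Line~\ref{line:3} (conducted with probability $x_j$) kills $y$ with probability at least $p\,x_j$, and these events are independent across $j$ (independence of the $\tilde c_j$ and of the independent coin flips in Line~\ref{line:3}). Hence
\begin{align}
\Pr[y \text{ survives step } t] \le \prod_{j \text{ live}} (1 - p\, x_j) \le \exp\Bigl(-p \sum_{j \text{ live}} x_j\Bigr).
\end{align}

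The next step is to lower-bound $\sum_{j\text{ live}} x_j$ in terms of $\mu$. Here $x$ is the optimal solution to the optimistic LP, with value $\overline\mu = \overline c^\top x$, and $y$ has dual objective $y^\top b \le (1-\epsilon')\mu$ by the second witness-cover property. I would compare $\overline c^\top x$ against what $y$ can ``pay for.'' For a non-live unqueried coordinate we have $(y^\top A)_j \ge c_j^+ = \overline c_j$; for a queried coordinate $(y^\top A)_j \ge \underline c_j = c_j = \overline c_j$; only on live coordinates can $\overline c_j$ exceed $(y^\top A)_j$, and there the gap is at most $\overline c_j - (y^\top A)_j \le c_j^+ - c_j^- \le \Delta_c$ (using $(y^\top A)_j \ge \underline c_j \ge c_j^-$). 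Therefore
\begin{align}
\overline\mu = \overline c^\top x \le (y^\top A) x + \Delta_c \sum_{j\text{ live}} x_j \le y^\top b + \Delta_c \sum_{j\text{ live}} x_j \le (1-\epsilon')\mu + \Delta_c \sum_{j\text{ live}} x_j,
\end{align}
where the middle inequality uses $Ax \le b$ and $y \ge 0$. Since $\mu \le \overline\mu$, rearranging gives $\sum_{j\text{ live}} x_j \ge \epsilon' \mu / \Delta_c$. Plugging this into the exponential bound yields $\Pr[y\text{ survives step }t] \le \exp(-\epsilon' p\mu/\Delta_c)$, as claimed.

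The step I expect to be the main obstacle — or at least the one requiring the most care — is the comparison $\overline c^\top x \le (y^\top A)x + \Delta_c \sum_{j\text{ live}} x_j$: one must verify the coordinatewise case split cleanly, in particular that on \emph{every} non-live coordinate (queried or not) we genuinely have $(y^\top A)_j \ge \overline c_j$, and that $x_j \in [0,1]$ so the per-coordinate slack is at most $\Delta_c$ rather than $\Delta_c x_j$ being replaced by something larger; Assumption~\ref{asmp:constraint}.c guarantees $x \le 1$, which is what makes the $\Delta_c$ bound uniform. A secondary subtlety is being precise about the probability space: the ``live'' set is determined by the state at the start of step $t$ (which conditions on the history), while the killing events depend only on the fresh randomness of step $t$ (the $x_j$-coins and the as-yet-unrevealed $\tilde c_j$), so the independence and the factor $p$ are legitimate conditionally on that history; I would state the bound as a conditional probability given the history and note that it holds uniformly, which is all the union-bound argument in the surrounding proof needs.
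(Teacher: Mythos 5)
Your proof is correct and follows essentially the same route as the paper's: bound the survival probability by $\prod_{j\text{ live}}(1-px_j)$, then lower-bound $\sum_{j\text{ live}}x_j$ by $\epsilon'\mu/\Delta_c$ via weak duality ($\overline{c}^\top x\le y^\top b+(\overline{c}^\top-y^\top A)x$) together with $\overline{c}^\top x=\overline{\mu}\ge\mu$ and $y^\top b\le(1-\epsilon')\mu$. Your explicit ``live'' set $\{\,j\text{ unqueried}:(y^\top A)_j<c_j^+\,\}$ is in fact a cleaner formulation than the paper's ``violated'' (defined via the realized $c_j$, but used in \eqref{eq:delta-mu} as though it meant ``possibly violated''); the argument and constants are otherwise identical.
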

\begin{proof}
Since $y$ is feasible at the beginning of the step,
$y$ is feasible after Line~\ref{line:3} only if no possibly violated constraint is revealed to be $c_j^+$.
We can evaluate the number of possibly violated constraints at this step using the following inequality:
\begin{align}\label{eq:c-x}
	\overline{c}^\top x \le \overline{c}^\top x + y^\top (b - A x) = y^\top b + (\overline{c}^\top - y^\top A) x,
\end{align}
where $x \in [0, 1]^m$ is the optimal solution to the optimistic LP obtained in Line~\ref{line:2}.
Since the optimistic vector $\overline{c}$ dominates the actual vector $\tilde c$
(irrespective of which values are realized),
we have $\overline{c}^\top x \ge \tilde\mu$.
Since $y \in W_{\mu}$, we have $y^\top b \le (1 - \epsilon') \mu$.
Therefore, we derive from \eqref{eq:c-x}
\begin{align}\label{eq:delta-mu}
	\epsilon'\mu \le (\overline{c}^\top - y^\top A)x 
        \le \sum_{j\colon \text{violated}} (c_j^+ - \underline{c}_j) x_j 
        \le \Delta_c \sum_{j\colon \text{violated}} x_j,
\end{align}
where we say that $j$ is {\em violated} if $(y^\top A)_j < c_j$ for the realized value $c_j$ of $\tilde c_j$,
and note that $\overline{c}_j \leq c_j^+$, $(y^\top A)_j \geq \underline{c}_j$, and $c_j \geq c_j^+ - \Delta_c$ for every $j$.
Since the left-hand side of \eqref{eq:delta-mu} is positive, there must exist possibly violated constraints in the support of $x$, and if one of them, say $\tilde c_j$, is revealed (with probability $x_j$) as $c_j^+$ (with probability at least $p$), then $y$ becomes infeasible.
Then the probability that $y$ is still feasible after this step is at most
\begin{align}
        \quad \prod_{j\colon \text{violated}} (1 - p x_j) &\le \exp \left( -p \sum_{j\colon \text{violated}} x_j \right) 
        \le \exp \left( \frac{- p \epsilon'\mu}{\Delta_c} \right). \quad \qedhere
\end{align}
\end{proof}

By applying Claim~\ref{lem:feasibility} to $\tilde\mu$ (the omniscient LP-optimal value) $T$ times,
we obtain that the probability that each $y \in W_{\tilde\mu}$ is feasible after $T$ iterations
is at most $\exp \left( - \epsilon' p\tilde\mu T/\Delta_c \right)$.
By the union bound, the probability that $W_{\tilde\mu}$ has at least one feasible solution to the dual pessimistic LP \eqref{eq:packingdual} after $T$ iterations is at most $|W_{\tilde\mu}| \exp \left( - \epsilon' p\tilde\mu T / \Delta_c \right)$,
which is at most $\exp \left( \tilde\mu \log M - \epsilon' p\tilde\mu T / \Delta_c \right)$. 
By taking $T \ge \Delta_c \log (M/\delta) / \epsilon' p$,
the latter value is bounded by $\exp(\tilde\mu \log\delta) \leq \delta$
(recall that $\tilde\mu \geq 1$ and $0 < \delta < 1$).
By the definition of witness cover and strong duality,
we conclude that the optimal value of the pessimistic LP at Line~\ref{line:5} of Algorithm~\ref{alg:adaptive} is at least $(1 - \epsilon) \tilde\mu$ with probability at least $1 - \delta$.

\subsubsection*{Proof of Theorem~\ref{thm:nonadaptive}}

The following proof is a simple extension of Theorem~5.1 in Assadi et al.~\cite{assadi2016stochastic} for the stochastic matching problem.

Let $\tilde\mu$ be the optimal value of the omniscient LP,
and we assume $\tilde\mu \geq 1$ as in the proof of Theorem~\ref{thm:adaptive}.
In the above analysis of Algorithm~\ref{alg:adaptive}, it is ensured that there exists a solution $x$ with $\overline{c}^\top x \ge \tilde\mu$, which ensured that the last pessimistic LP in Algorithm~\ref{alg:adaptive} has an optimal value of at least $(1 - \epsilon) \tilde\mu$. 
However, in the non-adaptive case, we may not be able to find such a solution because each $\tilde c_j$ is not revealed but is rounded-down.

To overcome this issue, we define $\tilde\mu' \in \mathbb{R}_+$ as the minimum objective value obtained at Line~\ref{line:2'} of Algorithm~\ref{alg:nonadaptive}.
Note that, since the optimal value of the optimistic LP solved at Line~\ref{line:2'} is monotonically non-increasing, $\tilde\mu'$ is the objective value obtained at the $T$-th step.
By applying Claim~\ref{lem:feasibility} to $\tilde\mu'$ (instead of $\tilde\mu$) $T$ times, we obtain the following claim.
\begin{Claim}
\label{lem:largecase}
By taking $T \ge \Delta_c \log (M/\delta) / \epsilon' p$ in Algorithm~\ref{alg:nonadaptive}, the optimal value of the pessimistic LP at Line~\ref{line:6} is at least $(1 - \epsilon) \tilde\mu'$ with probability at least $1 - \delta$. 
\end{Claim}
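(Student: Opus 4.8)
The plan is to mirror the three closing paragraphs of the proof of Theorem~\ref{thm:adaptive}, with $\tilde\mu'$ playing the role previously played by the omniscient optimum $\tilde\mu$. First I would dispose of the degenerate case: if $\tilde\mu' = 0$ then the pessimistic LP at Line~\ref{line:6} trivially has optimal value at least $0 = (1-\epsilon)\tilde\mu'$, and if $\tilde\mu' > 0$ then, by the same reasoning as in Theorem~\ref{thm:adaptive} (apply Assumption~\ref{asmp:constraint}.b to a coordinate $j$ carrying a positive entry in the step-$T$ optimistic vector, which is then at least $1$), we obtain $\tilde\mu' \ge 1$. This is exactly what is needed to invoke an $(\epsilon,\epsilon')$-witness cover $W_{\tilde\mu'}$ of size at most $M^{\tilde\mu'}$.

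Next I would record the structural fact that makes $\tilde\mu'$ a legitimate substitute for $\tilde\mu$: since the optimistic LP solved at Line~\ref{line:2'} has a monotonically non-increasing optimal value over the iterations and $\tilde\mu'$ is its value at the $T$-th step, we have $\overline{c}^\top x = \overline{\mu}_t \ge \tilde\mu'$ for the optimal solution $x$ of the optimistic LP at every step $t$, where $\overline{\mu}_t$ denotes that step's optimistic optimum. This is precisely the inequality that ``$\overline{c}^\top x \ge \tilde\mu$'' supplied in the proof of Claim~\ref{lem:feasibility}, so Claim~\ref{lem:feasibility} may be applied with $\mu = \tilde\mu'$ at each of the $T$ iterations of Algorithm~\ref{alg:nonadaptive}. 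The one point that needs re-reading is the meaning of ``a possibly violated constraint is revealed to be $c_j^+$'' in the non-adaptive setting. Fix $y \in W_{\tilde\mu'}$; we may assume $(y^\top A)_j \ge c_j^-$ for every $j$, since otherwise $y$ is infeasible for the pessimistic LP at Line~\ref{line:6} no matter what. Then any coordinate $j$ with $\overline{c}_j > (y^\top A)_j$ at some step has not yet been rounded down (otherwise $\overline{c}_j = c_j^- \le (y^\top A)_j$), so $\overline{c}_j = c_j^+$ there; if such a $j$ is rounded down at Line~\ref{line:3'} (probability $x_j$) and, upon the reveal after the loop, its realized value turns out to be $c_j^+$ (probability at least $p$), then the constraint $(y^\top A)_j \ge c_j$ of the pessimistic LP at Line~\ref{line:6} is violated. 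With this reading the estimates of Claim~\ref{lem:feasibility} carry over verbatim: from $\overline{c}^\top x = \overline{\mu}_t \ge \tilde\mu'$, $y^\top b \le (1-\epsilon')\tilde\mu'$ and $y^\top A x \le y^\top b$ one gets $\epsilon'\tilde\mu' \le (\overline{c}^\top - y^\top A)x \le \Delta_c \sum_j x_j$ summed over the possibly-violated $j$, and then the probability that $y$ survives the step is at most $\prod_j (1 - p x_j) \le \exp(-p\epsilon'\tilde\mu'/\Delta_c)$.

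Given this, the conclusion follows as in Theorem~\ref{thm:adaptive}: applying Claim~\ref{lem:feasibility} $T$ times shows that each $y \in W_{\tilde\mu'}$ is feasible for the dual pessimistic LP \eqref{eq:packingdual} at Line~\ref{line:6} with probability at most $\exp(-\epsilon' p \tilde\mu' T/\Delta_c)$; a union bound gives $|W_{\tilde\mu'}| \exp(-\epsilon' p \tilde\mu' T/\Delta_c) \le \exp\!\big(\tilde\mu'(\log M - \epsilon' p T/\Delta_c)\big)$, and substituting $T \ge \Delta_c \log(M/\delta)/\epsilon' p$ together with $\tilde\mu' \ge 1$ and $0 < \delta < 1$ bounds this by $\exp(\tilde\mu' \log \delta) \le \delta$. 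On the complementary event, the first condition in Definition~\ref{def:witness} (taken with $\mu = \tilde\mu'$ and the final pessimistic objective vector $\underline{c}$) gives that no nonnegative $y$ with $y^\top b \le (1-\epsilon)\tilde\mu'$ satisfies $y^\top A \ge \underline{c}^\top$, so the dual pessimistic LP has optimal value at least $(1-\epsilon)\tilde\mu'$; since the primal pessimistic LP is feasible ($x = 0$) and bounded, LP strong duality gives that its optimal value is at least $(1-\epsilon)\tilde\mu'$ with probability at least $1-\delta$.

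I expect the main obstacle to be purely expository: making precise that the round-down decision at Line~\ref{line:3'} is independent of the realized values of the not-yet-rounded-down coordinates, so that conditioning on the history of the first $t-1$ steps leaves both those coins and those realizations fresh for the union-over-steps argument; and being comfortable using $\tilde\mu'$, a quantity determined only at the end of the run, as the parameter $\mu$ in a per-step bound. The latter is legitimate precisely because $\tilde\mu' \le \overline{\mu}_t$ holds deterministically for every $t$ — the same device by which $\tilde\mu \in [\underline{\mu}_t, \overline{\mu}_t]$ is used in Theorem~\ref{thm:adaptive} — so $\tilde\mu'$ is a valid value to plug into Claim~\ref{lem:feasibility} at each step.
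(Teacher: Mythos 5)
Your proof is correct and follows exactly the route the paper indicates: its entire proof of this claim is the one sentence preceding the statement, which says to apply Claim~\ref{lem:feasibility} with $\tilde\mu'$ in place of $\tilde\mu$ for $T$ iterations and then conclude as in the proof of Theorem~\ref{thm:adaptive}. You have fleshed out precisely that — including the non-adaptive reading of ``possibly violated constraint revealed to be $c_j^+$'' and the deterministic bound $\tilde\mu' \le \overline{\mu}_t$ that legitimizes plugging $\tilde\mu'$ into the per-step estimate — so the reasoning agrees with the paper's intended argument.
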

If $\tilde\mu' \ge \tilde\mu/2$, we can immediately prove the theorem.
Thus, we consider the case $\tilde\mu' < \tilde\mu/2$, obtaining the following claim.
\begin{Claim}
\label{lem:smallcase}
If $\tilde\mu' < \tilde\mu/2$, the optimal value of the pessimistic LP at Line~\ref{line:6} of Algorithm~\ref{alg:nonadaptive} is at least $\tilde\mu/2$.
\end{Claim}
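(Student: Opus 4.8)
The plan is to show that when $\tilde\mu' < \tilde\mu/2$, the rounding-down step has actually left enough of the objective intact that the pessimistic LP recovers at least half of $\tilde\mu$ deterministically (no probabilistic argument needed in this case). The key observation is that $\tilde\mu'$, being the optimistic LP-optimal value at the last iteration, is an upper bound on the contribution of all the elements that were \emph{not} rounded down — indeed, after the iterations, for every element $j$ not rounded down, the optimistic and pessimistic LPs agree on $\tilde c_j$ (both equal the realized $c_j$ if queried, and for un-rounded un-queried elements the optimistic value $c_j^+$ still dominates the pessimistic $c_j^-$, but more to the point we can bound the "optimistic-only" mass by $\tilde\mu'$).

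First I would fix an omniscient optimal fractional solution $x^*$, so $\tilde c^\top x^* = \tilde\mu$. I would split the ground set into $S$ (elements rounded down at Line~\ref{line:3'} during the $T$ iterations) and its complement $\bar S$. Then $\tilde\mu = \tilde c^\top x^* = \sum_{j \in S} \tilde c_j x^*_j + \sum_{j \in \bar S} \tilde c_j x^*_j$. For the second sum, I would argue it is at most $\tilde\mu'$: the vector $x^*$ restricted to $\bar S$ (zero on $S$) is feasible for the optimistic LP at the $T$-th step, and on $\bar S$ we have $\tilde c_j \le \overline c_j$ where $\overline c$ is the optimistic vector at that step, so $\sum_{j \in \bar S} \tilde c_j x^*_j \le \overline c^\top x^* \le \tilde\mu'$ — wait, one has to be careful since $x^*$ may not be supported only on $\bar S$; the cleaner route is to note that the restriction of $x^*$ to $\bar S$ is optimistic-feasible and its optimistic objective is $\sum_{j\in\bar S}\overline c_j x^*_j \ge \sum_{j\in\bar S}\tilde c_j x^*_j$, hence bounded by $\tilde\mu'$. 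Using $\tilde\mu' < \tilde\mu/2$ gives $\sum_{j \in S} \tilde c_j x^*_j > \tilde\mu/2$.

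Next I would observe that for every $j \in S$, element $j$ was revealed by a query at Line~5 of Algorithm~\ref{alg:nonadaptive}, so the pessimistic vector $\underline c$ at Line~\ref{line:6} satisfies $\underline c_j = c_j = \tilde c_j$ on all of $S$. Therefore the vector obtained from $x^*$ by zeroing out all coordinates outside $S$ is feasible for the pessimistic LP (feasibility of $Ax \le b$ is inherited, since we only decreased coordinates and $A \ge 0$), and its pessimistic objective value is $\sum_{j \in S} \underline c_j x^*_j = \sum_{j \in S} \tilde c_j x^*_j > \tilde\mu/2$. Hence the pessimistic LP-optimal value at Line~\ref{line:6} is at least $\tilde\mu/2$, which is the claim.

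The main obstacle, and the step deserving the most care, is the bound $\sum_{j\in\bar S}\tilde c_j x^*_j \le \tilde\mu'$ — precisely, justifying that one may pass to the restriction of $x^*$ to $\bar S$ and compare against $\tilde\mu'$ (the \emph{final} optimistic optimum), using monotonicity of the optimistic optimal value across iterations and the domination $\tilde c \le \overline c$. One must make sure that "$\bar S$ = not rounded down" is the right set and that those elements are indeed queried in the last step of Algorithm~\ref{alg:nonadaptive}; the definitions in Section~\ref{sec:two_strategies} give exactly this. Everything else is a direct feasibility-and-summation argument with no probability involved, so combining this Claim with Claim~\ref{lem:largecase} (handling $\tilde\mu'\ge\tilde\mu/2$) and an LP-relative $\alpha$-approximation then yields Theorem~\ref{thm:nonadaptive}.
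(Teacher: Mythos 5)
Your proposal is correct and follows essentially the same route as the paper: fix an omniscient optimal $x^*$, split coordinates into the rounded-down set and its complement, bound the complement's contribution by $\tilde\mu'$ via the feasibility of $x^*$ restricted to that complement for the $T$-th optimistic LP, and observe that the rounded-down coordinates are fully revealed so the restriction of $x^*$ to them is pessimistic-feasible with value $>\tilde\mu/2$. The paper phrases the last step by lower-bounding $\underline{c}^\top x^*$ directly rather than zeroing out coordinates, but the two are equivalent.
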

\begin{proof}
We use the subscripts $R$ and $N$ to denote the revealed and unrevealed entries in the primal vector, respectively,
i.e., $\tilde c_R = c_R$ has been realized and the rest $\tilde c_N$ has not been revealed.
Let $x^* \in \mathbb{R}_+^m$ be an optimal solution to the (primal) omniscient LP
(which is a random variable depending on the realization of $\tilde c_N$). We then have
\begin{align}
	c_R^\top x_R^* + \tilde c_N^{\top} x_N^* = \tilde\mu,
\end{align}
Since $(0, x_N^*) \in \mathbb{R}_+^m$ is a feasible solution to the optimistic LP at the last iteration, we have
\begin{align}
	c_N^{+\top} x_N^* \le \tilde\mu' < \tilde\mu / 2.
\end{align}
Therefore, the objective value for $x^*$ in the pessimistic LP at Line~\ref{line:6} is bounded by
\begin{align}
	\underline{c}^\top x^* \ge c_R^\top x_R^* \ge 
	c_R^\top x_R^* + (\tilde c_N^{\top} - c_N^{+\top}) x_N^* > \tilde\mu / 2.
\end{align}
This means that the pessimistic LP-optimal value is at least $\tilde\mu / 2$.
\end{proof}
This concludes the theorem.

\section{Constructing Witness Covers}
\label{sec:witness}

Our technique requires us to prove the existence of a small witness cover.
Here, we describe general strategies for constructing small witness covers.

\subsection{Totally Dual Integral Case}
\label{sec:tdipoly}

A system $A x \le b$, $x \ge 0$ is \emph{totally dual integral (TDI)} if,
for every integral objective vector $c \in \mathbb{Z}^m$,
the dual problem $\min \{\, y^\top b : y^\top A \ge c^\top,~ y \ge 0 \,\}$ has an integral optimal solution $y \in \mathbb{Z}_+^n$ (unless it is infeasible).
Note that every TDI system yields an integral polyhedron
(see, e.g., \cite{schrijver2003combinatorial} for the detail).
Hence, if we obtain a basic optimal solution to the optimistic LP (in Line~\ref{line:2} of Algorithms~\ref{alg:adaptive} and \ref{alg:nonadaptive}),
then we do not need randomization in conducting query (cf.~the footnote~\ref{ft:5} in Section~\ref{sec:two_strategies}).

If the system is TDI, we can construct a witness cover by enumerating all possible integral dual vectors as follows.
\begin{lemma}\label{lem:tdi}
If the system $Ax \le b$, $x \ge 0$ is TDI,
the following set $W \subseteq \mathbb{R}_+^n$ is an $(\epsilon, \epsilon)$-witness cover for $\mu \geq 1$ such that $|W| = \exp\left({O\left(\mu \log (1 + \frac{n}{\mu})\right)}\right)$:
\begin{align}
\label{eq:tdipoly}
	W = \{\, y \in \mathbb{Z}_+^n : y^\top b \le (1 - \epsilon) \mu \,\}.
\end{align}
\end{lemma}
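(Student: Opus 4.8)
The plan is to verify the two defining properties of an $(\epsilon,\epsilon)$-witness cover for the set $W = \{\, y \in \mathbb{Z}_+^n : y^\top b \le (1 - \epsilon)\mu \,\}$, and then to bound $|W|$ by a counting argument. Property~2 is immediate: every $y \in W$ satisfies $y^\top b \le (1-\epsilon)\mu \le (1-\epsilon)\mu$ by definition, so with $\epsilon' = \epsilon$ there is nothing to prove. The substance is Property~1. Suppose $c \in \mathbb{Z}_+^m$ is such that $y^\top A \ge c^\top$ is violated for every $y \in W$; I want to show $y^\top A \ge c^\top$ is violated for every $y \in \mathbb{R}_+^n$ with $y^\top b \le (1-\epsilon)\mu$. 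I would argue the contrapositive: if some real $y \in \mathbb{R}_+^n$ with $y^\top b \le (1-\epsilon)\mu$ satisfies $y^\top A \ge c^\top$, then this $y$ is a feasible solution to the dual LP $\min\{\, z^\top b : z^\top A \ge c^\top,\ z \ge 0 \,\}$ with objective value at most $(1-\epsilon)\mu$. Hence this dual LP is feasible and has optimal value at most $(1-\epsilon)\mu$; since $c$ is integral and the system $Ax \le b$, $x \ge 0$ is TDI, there is an \emph{integral} optimal dual solution $y^* \in \mathbb{Z}_+^n$ with $(y^*)^\top b \le (1-\epsilon)\mu$, i.e., $y^* \in W$, and $(y^*)^\top A \ge c^\top$ holds — contradicting the hypothesis that $y^\top A \ge c^\top$ is violated for all $y \in W$.

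For the size bound, $W$ is contained in $\{\, y \in \mathbb{Z}_+^n : y^\top b \le (1-\epsilon)\mu \,\}$, and since $b \ge 1$ (Assumption~\ref{asmp:constraint}.a) each coordinate satisfies $y_i \le b_i^\top y \le (1-\epsilon)\mu \le \mu$; more to the point, $\sum_i y_i \le \sum_i b_i y_i = b^\top y \le \mu$. So $|W|$ is at most the number of vectors in $\mathbb{Z}_+^n$ with coordinate sum at most $\mu$, which (allowing a slack coordinate) is $\binom{\lfloor \mu \rfloor + n}{n}$. The standard estimate $\binom{N+n}{n} \le \bigl(e(N+n)/n\bigr)^n$ with $N = \lfloor\mu\rfloor \le \mu$ gives $|W| \le \bigl(e(1 + \mu/n)\bigr)^n = \exp\bigl(n \log(1 + \mu/n) + n\bigr)$, and symmetrically, writing the binomial coefficient as $\binom{N+n}{N}$ and estimating the other way, $|W| \le \bigl(e(1 + n/\mu)\bigr)^{\lfloor\mu\rfloor} = \exp\bigl(O(\mu \log(1 + n/\mu))\bigr)$, which is the claimed form. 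I would present whichever of the two symmetric bounds matches the statement, noting that the inequality $\sum_i y_i \le \mu$ together with $y$ integral is all that is used.

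The main obstacle — really the only non-routine point — is the use of total dual integrality in the correct direction: TDI guarantees an integral \emph{optimal} dual solution, not merely an integral feasible one, and one must be careful that the LP in question ($\min y^\top b$ subject to $y^\top A \ge c^\top$, $y \ge 0$) is exactly the dual for which the TDI hypothesis supplies integrality, and that it is feasible (which is exactly what the assumed real $y$ witnesses). Once that is set up, the contradiction closes cleanly. A minor point to handle carefully in the counting step is that $\mu$ need not be an integer, so I would pass to $\lfloor\mu\rfloor$ when invoking the stars-and-bars count; this only improves the bound and does not affect the asymptotic form.
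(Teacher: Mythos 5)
Your proof is correct and follows essentially the same approach as the paper: the paper declares the witness-cover property ``clear'' and concentrates on the same stars-and-bars count of integer vectors with coordinate sum at most $\lfloor\mu\rfloor$ (using $b \geq 1$), whereas you additionally spell out the TDI contrapositive argument that the paper leaves implicit. The extra detail is sound and matches the intended reasoning.
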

\begin{proof}
It is clear that $W$ is an $(\epsilon, \epsilon)$-witness cover for $\mu$,
so it only remains to evaluate the cardinality of $W$.
We see that $|W|$ is at most the number of nonnegative vectors whose entries sum to
at most $\floor{\mu}$,
which can be counted by distributing $k$ ($\leq \floor{\mu}$) tokens among $n$ entries, giving
\begin{align}
	\quad |W| &\le \sum_{k=0}^{\floor{\mu}} \binom{n + k - 1}{k} = \binom{n + \floor{\mu}}{\floor{\mu}}
        \le \left( \frac{e (n + \floor{\mu})}{\floor{\mu}} \right)^{\floor{\mu}}
        = e^{O\left( \mu \log \left( 1 + \frac{n}{\mu} \right) \right)}. \quad \qedhere
\end{align}
\end{proof}

Note that the same counting technique can be used when the system is totally dual $1/k$-integral (TDI/$k$), i.e., the existence of a dual optimal solution where each entry is a multiple of $1/k$ is guaranteed.

\subsection{Non-TDI Case}
\label{sec:nontdi}

If the system is not TDI, we have to deal with fractional dual vectors.
To enumerate these fractional vectors, we discretize the dual vectors, requiring the discretization to have the following property: 
if there exists a feasible $y$ such that $y^\top b \le (1 - \epsilon) \mu$, there exists a feasible discretized $y'$ such that $y'^{\top} b \le (1 - \epsilon/2) \mu$. 
Here, we consider two possible situations: Dual Sparse Case and  General Case.

\subsubsection*{Dual Sparse Case}

If there exists a sparse dual optimal solution, we can simply discretize the dual vectors to obtain a good discretized solution as follows.

\begin{lemma}
\label{lem:nontdi_sparse}
For positive $\mu$, $\epsilon$, and $\gamma$,
if there exists $y \in \mathbb{R}_+^n$ such that $y^\top b \le (1 - \epsilon) \mu$ and $|\supp(y)| \le \gamma \mu$,
then there exists $y' \in \prod_{i = 1}^n (\epsilon/2 b_i \gamma) \mathbb{Z}_+$ such that $y'^\top A \ge y^\top A$ and $y'^\top b \le (1 - \epsilon/2) \mu$.
\end{lemma}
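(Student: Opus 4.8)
The plan is to round each coordinate $y_i$ up to the nearest multiple of $\epsilon/(2 b_i \gamma)$, so that the feasibility constraint $y^\top A \ge c^\top$ is preserved (rounding up can only help), and to control the increase in the objective $y^\top b$ using the sparsity hypothesis. Concretely, I would set
\begin{align}
  y'_i = \ceil{\frac{y_i}{\epsilon/(2 b_i \gamma)}} \cdot \frac{\epsilon}{2 b_i \gamma}
\end{align}
for each $i$, so that $y'_i \ge y_i \ge 0$, hence $y'^\top A \ge y^\top A$ entrywise (since $A \ge 0$); this gives the first conclusion immediately. Moreover $y'_i = 0$ whenever $y_i = 0$, so $\supp(y') \subseteq \supp(y)$ and in particular $|\supp(y')| \le \gamma \mu$.

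Next I would bound the objective. For each $i$ we have $y'_i - y_i < \epsilon/(2 b_i \gamma)$, so $b_i(y'_i - y_i) < \epsilon/(2\gamma)$. Summing only over $i \in \supp(y')$ (the other terms contribute $0$ to both sides) gives
\begin{align}
  y'^\top b - y^\top b = \sum_{i \in \supp(y')} b_i(y'_i - y_i) < |\supp(y')| \cdot \frac{\epsilon}{2\gamma} \le \gamma\mu \cdot \frac{\epsilon}{2\gamma} = \frac{\epsilon\mu}{2}.
\end{align}
Combining with the hypothesis $y^\top b \le (1-\epsilon)\mu$ yields $y'^\top b < (1-\epsilon)\mu + \epsilon\mu/2 = (1-\epsilon/2)\mu$, which is the second conclusion. (One should note $b_i \ge 1$ by Assumption~\ref{asmp:constraint}.a, which is what makes $\epsilon/(2 b_i \gamma)$ a meaningful — not vacuously large — discretization step, though it is not strictly needed for the displayed inequalities.)

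I do not expect a real obstacle here: the argument is a direct coordinatewise rounding with a sparsity-weighted union-type bound on the error. The only point requiring a little care is to restrict the error sum to the support of $y'$ (equivalently of $y$) rather than summing over all $n$ coordinates — summing over all $n$ would only give an increase of $n\epsilon/(2\gamma)$, which is too weak; the sparsity hypothesis $|\supp(y)| \le \gamma\mu$ is exactly what is needed to replace $n$ by $\gamma\mu$ and cancel the $\gamma$. The subsequent use of this lemma (bounding $|W|$ by the number of such discretized vectors supported on few coordinates with bounded weighted sum) would be handled separately, presumably in the same spirit as Lemma~\ref{lem:tdi}.
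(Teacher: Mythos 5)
Your proposal is correct and matches the paper's proof exactly: the paper's proof is the one-line statement that ``a suitable $y'$ can be obtained by rounding up the $i$-th entry of $y$ to the next multiple of $\epsilon/2b_i\gamma$,'' and your write-up simply fills in the (correct) computation showing why this rounding preserves feasibility and increases the objective by less than $\epsilon\mu/2$.
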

\begin{proof}
A suitable $y'$ can be obtained by rounding up the $i$-th entry of $y$ to the next multiple of $\epsilon / 2 b_i \gamma$ for each $i$.
\end{proof}

Now that the existence of a discretized solution whose objective value is almost the same as any sparse solution has been guaranteed,
we can construct a witness cover by enumerating all the discretized vectors. 
\begin{lemma}
\label{lem:poly_nontdi_sparse}
Under the assumption given in Lemma~\ref{lem:nontdi_sparse}, the following set $W \subseteq \mathbb{R}_+^n$ is an $(\epsilon, \epsilon/2)$-witness cover for $\mu \geq 1$, whose cardinality is $|W| = \exp\left({ O\left(\mu \left( \gamma \log \frac{n}{\gamma \mu} + \frac{1}{\epsilon} \right) \right)}\right)$:
\begin{align}
	W = \left\{\, y \in \prod_{i=1}^n \left(\frac{\epsilon}{2 b_i \gamma}\right) \mathbb{Z}_+ : \, y^\top b \le \left(1 - \frac{\epsilon}{2}\right) \mu,~ |\supp(y)| \le \gamma \mu \,\right\}.
\end{align}
\end{lemma}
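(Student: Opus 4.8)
The plan is to verify the two witness-cover properties using Lemma~\ref{lem:nontdi_sparse}, and then to count the discretized vectors. For property~1 of Definition~\ref{def:witness}, suppose $c \in \mathbb{Z}_+^m$ is such that $y^\top A \ge c^\top$ is violated for all $y \in W$; I must show it is then violated for all $y \in \mathbb{R}_+^n$ with $y^\top b \le (1 - \epsilon)\mu$. Arguing by contraposition, suppose there is some $\hat y \in \mathbb{R}_+^n$ with $\hat y^\top A \ge c^\top$ and $\hat y^\top b \le (1 - \epsilon)\mu$; by the standing hypothesis of Lemma~\ref{lem:nontdi_sparse} (existence of a sparse near-optimal dual solution), we may in fact take such a $\hat y$ with $|\supp(\hat y)| \le \gamma\mu$ — this is exactly where the sparsity assumption is invoked. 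Applying Lemma~\ref{lem:nontdi_sparse} to $\hat y$ yields $y' \in \prod_{i=1}^n (\epsilon/2b_i\gamma)\mathbb{Z}_+$ with $y'^\top A \ge \hat y^\top A \ge c^\top$ and $y'^\top b \le (1 - \epsilon/2)\mu$; moreover $\supp(y') \subseteq \supp(\hat y)$ so $|\supp(y')| \le \gamma\mu$, hence $y' \in W$. But $y'^\top A \ge c^\top$ means $y'^\top A \ge c^\top$ is \emph{not} violated for $y'$, contradicting the assumption that it is violated for all of $W$. Property~2 is immediate: every $y \in W$ satisfies $y^\top b \le (1 - \epsilon/2)\mu$ by definition of $W$, so $W$ is an $(\epsilon, \epsilon/2)$-witness cover (noting $\epsilon/2 \le \epsilon$).

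For the cardinality bound I would count lattice points in $\prod_i (\epsilon/2b_i\gamma)\mathbb{Z}_+$ lying in the scaled simplex $y^\top b \le (1-\epsilon/2)\mu$ with support size at most $\gamma\mu$. First choose the support: there are $\sum_{k \le \gamma\mu}\binom{n}{k} = e^{O(\gamma\mu\log(n/\gamma\mu))}$ ways (using the standard bound $\binom{n}{k} \le (en/k)^k$ and monotonicity up to $k = \gamma\mu \le n/2$, or handling the $\gamma\mu > n$ regime trivially). Then, on a fixed support $S$ with $|S| = k \le \gamma\mu$, each coordinate $y_i$ is a nonnegative integer multiple $t_i \cdot (\epsilon/2b_i\gamma)$, and the constraint $\sum_{i \in S} y_i b_i \le (1-\epsilon/2)\mu$ becomes $\sum_{i\in S} t_i \le 2\gamma\mu(1-\epsilon/2)/\epsilon \le 2\gamma\mu/\epsilon$. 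The number of nonnegative integer solutions is $\binom{k + \lfloor 2\gamma\mu/\epsilon\rfloor}{k}$, which — since $k \le \gamma\mu$ and the other term is $O(\gamma\mu/\epsilon)$ — is at most $\left(e(1 + 2/\epsilon)\right)^{\gamma\mu} = e^{O(\gamma\mu\log(1/\epsilon))} = e^{O(\mu/\epsilon)}$ after absorbing $\gamma\log(1/\epsilon)$ into the $1/\epsilon$ term (or, more carefully, $\binom{k+R}{k} \le 2^{k+R} \le e^{O(\gamma\mu + \gamma\mu/\epsilon)}$). Multiplying the two contributions gives $|W| \le e^{O(\gamma\mu\log(n/\gamma\mu))}\cdot e^{O(\mu/\epsilon)} = \exp\left(O\left(\mu\left(\gamma\log\frac{n}{\gamma\mu} + \frac1\epsilon\right)\right)\right)$, as claimed.

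I expect the only real subtlety to be in the bookkeeping of the counting step — specifically, making sure the crude bound $\binom{k+R}{k} \le (e(1+R/k))^k$ is applied in the right direction and that the resulting exponent genuinely collapses to $O(\mu(\gamma\log\frac{n}{\gamma\mu} + \frac1\epsilon))$ rather than something with an extra $\gamma\log(1/\epsilon)$ or $\log\mu$ floating around; but since the target bound is only up to a constant in the exponent, any of the standard estimates suffice and this is routine rather than genuinely hard. The conceptual content is entirely in the first paragraph: the witness-cover property follows from Lemma~\ref{lem:nontdi_sparse} by the contrapositive argument above, with the sparsity hypothesis being exactly what lets us round a near-optimal dual solution into $W$ without losing too much in the objective.
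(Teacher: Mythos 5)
Your witness-cover argument (contrapositive via Lemma~\ref{lem:nontdi_sparse}) and your support-then-tokens counting are exactly the route the paper takes, and the first paragraph is fine as written. The place you flagged as a potential subtlety is, however, a real one, and you should trust your more careful estimate over the attempted absorption. With $|\supp(y)| \le \gamma\mu$ and $y_i \in (\epsilon/2b_i\gamma)\mathbb{Z}_+$, each unit increment of $t_i$ moves the objective $y^\top b$ by $\epsilon/(2\gamma)$, so the total token count is bounded by $2\gamma\mu/\epsilon$ --- exactly what you derived. The reduction $e^{O(\gamma\mu\log(1/\epsilon))} = e^{O(\mu/\epsilon)}$ does not hold in general (it would require $\gamma\log(1/\epsilon) = O(1/\epsilon)$), and your ``more careful'' line $\binom{k+R}{k} \le 2^{k+R} \le e^{O(\gamma\mu + \gamma\mu/\epsilon)}$ gives the honest bound: multiplying the two contributions yields $|W| = \exp\bigl(O\bigl(\mu\gamma\bigl(\log\tfrac{n}{\gamma\mu} + \tfrac{1}{\epsilon}\bigr)\bigr)\bigr)$, not the stated $\exp\bigl(O\bigl(\mu\bigl(\gamma\log\tfrac{n}{\gamma\mu} + \tfrac{1}{\epsilon}\bigr)\bigr)\bigr)$. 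Interestingly, the paper's own proof has the same slip: it asserts that each token contributes $\epsilon/2$ to the objective and therefore sets the token budget $\mu_2 := 2\mu/\epsilon$, whereas the granularity $\epsilon/(2b_i\gamma)$ on $y_i$ makes each token contribute only $\epsilon/(2\gamma)$, so $\mu_2$ should be $2\gamma\mu/\epsilon$. Your careful calculation thus exposes a minor factor-of-$\gamma$ imprecision in the stated cardinality bound rather than a flaw in your own reasoning; since in every application of this lemma $\gamma$ is a constant ($\gamma = k$ for $k$-hypergraph matching and $k$-column-sparse PIP), this discrepancy is absorbed into the $O(\cdot)$ of the downstream corollaries and nothing else in the paper changes.
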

\begin{proof}
By Lemma~\ref{lem:nontdi_sparse}, $W$ is an $(\epsilon, \epsilon/2)$-witness cover for $\mu$.
We evaluate the cardinality of $W$ as follows.
We first select $s$ ($\le \gamma \mu$) entries for the support of $y$, and then distribute $k$ ($< 2 \mu / \epsilon$) tokens among these entries, where each token contributes $\epsilon/2$ to the objective value.
In the nontrivial case when $\mu_1 := \gamma\mu \geq 1$ and $\mu_2 := 2\mu/\epsilon \geq 1$,
the number of these patterns is bounded by
\begin{align}
	\quad |W| &\le \sum_{s=1}^{\floor{\mu_1}} \binom{n}{s} \sum_{k=0}^{\floor{\mu_2}} \binom{s + k - 1}{k} \notag \\
        &\le \floor{\mu_1} \left(\frac{e n}{\floor{\mu_1}}\right)^{\floor{\mu_1}} \floor{\mu_2} \left( \frac{e (\floor{\mu_1} + \floor{\mu_2})}{\floor{\mu_2}} \right)^{\floor{\mu_2}} \notag \\\nonumber
        &\le \floor{\mu_1} \left(\frac{e n}{\floor{\mu_1}}\right)^{\floor{\mu_1}} \floor{\mu_2} e^{\floor{\mu_1} + \floor{\mu_2}}
        = \exp \left(O \left( \mu_1 \log \frac{n}{\mu_1} + \mu_2 \right) \right). \quad \qedhere
\end{align}
\end{proof}

\subsubsection*{General Case}

When the optimal dual solutions are not sparse, the results of simple discretization are useless.
However, even in such a case, there is a good discretized solution. 
Let $y$ be a feasible dual vector. 
Then by applying \emph{randomized rounding}~\cite{raghavan1987randomized} to $y$, we obtain a suitable discretized vector $y'$ with a positive probability.
Formally, the following theoretical guarantee is obtained.
\begin{theorem}[Kolliopoulos and Young~\cite{kolliopoulos2005approximation}]
\label{thm:kolliopoulos2005}
Any feasible LP $\min_y\{\, y^\top b : y^\top A \ge c^\top,\ y \ge 0 \,\}$
has a $(1 + \epsilon/2)$-approximate solution whose entries are multiples of $\theta = \Theta\left(\frac{\epsilon^2}{\log m}\right)$, where $m$ is the dimension of $c$.
\end{theorem}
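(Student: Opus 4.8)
The plan is to prove this by the probabilistic method, via \emph{scaled randomized rounding} of an optimal fractional dual solution. Let $y^*$ be an optimal solution of the covering LP $\min\{\,y^\top b : y^\top A \ge c^\top,\ y \ge 0\,\}$ and let $\mu^* = (y^*)^\top b$. First I would observe that asking for a solution whose entries are multiples of $\theta$ is the same as asking for an integral $z \ge 0$ with $A^\top z \ge c/\theta$ (setting $y = \theta z$); the LP optimum of this scaled covering program is exactly $\mu^*/\theta$. So it suffices to exhibit an integral $z \ge 0$ with $A^\top z \ge c/\theta$ and $\theta\, b^\top z \le (1+\epsilon/2)\mu^*$.

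Next I would introduce multiplicative slack: put $\bar z = (1+\epsilon/4)\,y^*/\theta$, so that $(A^\top \bar z)_j \ge (1+\epsilon/4)c_j/\theta$ for every constraint $j$ while $\theta\,b^\top \bar z = (1+\epsilon/4)\mu^*$; one may in addition replace $\bar z$ by a coordinatewise-minimal feasible point below it, so that $\bar z$ does not over-satisfy any constraint by more than a constant factor (this is used only to bound the number of ``large'' coordinates per constraint, see below). Then round each coordinate independently: let $Z_i = \lceil \bar z_i\rceil$ with probability $\bar z_i - \lfloor \bar z_i\rfloor$ and $Z_i = \lfloor \bar z_i\rfloor$ otherwise, so that $\mathbb{E}[Z_i]=\bar z_i$ and $|Z_i - \bar z_i| < 1$. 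This gives $\mathbb{E}[(A^\top Z)_j] \ge (1+\epsilon/4)c_j/\theta$ and $\mathbb{E}[\theta\,b^\top Z] = (1+\epsilon/4)\mu^*$.

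The heart of the argument is then a concentration estimate showing that with probability bounded away from $0$ the rounded vector $Z$ is both feasible and cheap. For the objective, an upper-tail Chernoff bound (or even Markov's inequality) on $\theta\,b^\top Z = \sum_i(\theta b_i)Z_i$ gives $\theta\,b^\top Z \le (1+\epsilon/2)\mu^*$ with constant probability. For a fixed constraint $j$, write $(A^\top Z)_j = \sum_i A_{ij}Z_i$, a sum of independent terms with mean at least $(1+\epsilon/4)c_j/\theta$; after rescaling the constraint so that its matrix entries are at most its right-hand side, a lower-tail Chernoff bound yields $\Pr[(A^\top Z)_j < c_j/\theta] \le \exp(-\Omega(\epsilon^2)\cdot c_j/(\theta\max_i A_{ij})) = \exp(-\Omega(\epsilon^2/\theta))$, which is at most $1/(2m)$ once $\theta = \Theta(\epsilon^2/\log m)$ with a large enough hidden constant. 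A union bound over the $m$ constraints caps the failure probability of feasibility at $1/2$, and combining with the objective bound shows that a good $Z$ exists; $y = \theta Z$ is then the claimed solution.

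I expect the main obstacle to be exactly the rescaling step hidden above: unlike in an integer program, in an LP one cannot simply truncate a matrix entry $A_{ij}$ exceeding $c_j$ down to $c_j$ without possibly destroying feasibility of $y^*$, yet the naive Chernoff bound is worthless when a single coordinate can carry an entire constraint. The standard remedy — which is the substance of the cited analysis of Kolliopoulos and Young (in the lineage of Raghavan--Thompson and Srinivasan for packing/covering integer programs) — is, for each constraint $j$, to split the coordinates into a few ``heavy'' ones, where $A_{ij}$ is a constant fraction of $c_j$ and of which only $O(1/\epsilon)$ can lie in the support of the trimmed $\bar z$ without over-satisfying constraint $j$, and the remaining ``light'' ones; the heavy coordinates are rounded up deterministically at an affordable additive cost to the objective, while randomized rounding and the Chernoff estimate above are applied only to the light coordinates, where the per-coordinate contribution is genuinely small. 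Carrying out this case split and tracking how the deterministic rounding of heavy coordinates interacts with the $(1+\epsilon/2)$ objective bound is the technical crux.
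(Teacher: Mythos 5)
The paper does not prove Theorem~\ref{thm:kolliopoulos2005}: it is stated as a cited result of Kolliopoulos and Young and used purely as a black box, so there is no paper proof to compare your attempt against.

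Evaluated on its own, your scaled randomized-rounding strategy (rescale by $\theta$, add multiplicative slack, round, split heavy/light) is indeed the right family of argument, and you correctly locate the obstacle: constraints dominated by a few large $A_{ij}$ defeat the naive lower-tail Chernoff bound. But the remedy as sketched has a genuine gap. The claim that replacing $\bar z$ by a coordinatewise-minimal feasible point below it ensures each constraint is over-satisfied by at most a constant factor is false for covering LPs: a minimal solution can over-satisfy a constraint by an unbounded factor when a variable is pinned high by a different constraint. For instance, with constraints $y_1\ge 1$ and $M y_1 + y_2 \ge 1$, minimality forces $y_1=1$ and over-satisfies the second constraint by a factor of $M$; and with $y_i \ge 1$ for $i=1,\dots,k$ together with $\sum_i y_i \ge 1$, minimality leaves $k$ ``heavy'' support coordinates for the last constraint. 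So the $O(1/\epsilon)$ bound on heavy support coordinates per constraint, which is what the rest of your argument rests on, does not follow from minimality. Moreover, even granting such a per-constraint bound, you would still need to show that the \emph{total} objective cost of deterministically rounding up heavy coordinates is $O(\epsilon\mu^*)$; since a given coordinate $i$ can be heavy for many constraints and $b_i$ is unconstrained, the straightforward sum is not controlled by $\mu^*$, and you explicitly defer this (``the technical crux'') rather than carry it out. As written, then, this is a reasonable strategy outline with a broken supporting lemma and the hardest step left open, not a proof of the cited theorem.
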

We use this theorem as an existence theorem.
If there is an optimal dual solution with objective value of at most $(1 - \epsilon) \mu$, this theorem shows that there exists a dual feasible solution whose entries are multiples of $\theta$ with an objective value of at most $(1 + \epsilon/2)(1 - \epsilon) \mu \le (1 - \epsilon/2) \mu$.
By enumerating the dual vectors whose entries are multiples of $\theta$, we can obtain a witness cover.
\begin{lemma}
\label{lem:poly_nontdi_nonsparse}
Let $\theta = \Theta\left(\frac{\epsilon^2}{\log m}\right)$.
The following set $W \subseteq \mathbb{R}_+^n$ is an $(\epsilon, \epsilon/2)$-witness cover for $\mu \geq 1$ such that $|W| = \exp\left({O\left(\frac{\mu\log m}{\epsilon^2} \log \left(1 + \frac{n}{\mu} \right)\right)}\right) $:
\begin{align}
	W = \left\{\, y \in \theta \mathbb{Z}_+^n :\, y^\top b \le \left(1 - \frac{\epsilon}{2}\right) \mu \,\right\}.
\end{align}
\end{lemma}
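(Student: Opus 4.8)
The plan is to mimic the proof of Lemma~\ref{lem:tdi}: first establish the witness-cover property using Theorem~\ref{thm:kolliopoulos2005} as an existence statement, and then bound $|W|$ by a token-distribution count, exactly as in the TDI case but with the grid spacing $\theta$ instead of $1$.

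First I would verify the two properties of Definition~\ref{def:witness}. Property~2 is immediate from the defining inequality $y^\top b \le (1 - \epsilon/2)\mu$ (and $\epsilon' = \epsilon/2 \le \epsilon$). For property~1, suppose $c \in \mathbb{Z}_+^m$ and $y^\top A \ge c^\top$ is violated for every $y \in W$; I must show it is violated for every $y \in \mathbb{R}_+^n$ with $y^\top b \le (1-\epsilon)\mu$. Equivalently, I argue the contrapositive: if some $y \in \mathbb{R}_+^n$ satisfies $y^\top A \ge c^\top$ and $y^\top b \le (1-\epsilon)\mu$, then the LP $\min\{\, y^\top b : y^\top A \ge c^\top,\ y \ge 0\,\}$ is feasible with optimal value at most $(1-\epsilon)\mu$, so by Theorem~\ref{thm:kolliopoulos2005} there is a feasible $y'$ whose entries are multiples of $\theta$ with $y'^\top b \le (1+\epsilon/2)(1-\epsilon)\mu \le (1-\epsilon/2)\mu$ (using $(1+\epsilon/2)(1-\epsilon) = 1 - \epsilon/2 - \epsilon^2/2 \le 1 - \epsilon/2$). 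This $y'$ lies in $W$ and satisfies $y'^\top A \ge c^\top$, contradicting the assumption. Hence $W$ is an $(\epsilon,\epsilon/2)$-witness cover for $\mu$.

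Next I would bound $|W|$. Every $y \in W$ has the form $y = \theta z$ with $z \in \mathbb{Z}_+^n$, and the constraint $y^\top b \le (1-\epsilon/2)\mu$ together with $b \ge 1$ (Assumption~\ref{asmp:constraint}.a) forces $\theta \sum_i z_i \le \theta z^\top b \le (1-\epsilon/2)\mu \le \mu$, i.e.\ $\sum_i z_i \le \floor{\mu/\theta}$. So $|W|$ is at most the number of nonnegative integer vectors of length $n$ whose coordinates sum to at most $K := \floor{\mu/\theta}$, which is $\binom{n+K}{K} \le \left( \frac{e(n+K)}{K} \right)^{K}$ by the standard stars-and-bars bound used in Lemma~\ref{lem:tdi}. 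Taking logarithms and substituting $K = \Theta(\mu/\theta) = \Theta\!\left(\frac{\mu \log m}{\epsilon^2}\right)$ gives $\log|W| = O\!\left( K \log\!\left(1 + \frac{n}{K}\right) \right) = O\!\left( \frac{\mu \log m}{\epsilon^2} \log\!\left(1 + \frac{n}{\mu}\right) \right)$, where in the last step I absorb the $\epsilon^{-2}\log m$ factor inside $n/K$ into the $O(\cdot)$ since $\log(1 + n/K) \le \log(1 + n/\mu)$ when $K \ge \mu$ and otherwise the factor $K$ already dominates. This matches the claimed cardinality.

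I expect the only genuinely delicate point to be the cardinality bookkeeping — specifically, reconciling the clean form $\log(1 + n/\mu)$ in the statement with the $K = \Theta(\mu \log m / \epsilon^2)$ that actually appears in the sum, since $\log(1 + n/K)$ and $\log(1+n/\mu)$ differ; one has to check that the monotonicity of $K \mapsto K\log(1+n/K)$ (or a direct case split on whether $K \lessgtr \mu$) makes the substitution legitimate up to constants. The witness-cover property itself is routine once Theorem~\ref{thm:kolliopoulos2005} is invoked, the main subtlety there being to phrase it as a contrapositive so that the approximation guarantee is applied to the right LP. Everything else is a transcription of the TDI argument with grid spacing $\theta$.
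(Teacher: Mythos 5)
Your proof is correct and follows essentially the same route as the paper: the witness-cover property comes from Theorem~\ref{thm:kolliopoulos2005} (indeed, the paper's own argument for this is the contrapositive you spell out, stated in the preamble just before the lemma), and the cardinality bound is the same stars-and-bars count with $K = \lfloor\mu/\theta\rfloor$ tokens. The only cosmetic difference is in the final bookkeeping: the paper leaves the exponent as $O\bigl(\tfrac{\mu}{\theta}\log(1 + \tfrac{\theta n}{\mu})\bigr)$, and the loosening to $\log(1 + n/\mu)$ is simply the observation you made that $\theta \le 1$ (so $K \ge \mu$, and the case split you hedge on never arises).
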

\begin{proof}
By Theorem~\ref{thm:kolliopoulos2005}, $W$ is an $(\epsilon, \epsilon/2)$-witness cover for $\mu$. 
We evaluate the cardinality of $W$ as follows. Let $\mu' := \floor{\mu/\theta}$.
The number of ways of distributing $k$ ($\leq \mu'$) tokens among $n$ entries is bounded by
\begin{align}
  \quad |W| &\le \sum_{k=0}^{\mu'} \binom{n+k-1}{k}
  \le \mu' \left( \frac{e(n + \mu')}{\mu'} \right)^{\mu'} \nonumber\\ 
  &\le \mu' e^{\mu' (1 + \log (1 + n/\mu'))} = \exp\left(O \left(\frac{\mu}{\theta} \log \left(1 + \frac{\theta n}{\mu} \right) \right) \right). \quad \qedhere
\end{align}

\end{proof}

\subsection{Exponentially Many Constraints}
\label{sec:tdiexp}

Some problems, such as the non-bipartite matching problem and matroid problems, have exponentially many constraints.
In such cases, it is impossible to enumerate all the candidates naively as we have done in the previous sections.

Sometimes, this difficulty is overcome by identifying the granularity of the dual solution (i.e., TDI, dual sparse, or general), and then bounding the number of possible dual patterns by exploiting the combinatorial structure; see next section for concrete examples.

\section{Applications}
\label{sec:applications}

In this section, we demonstrate our proposed framework by applying it to several stochastic combinatorial problems.
We only describe the results for the adaptive strategy (Algorithm~\ref{alg:adaptive}) as the results for the non-adaptive strategy (Algorithm~\ref{alg:nonadaptive}) can easily be obtained analogously.

\subsection{Matching Problems}\label{sec:matching_problems}
In this section, unless otherwise noted, $n$ denotes the number of vertices in the (hyper)graph in question.

\subsubsection{Bipartite Matching}\label{sec:bipartite}

We first demonstrate how to use our technique for the bipartite matching problem.
Let $(V, E)$ be a bipartite graph and $\tilde c \in \mathbb{Z}_+^E$ be a stochastic edge weight.
The bipartite matching problem can then be represented as
\begin{align}\label{eq:bipartite}
\begin{array}{lll}
\text{maximize} &\ \displaystyle\sum_{e \in E} \tilde c_e x_e \\
\text{subject to} &\ \displaystyle\sum_{e \in \delta(u)} x_e \le 1 &\quad (u \in V), \\
&\ x \in \{0,1\}^E,
\end{array}
\end{align}
where $\delta(u) = \{\, e \in E : u \in e \,\}$.
The K\H{o}nig--Egerv\'ary theorem~\cite{kHonig1931graphs,egervary1931combinatorial} shows that the LP relaxation of this system is TDI, so Algorithm~\ref{alg:adaptive} has an approximation ratio of $(1 - \epsilon)$ with high probability for sufficiently large $T$.
Moreover, if the algorithm finds an integral solution to the optimistic problem in Line~\ref{line:2}, it reveals a matching in each iteration, and hence at most $T$ edges per vertex in total.
Finally, by Lemma~\ref{lem:tdi}, there exists an $(\epsilon, \epsilon)$-witness cover of size $e^{O(\mu \log n)}$ for each $\mu \geq 1$. 
We can therefore obtain the following result from Theorem~\ref{thm:adaptive}.
\begin{corollary}\label{cor:bipartite}
By taking $T = \Omega(\Delta_c \log (n/\epsilon)/\epsilon p)$,
for the bipartite matching problem,
Algorithm~\ref{alg:adaptive} outputs a $(1 - \epsilon)$-approximate solution with probability at least $1 - \epsilon$.
\end{corollary}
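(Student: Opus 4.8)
The plan is to combine the three ingredients the excerpt has already assembled: the K\H{o}nig--Egerv\'ary theorem (to establish the TDI property), Lemma~\ref{lem:tdi} (to get a witness cover of size $e^{O(\mu\log n)}$), and Theorem~\ref{thm:adaptive} (to convert a witness-cover bound into an iteration bound). So the proof of Corollary~\ref{cor:bipartite} is essentially a bookkeeping exercise: plug the right parameters into Theorem~\ref{thm:adaptive} and check the arithmetic.

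First I would verify that the hypothesis of Theorem~\ref{thm:adaptive} is met with $\epsilon' = \epsilon$. By the K\H{o}nig--Egerv\'ary theorem, the constraint system $\sum_{e\in\delta(u)} x_e \le 1$ $(u\in V)$, $x\ge 0$ is TDI, so Lemma~\ref{lem:tdi} applies and gives, for every $\mu\ge 1$, an $(\epsilon,\epsilon)$-witness cover $W$ of cardinality $|W| = \exp\bigl(O(\mu\log(1+n/\mu))\bigr)$. Since $\mu\ge 1$, we have $\log(1+n/\mu)\le \log(1+n) = O(\log n)$, hence $|W|\le e^{O(\mu\log n)} = (e^{O(\log n)})^\mu$. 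Thus we may take $M = e^{c\log n} = n^{c}$ for an absolute constant $c$ in the statement of Theorem~\ref{thm:adaptive}: there is an $(\epsilon,\epsilon)$-witness cover of size at most $M^\mu$ for every $\mu\ge 1$.

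Next I would instantiate Theorem~\ref{thm:adaptive} with $\delta = \epsilon$ and $\epsilon' = \epsilon$. The theorem then says that taking
\begin{align*}
  T \ge \frac{\Delta_c}{\epsilon p}\log\!\left(\frac{M}{\delta}\right)
    = \frac{\Delta_c}{\epsilon p}\log\!\left(\frac{n^{c}}{\epsilon}\right)
\end{align*}
suffices for the pessimistic LP at Line~\ref{line:5} to have a $(1-\epsilon)$-approximate solution with probability at least $1-\epsilon$. Since $\log(n^{c}/\epsilon) = c\log n + \log(1/\epsilon) = O(\log(n/\epsilon))$, this is exactly $T = \Omega(\Delta_c\log(n/\epsilon)/\epsilon p)$, which is the bound claimed in the corollary. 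The remark in the corollary's surrounding text that an integral optimal solution to the optimistic LP reveals a matching per iteration — hence at most $T$ edges per vertex — follows from the integrality of the TDI polyhedron and is not needed for the stated approximation guarantee itself, so I would mention it only in passing.

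There is no real obstacle here; the statement is a direct corollary. The only point that warrants a sentence of care is the reduction $\log(1+n/\mu) = O(\log n)$ uniformly over $\mu\ge 1$, which is what lets us fold the witness-cover size into a clean $M^\mu$ with $M$ depending only on $n$, and thereby absorb everything into the $O(\cdot)$ in the final iteration count. Everything else is substitution into Theorem~\ref{thm:adaptive}.
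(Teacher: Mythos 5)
Your proposal matches the paper's argument exactly: the K\H{o}nig--Egerv\'ary theorem gives TDI, Lemma~\ref{lem:tdi} gives the $(\epsilon,\epsilon)$-witness cover of size $e^{O(\mu\log n)}$ (folded into $M^{\mu}$ with $M = n^{O(1)}$ using $\log(1+n/\mu) = O(\log n)$ for $\mu \ge 1$), and Theorem~\ref{thm:adaptive} with $\delta=\epsilon$, $\epsilon'=\epsilon$ yields the iteration bound. The bookkeeping is correct and complete.
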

This result is improved by using the vertex sparsification lemma in Section~\ref{sec:sparsification}.

\subsubsection{Non-bipartite Matching}\label{sec:nonbipartite}

Next, we consider the non-bipartite matching problem.
Let $(V, E)$ be a graph and $\tilde c \in \mathbb{Z}_+^E$ be a stochastic edge weight.
A naive formulation of this problem is as follows:
\begin{align}
\begin{array}{lll}
\text{maximize} &\ \displaystyle\sum_{e \in E} \tilde c_e x_e \\
\text{subject to} &\ \displaystyle\sum_{e \in \delta(u)} x_e \le 1 &\quad (u \in V), \\
&\ x \in \{0,1\}^E.
\end{array}
\end{align}
It is known that the LP relaxation of this system is TDI/2 (totally dual half-integral) and has an integrality gap of $3/2$~\cite{schrijver2003combinatorial}.
Therefore, by the same argument as in the bipartite matching problem, we can show that Algorithm~\ref{alg:adaptive} has an approximation ratio of $(2 - \epsilon)/3$ with high probability if $T = \Omega(\Delta_c \log (n/\epsilon)/\epsilon p)$.

To improve this approximation ratio, we consider a strengthened formulation by adding the blossom inequalities:
\begin{align}\label{eq:blossom}
\begin{array}{lll}
\text{maximize} &\ \displaystyle\sum_{e \in E} \tilde c_e x_e \\
\text{subject to} &\ \displaystyle\sum_{e \in \delta(u)} x_e \le 1 &\quad (u \in V), \\
&\ \displaystyle\sum_{e \in E(S)} x_e \le \floor{\frac{|S|}{2}} &\quad (S \in \mathcal{V}_{\rm odd}), \\
&\ x \in \{0,1\}^E,
\end{array}
\end{align}
where $E(S) = \{\, e \in E : e \subseteq S \,\}$ and $\mathcal{V}_{\rm odd} = \{\, S \subseteq V : |S| \textrm{ is odd and at least  } 3 \,\}$.
Cunningham and Marsh~\cite{cunningham1978primal} showed that this system is TDI, so our algorithm has an approximation ratio of $(1 - \epsilon)$ with high probability for sufficiently large $T$.
Moreover, the number of revealed edges is at most $T$ per vertex. 

The only remaining issue is the number of iterations.
Since the system \eqref{eq:blossom} has exponentially many constraints, we have to exploit its combinatorial structure to reduce the number of possibilities. 
The dual problem is given by
\begin{align}
\label{eq:matchingdual}
\begin{array}{lll}
\text{minimize} &\ \displaystyle\sum_{u \in V} y_u + \sum_{S \in \mathcal{V}_{\rm odd}} \floor{\frac{|S|}{2}} z_S =: \tau(y, z)\\
\text{subject to} &\ \displaystyle y_u + y_v + \sum_{S \in \mathcal{V}_{\rm odd} \colon \{u,v\} \subseteq S} z_S \ge \tilde c_{e} &\quad (e = \{u,v\} \in E), \\
&\ y \in \mathbb{R}_+^V,\ z \in \mathbb{R}_+^{\mathcal{V}_{\rm odd}}.
\end{array}
\end{align}
For $\mu \geq 1$,
let us define a set $W \subseteq \mathbb{R}_+^V \times \mathbb{R}_+^{\mathcal{V}_{\rm odd}}$ by
\begin{align}
  W = \left\{\, (y,z) \in \mathbb{Z}_+^V \times \mathbb{Z}_+^{\mathcal{V}_{\rm odd}} :\, \tau(y, z) \le (1 - \epsilon) \mu \,\right\}.
\end{align}
It is clear that $W$ is an $(\epsilon, \epsilon)$-witness cover, and we can evaluate the size of $W$ as follows.

\begin{Claim} \label{claim:nonbipartite}
$|W| = e^{O(\mu \log n)}$.
\end{Claim}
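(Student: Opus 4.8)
The plan is to bound $|W|$ by first observing that the integrality constraint forces the odd-set variables $z_S$ to concern only a bounded number of sets, and then counting separately the contribution of $y$ and of $z$. For $(y,z) \in W$ we have $\tau(y,z) = \sum_u y_u + \sum_{S} \floor{|S|/2}\, z_S \le (1-\epsilon)\mu < \mu$. Since $y \in \mathbb{Z}_+^V$ and each term is a nonnegative integer, the number of choices for $y$ is at most the number of nonnegative integer vectors in $\mathbb{Z}_+^V$ (i.e.\ $\mathbb{Z}_+^n$) with coordinate sum at most $\floor{\mu}$, which by the token-distribution count used in Lemma~\ref{lem:tdi} is $\binom{n+\floor{\mu}}{\floor{\mu}} = e^{O(\mu\log(1+n/\mu))} = e^{O(\mu\log n)}$.

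For the $z$ part, the key point is that every $S \in \mathcal{V}_{\rm odd}$ has $|S| \ge 3$, hence $\floor{|S|/2} \ge 1$; so in any $(y,z) \in W$ at most $\floor{\mu}$ of the variables $z_S$ can be nonzero, and the total weight $\sum_S \floor{|S|/2}\, z_S$ is at most $\floor{\mu}$. I would first choose the support: there are $2^n$ subsets of $V$ in total, and we pick at most $\floor{\mu}$ of them to carry positive $z_S$, giving at most $\sum_{s=0}^{\floor{\mu}}\binom{2^n}{s} \le (\floor{\mu}+1)\binom{2^n}{\floor{\mu}} \le (\floor{\mu}+1)\, 2^{n\floor{\mu}} = e^{O(\mu n)}$ choices of support. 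Hmm — that is $e^{O(\mu n)}$, not $e^{O(\mu \log n)}$, so this crude bound on the number of candidate odd sets is exactly where the argument must be sharpened, and I expect this to be the main obstacle.

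The fix is to not range over all $2^n$ odd subsets but to use the combinatorial structure of the dual: by Cunningham--Marsh, one may take the supports of the positive $z_S$'s to form a \emph{laminar family} (a nested family of subsets of $V$). A laminar family on an $n$-element ground set has at most $2n-1$ members, so there are only polynomially many — in fact $O(n)$ — candidate sets $S$ that can ever appear, and moreover the number of laminar families is $2^{O(n\log n)}$, but restricted to those with at most $\floor{\mu}$ members we instead pick at most $\floor{\mu}$ sets out of at most $2n$ available ones in a laminar skeleton: $\binom{2n}{\floor{\mu}} = e^{O(\mu\log n)}$. So the refined count is: choose the laminar support (at most $e^{O(\mu\log n)}$ ways), then assign positive integer values $z_S$ to the chosen sets with $\sum_S \floor{|S|/2}\,z_S \le \floor{\mu}$, which since each $\floor{|S|/2}\ge 1$ is at most the number of token-distributions of $\le\floor{\mu}$ tokens among $\le\floor{\mu}$ slots, i.e.\ $e^{O(\mu)}$; and independently choose $y$ in $e^{O(\mu\log n)}$ ways. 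Multiplying, $|W| \le e^{O(\mu\log n)} \cdot e^{O(\mu)} \cdot e^{O(\mu\log n)} = e^{O(\mu\log n)}$, as claimed.

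The one gap to nail down carefully is the laminarity claim: I would invoke that the Cunningham--Marsh theorem guarantees not merely integrality of the dual optimum but the existence of an integral optimal $(y,z)$ whose $z$-support $\{S : z_S > 0\}$ is laminar, and then note that the witness-cover definition only requires $W$ to contain \emph{one} dual solution witnessing infeasibility for each primal vector $c$, so it suffices to include the laminar-supported integral duals — all others can be dropped from $W$ without breaking property~1 of Definition~\ref{def:witness}. This keeps $|W|$ within the laminar count. With that in hand the cardinality bound $|W| = e^{O(\mu\log n)}$ follows, and combined with Theorem~\ref{thm:adaptive} and $\Delta_c = O(1)$ it yields the claimed $T = O(\log(n/\epsilon)/\epsilon p)$ iteration bound for non-bipartite matching.
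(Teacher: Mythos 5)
Your instinct that the crude $2^n$ enumeration must be sharpened is right, but the fix you propose does not work as stated. The fact that a \emph{single} laminar family on $V$ has at most $2n-1$ members does not mean there are only $O(n)$ ``candidate sets $S$ that can ever appear'': different realizations of $c$ lead to different dual optima, with different laminar supports drawn from the full $2^V$. There is no fixed ``laminar skeleton'' of size $2n$ from which every optimal $z$-support is selected, so the count $\binom{2n}{\floor{\mu}}$ is not a valid bound on the number of possible supports. The laminarity of Cunningham--Marsh duals is true but a red herring here: it reduces the number of sets in any \emph{one} dual solution, which was never the bottleneck, while the combinatorial explosion you need to tame comes from the variety of sets across all realizations.

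The observation that actually closes the gap, and which you were one step from making when you noted ``at most $\floor{\mu}$ of the $z_S$ can be nonzero,'' is that every $S \in \mathcal{V}_{\rm odd}$ is odd of size at least $3$, hence $|S| \le 3\floor{|S|/2}$. Regarding $z$ as a multiset of odd sets (a set $S$ with $z_S = t$ appearing $t$ times), the constraint $\sum_S \floor{|S|/2}\,z_S < \mu$ then forces the total cardinality $\sum_i |S_i|$ of the multiset to be less than $3\mu$ while the multiplicity $k$ is less than $\mu$. From there the count is direct: sum over compositions of the total size into at most $\mu$ parts each $\ge 3$ (at most $e^{O(\mu)}$ of them) the product $\binom{n}{s_1}\cdots\binom{n}{s_k} \le n^{\sum s_i} \le n^{3\mu}$, giving $e^{O(\mu\log n)}$ with no structural input about laminarity at all. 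This is the paper's argument; your decomposition of $W$ into a $y$-count and a $z$-count, and your $y$-count itself, match the paper, but the $z$-count needs the size bound above rather than the laminar-family detour.
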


\begin{proof}
We count the candidates for $y$ and $z$ separately.

Since $\sum_i y_i < \mu$, the number of candidates for $y$ is at most
\begin{align}
	\sum_{k=0}^{\floor{\mu}} \binom{n}{k} \le \left(\floor{\mu} + 1\right) \left( \frac{e n}{\floor{\mu}} \right)^{\floor{\mu}} = e^{O(\mu \log n)}.
\end{align}

To count the number of candidates for $z$, we regard $z$ as a multiset, e.g., if $z_S = 2$ then we think there are two $S$.
Let $s_i$ $(i = 1, 2, \ldots, k)$ be the size of each set contained in $z$. 
Then we have
\begin{align}
	s_1 + \cdots + s_k \le 3 \left( \floor{\frac{s_1}{2}} + \cdots + \floor{\frac{s_k}{2}} \right) < 3 \mu.
\end{align}
Therefore, the number of candidates for $z$ is given by
\begin{align}
	\sum_{k=0}^{\floor{\mu}} \sum_{\substack{s_1 + \cdots + s_k \le 3\mu \\ s_1, \ldots, s_k \ge 3}} \binom{n}{s_1} \cdots \binom{n}{s_k}
        \le \sum_k \binom{3 \floor{\mu} + k -1}{k} n^{\mu} = e^{O(\mu \log n)}.
\end{align}
By multiplying the number of candidates for $y$ and $z$, we obtain the required result.
\end{proof}
Therefore, we obtain the following.
\begin{corollary}
By taking $T = \Omega(\Delta_c \log (n/\epsilon)/\epsilon p)$,
for the non-bipartite matching problem,
Algorithm~\ref{alg:adaptive} outputs a $(1 - \epsilon)$-approximate solution with probability at least $1 - \epsilon$.
\end{corollary}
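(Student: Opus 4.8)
The plan is to invoke Theorem~\ref{thm:adaptive} with the witness-cover bound established in Claim~\ref{claim:nonbipartite}, exactly as was done for the bipartite case. First I would recall that Cunningham--Marsh tells us the blossom system \eqref{eq:blossom} is TDI, so the LP relaxation is integral and we may take a basic optimal solution to the optimistic LP in Line~\ref{line:2}; by the TDI property there is an integral dual optimum, and hence the set $W = \{(y,z) \in \mathbb{Z}_+^V \times \mathbb{Z}_+^{\mathcal{V}_{\rm odd}} : \tau(y,z) \le (1-\epsilon)\mu\}$ is an $(\epsilon,\epsilon)$-witness cover for every $\mu \geq 1$ (property~1 of Definition~\ref{def:witness} holds because any dual feasible $(y,z)$ with small objective value can be replaced by an integral one with at most the same objective value, and property~2 is immediate). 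Since an LP-relative $1$-approximation algorithm for weighted non-bipartite matching exists (the matching polytope is exactly \eqref{eq:blossom}), the $\alpha$ in Theorem~\ref{thm:adaptive} is $1$.

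Next I would plug $|W| = e^{O(\mu \log n)}$ from Claim~\ref{claim:nonbipartite} into the shape required by Theorem~\ref{thm:adaptive}, namely $|W| \le M^\mu$: this gives $M = e^{O(\log n)} = n^{O(1)}$, independent of $\mu$. With $\epsilon' = \epsilon$ (the witness cover is $(\epsilon,\epsilon)$) and choosing the failure probability $\delta = \epsilon$, Theorem~\ref{thm:adaptive} yields that taking
\begin{align}
  T \ge \frac{\Delta_c}{\epsilon p}\log\!\left(\frac{M}{\epsilon}\right) = \Omega\!\left(\frac{\Delta_c \log(n/\epsilon)}{\epsilon p}\right)
\end{align}
suffices for the pessimistic LP at Line~\ref{line:5} to admit a $(1-\epsilon)$-approximate solution with probability at least $1-\epsilon$. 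Combined with the integral LP-relative $1$-approximation in the last step, this is exactly the claimed corollary. I would also note, as the excerpt does for the bipartite case, that when an integral optimum to the optimistic LP is used, each iteration reveals a matching, so at most $T$ edges per vertex are queried in total.

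The only real work is Claim~\ref{claim:nonbipartite} itself, which the excerpt already proves, so there is essentially no obstacle left; the subtlety worth double-checking is the counting of the $z$-part, where one uses that an odd set $S$ with $|S|\ge 3$ contributes $\lfloor |S|/2\rfloor \ge |S|/3$ to the objective, so the total size $\sum_i s_i$ of the sets appearing in the multiset $z$ is at most $3\mu$; bounding the number of such multisets of subsets of $V$ by $\sum_k \binom{3\lfloor\mu\rfloor + k - 1}{k} n^{\mu} = e^{O(\mu\log n)}$ and multiplying by the $e^{O(\mu\log n)}$ bound for $y$ gives the stated size. Everything else is a direct application of the general machinery.
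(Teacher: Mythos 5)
Your argument mirrors the paper's proof exactly: use the Cunningham--Marsh TDI property of the blossom system to form the integral witness cover $W$, invoke Claim~\ref{claim:nonbipartite} to get $|W|=e^{O(\mu\log n)}$, and plug $M=n^{O(1)}$, $\epsilon'=\epsilon$, $\delta=\epsilon$ into Theorem~\ref{thm:adaptive}. There is no gap and no meaningful deviation from the paper's route.
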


The same analysis can be applied to the (simple) $b$-matching problem.

\paragraph{Relationship to the analysis of Assadi et al.}
\label{sec:matching}

The adaptive and non-adaptive algorithms of Assadi et al.~\cite{assadi2016stochastic}
for the unweighted non-bipartite matching problem are within our framework (Algorithms~\ref{alg:adaptive} and \ref{alg:nonadaptive}, respectively), 
and their analysis utilizes the Tutte--Berge formula.
They showed that the required number of iterations is $O(\log (n/\epsilon \mu)/\epsilon p)$, and it is reduced to $O(\mathrm{poly}(p, 1/\epsilon))$ by using the vertex sparsification lemma.

For the unweighted problem, our analysis gives a weaker result than theirs.
However, since no simple alternative to the Tutte--Berge formula for the weighted problem is known, our analysis is more general than theirs.

\subsubsection{$k$-Hypergraph Matching}

Let $(V, E)$ be a $k$-uniform hypergraph, i.e., $E$ is a set family on $V$ whose each element $e \in E$ has size exactly $k$.
Let $\tilde c \in \mathbb{Z}_+^E$ be a stochastic edge weight.
The \emph{$k$-hypergraph matching problem} can be represented as
\begin{align}\label{eq:k-hypergraph}
\begin{array}{lll}
\text{maximize} &\ \displaystyle\sum_{e \in E} \tilde c_e x_e \\
\text{subject to} &\ \displaystyle\sum_{e \in \delta(u)} x_e \le 1 &\quad (u \in V), \\
&\ x \in \{0,1\}^E,
\end{array}
\end{align}
where $\delta(u) = \{\, e \in E : u \in e \,\}$.
Chan and Lau~\cite{chan2010linear} proved that the LP relaxation of the above system has an integrality gap of $\alpha := 1/(k - 1 + 1/k)$, and they also proposed an LP-relative $\alpha$-approximation algorithm. 
Since at most one edge per vertex is revealed in expectation due to the constraint $\sum_{e \in \delta(u)} x_e \le 1$,
the expected number of revealed hyperedges per vertex is $O(T)$ in total.

The only remaining issue is the number of iterations.
Since the system \eqref{eq:k-hypergraph} has polynomially many constraints and is not TDI, we have to discretize the dual variables.
The corresponding dual problem is given by
\begin{align}\label{eq:dual_k-hypergraph}
\begin{array}{lll}
\text{minimize} &\ \displaystyle\sum_{u \in V} y_u \\
\text{subject to} &\ \displaystyle\sum_{u \in e} y_u \ge \tilde c_e &\quad (e \in E), \\ 
&\ y \in \mathbb{R}_+^V.
\end{array}
\end{align}
Here, we show that the dual optimal solution is sparse.
\begin{Claim}
If the optimal value is less than $\mu$,
then there exists a dual optimal solution $y \in \mathbb{R}_+^V$ such that $|\supp(y)| < k \mu$.
\end{Claim}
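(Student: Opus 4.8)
The plan is to combine LP complementary slackness on \emph{both} sides with the integrality of the edge weights and the fact that every hyperedge has at most $k$ vertices, rather than arguing about basic solutions directly.

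First I would fix any optimal solution $y^* \in \mathbb{R}_+^V$ to the dual \eqref{eq:dual_k-hypergraph} and any optimal solution $x^* \in \mathbb{R}_+^E$ to the LP relaxation of \eqref{eq:k-hypergraph}; by strong duality their objective values coincide and equal the optimal value, which is assumed to be less than $\mu$. Put $S := \supp(y^*)$. Complementary slackness applied to this pair tells us that the primal constraint at each $u \in S$ is tight, i.e., $\sum_{e \in \delta(u)} x^*_e = 1$ for every $u \in S$. Summing over $u \in S$ and exchanging the order of summation yields
\begin{align}
|S| \;=\; \sum_{u \in S} \sum_{e \in \delta(u)} x^*_e \;=\; \sum_{e \in E} x^*_e \, |e \cap S|.
\end{align}

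Next I would observe that every edge contributing a nonzero term on the right-hand side has weight at least $1$: if $x^*_e > 0$, then complementary slackness gives $\sum_{u \in e} y^*_u = \tilde c_e$, and if in addition $e \cap S \neq \emptyset$ this sum is strictly positive, so the nonnegative integer $\tilde c_e$ satisfies $\tilde c_e \ge 1$. Using also $|e \cap S| \le |e| = k$ and discarding the zero terms coming from edges with $x^*_e = 0$ or $e \cap S = \emptyset$, one gets
\begin{align}
|S| \;=\; \sum_{e \in E} x^*_e \, |e \cap S| \;\le\; k \sum_{e \,:\, x^*_e > 0,\ e \cap S \neq \emptyset} x^*_e \;\le\; k \sum_{e \in E} \tilde c_e \, x^*_e \;<\; k\mu,
\end{align}
where the middle step uses $x^*_e \le \tilde c_e x^*_e$ on those edges and the last step is strong duality. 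This proves $|\supp(y^*)| < k\mu$, and since the argument uses no special property of the chosen optima, in fact \emph{every} dual optimal solution has support smaller than $k\mu$, which is stronger than needed.

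I do not expect a genuine obstacle here; the only points requiring care are invoking complementary slackness for a single fixed primal--dual pair of optimal solutions, and making sure the integrality bound $\tilde c_e \ge 1$ is applied only to edges that lie in $\supp(x^*)$ \emph{and} meet $S$ --- all other edges drop out for free since they contribute $0$ to $\sum_e x^*_e \, |e \cap S|$.
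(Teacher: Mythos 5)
Your proof is correct and takes essentially the same approach as the paper: bound $|\supp(y)|$ by the number of tight primal constraints via complementary slackness and then exploit $k$-uniformity together with the integrality of $\tilde c$ to get $\sum_e \tilde c_e x_e^* < \mu$ into play. The paper first normalizes $x$ so that $x_e > 0$ only if $\tilde c_e \ge 1$, whereas you derive $\tilde c_e \ge 1$ on the relevant edges from dual-side complementary slackness --- a cosmetic difference that also makes explicit that the bound holds for \emph{every} dual optimum.
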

\begin{proof}
Let $x \in \mathbb{R}_+^E$ be a primal optimal solution. 
We can assume that $x_e > 0$ only if $\tilde c_e \ge 1$.
Therefore, we have $\sum_e x_e \le \sum_e \tilde c_e x_e < \mu$. 
Since each hyperedge consists of exactly $k$ elements, we have
\begin{align}
	\sum_{u \in V} \sum_{e \in \delta(u)} x_e = k \sum_{e \in E} x_e < k \mu.
\end{align}
This shows that less than $k \mu$ inequalities can hold in equality.
Therefore, by complementary slackness, the corresponding dual optimal solution $y$ satisfies $|\supp(y)| < k \mu$.
\end{proof}
Therefore, by Lemma~\ref{lem:poly_nontdi_sparse},
there exists an $(\epsilon, \epsilon/2)$-witness cover for each $\mu \geq 1$
of size at most $M^\mu$, where $M = \exp\left(O\left(k \log \frac{n}{k\mu} + \frac{1}{\epsilon}\right)\right)$.
Thus we obtain the following.
\begin{corollary}
\label{cor:khypergraphmatching1}
By taking $T = \Omega(\Delta_c(k \log (n/\epsilon) + 1/\epsilon)/\epsilon p)$, for the $k$-hypergraph matching problem, Algorithm~\ref{alg:adaptive} outputs a $(1 - \epsilon)/(k - 1 + 1/k)$-approximate solution with probability at least $1 - \epsilon$.
\end{corollary}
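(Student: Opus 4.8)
The plan is to assemble Corollary~\ref{cor:khypergraphmatching1} from three ingredients that have already been prepared: (1-a) the expected query bound, (2-b) an LP-relative approximation algorithm, and (1-b)/(2-a) the iteration bound coming from Theorem~\ref{thm:adaptive} via a witness-cover size estimate. Concretely, I would first recall the formulation \eqref{eq:k-hypergraph} and its dual \eqref{eq:dual_k-hypergraph}, and invoke Chan and Lau~\cite{chan2010linear} to fix the LP-relative $\alpha$-approximation algorithm with $\alpha = 1/(k-1+1/k)$ to be used in Line~\ref{line:5}; this handles factor (2-b). Next I would note that the vertex constraints $\sum_{e \in \delta(u)} x_e \le 1$ force $\sum_{e \in E} x_e \le \max_u \sum_{e \in \delta(u)} x_e \cdot (\text{something})$ — more carefully, each hyperedge has $k$ vertices so $k \sum_e x_e = \sum_u \sum_{e \in \delta(u)} x_e \le n$, but the relevant bound for (1-a) is that the expected number of queried edges incident to any fixed vertex $u$ per iteration is $\sum_{e \in \delta(u)} x_e \le 1$, hence $O(T)$ per vertex over all iterations.

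The core of the argument is producing the witness cover and feeding it to Theorem~\ref{thm:adaptive}. Here I would use the already-proved Claim that whenever the dual optimal value is less than $\mu$ there is a dual optimal $y$ with $|\supp(y)| < k\mu$, which says exactly that the sparsity hypothesis of Lemma~\ref{lem:nontdi_sparse} holds with $\gamma = k$ (the value $\Delta_c$ plays no role in sparsity; $b = \mathbf{1}$ so $b_i = 1$). Therefore Lemma~\ref{lem:poly_nontdi_sparse} gives an $(\epsilon, \epsilon/2)$-witness cover for each $\mu \ge 1$ of size $\exp\bigl(O(\mu(\gamma \log \tfrac{n}{\gamma\mu} + \tfrac1\epsilon))\bigr) = \exp\bigl(O(\mu(k\log\tfrac{n}{k\mu} + \tfrac1\epsilon))\bigr) \le M^\mu$ with $M = \exp\bigl(O(k\log\tfrac{n}{k\mu} + \tfrac1\epsilon)\bigr)$; since we only need an upper bound on $M$ and $\mu \ge 1$, one can absorb the $\mu$-dependence and just write $M = \exp(O(k\log(n/k) + 1/\epsilon))$, or more crudely $\log M = O(k\log n + 1/\epsilon)$. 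Now apply Theorem~\ref{thm:adaptive} with $\epsilon' = \epsilon/2$ and $\delta = \epsilon$: it yields that taking
\begin{align}
  T \ge \frac{\Delta_c}{\epsilon' p}\log\!\left(\frac{M}{\delta}\right)
     = \frac{2\Delta_c}{\epsilon p}\Bigl(O(k\log n + 1/\epsilon) + \log(1/\epsilon)\Bigr)
     = \Omega\!\left(\frac{\Delta_c(k\log(n/\epsilon) + 1/\epsilon)}{\epsilon p}\right)
\end{align}
suffices for the pessimistic LP at Line~\ref{line:5} to have a $(1-\epsilon)$-approximate (LP-)solution with probability at least $1-\epsilon$. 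Combining this with the LP-relative $\alpha$-approximation step then produces a $(1-\epsilon)\alpha = (1-\epsilon)/(k-1+1/k)$-approximate integral solution to \eqref{eq:k-hypergraph} with probability at least $1-\epsilon$, which is the claim.

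I do not expect a genuine obstacle here — every hard step has been externalized. The one place to be careful is bookkeeping in the $M^\mu$ form required by Theorem~\ref{thm:adaptive}: Lemma~\ref{lem:poly_nontdi_sparse} states the bound as $\exp(O(\mu(\cdots)))$, so I must check that the hidden constant and the $\log\frac{n}{k\mu}$ term (which \emph{decreases} in $\mu$) can legitimately be bounded by a $\mu$-independent $M$ raised to the $\mu$, uniformly over $\mu \ge 1$; this is routine since $\log\frac{n}{k\mu} \le \log\frac{n}{k}$ for $\mu \ge 1$. The second minor point is making sure the $(\epsilon,\epsilon/2)$-witness cover (rather than $(\epsilon,\epsilon)$) is what Theorem~\ref{thm:adaptive} consumes — it does, with $\epsilon' = \epsilon/2$, which only costs a constant factor $2$ in $T$ that is absorbed by the $\Omega(\cdot)$. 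Finally I would remark, as the paper does elsewhere, that the algorithm runs in time polynomial in $n$ and $1/\epsilon$ since neither the witness cover nor its size is ever constructed — it is used purely as an existence statement inside the analysis.
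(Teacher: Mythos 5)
Your proposal matches the paper's proof essentially step for step: it invokes the same sparsity claim (a dual optimum with $|\supp(y)| < k\mu$), feeds it to Lemma~\ref{lem:poly_nontdi_sparse} with $\gamma = k$ and $b = \mathbf{1}$ to obtain the $(\epsilon, \epsilon/2)$-witness cover of size at most $M^\mu$, and then plugs $M$, $\epsilon' = \epsilon/2$, and $\delta = \epsilon$ into Theorem~\ref{thm:adaptive}, composing at the end with Chan--Lau's LP-relative $\alpha$-approximation. The only bookkeeping you spell out that the paper leaves implicit is absorbing the $\mu$-dependent factor $\log\frac{n}{k\mu}$ into a $\mu$-independent $M$ via $\log\frac{n}{k\mu} \le \log\frac{n}{k}$ for $\mu \ge 1$, which is indeed routine and correct.
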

This result is improved by using the vertex sparsification lemma in Section~\ref{sec:sparsification}.

\paragraph{Comparison with Blum et al.}

Blum et al.~\cite{blum2015ignorance} provided adaptive and non-adaptive algorithms for the unweighted $k$-hypergraph matching problem based on local search of Hurkens and Schrijver~\cite{hurkens1989size}.
Their adaptive algorithm has approximation ratio of $(2 - \epsilon)/k$ in expectation by conducting a constant number of queries per vertex.

For unweighted problem, our algorithm has a worse approximation ratio than theirs.
However, our algorithm has four advantages:
it requires exponentially smaller number of queries;
it runs in polynomial time both in $n$ and $1/\epsilon$;
it is applied to the weighted problem with the same approximation ratio;
and it has a stronger stochastic guarantee, i.e., not in expectation but with high probability.

\begin{remark}
For unweighted $k$-hypergraph matching problem, Chan and Liu~\cite{chan2010linear} showed that there is a packing LP with an integrality gap of $2/(k+1)$. 
Note that the rounding algorithm for this LP is not known.
Using this formulation, we obtain a $(2 - \epsilon)/(k+1)$ approximation algorithm which conducts $O_{\epsilon,p}(\log^2 n)$ queries and runs in non-polynomial time (i.e., it performs exhaustive search).
\end{remark}

\subsubsection{$k$-Column Sparse Packing Integer Programming}\label{sec:k-CSPIP}

The $k$-column sparse packing integer programming problem is a common generalization of the $k$-hypergraph matching problem and the knapsack problem, and can be represented as follows
(the formulation itself just rewrites \eqref{eq:packing} by using the entries of the matrix and of the vectors):
\begin{align}\label{eq:k-column_sparse}
\begin{array}{lll}
\text{maximize} &\ \displaystyle\sum_{j = 1}^m \tilde c_j x_j \\
\text{subject to} &\ \displaystyle\sum_{j = 1}^m a_{ij} x_j \le b_i &\quad (i = 1, \ldots, n), \\
&\ x \in \{0,1\}^m,
\end{array}
\end{align}
where ``$k$-column sparse'' means that $| \{\, i  : a_{ij} \neq 0 \,\} | \le k$ for each $j \in \{1, \ldots, m\}$.
Without loss of generality, we assume that $a_{ij} \le b_i$ for all $i \in \{1,\ldots,n\}$ and $j \in \{1,\ldots,m\}$.

The main difference from the other problems is that the system $\sum_j a_{ij} x_j \le b_i$, $x \geq 0$ does not imply $x_j \le 1$. 
Instead, we have $x_j \le w_j$, where $w_j = \min_{i \colon a_{ij} \neq 0} b_i/a_{ij}$.
Let $w = \max_j w_j$.
By modifying Algorithm~\ref{alg:adaptive} to reveal each $x_j$ with probability $x_j / w$, we obtain the same approximation guarantee with $w$ times larger number of iterations.

Parekh~\cite{parekh2011iterative} proposed an LP-relative $(1/2k)$-approximation algorithm for general $k$ and a $(1/3)$-approximation algorithm for $k = 2$, which encompasses the \emph{demand matching problem}~\cite{shepherd2007demand}. 
The expected number of revealed elements for each constraint $i$ is $O(b_iT)$, because we have
\begin{align}
&\sum_{j\colon a_{ij} \neq 0} x_j \le \sum_j a_{ij} x_j \le b_i. 
\end{align}

The only remaining issue is the number of iterations.
Using the same approach as for the $k$-hypergraph matching problem, we obtain the following result.
\begin{corollary}
By taking $T = \Omega(\Delta_c w (k \log(n/\epsilon) + 1/\epsilon)/\epsilon p)$,
for the $k$-column sparse packing integer programing problem, Algorithm~\ref{alg:adaptive} outputs a $(1 - \epsilon)/2k$-approximate solution with probability at least $1 - \epsilon$.
\end{corollary}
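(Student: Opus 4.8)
The plan is to mimic the argument already carried out for the $k$-hypergraph matching problem (Corollary~\ref{cor:khypergraphmatching1}), combining three ingredients: (i) the query-budget bookkeeping, which here is taken care of by the stated scaling modification that reveals each $j$ with probability $x_j/w$; (ii) the existence of a small witness cover, obtained by proving that the dual LP has a sparse optimum and then invoking Lemma~\ref{lem:poly_nontdi_sparse}; and (iii) Parekh's LP-relative $(1/2k)$-approximation algorithm~\cite{parekh2011iterative}, which turns the guarantee on the pessimistic LP into an integral feasible solution of \eqref{eq:k-column_sparse}.

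First I would establish the witness cover. The dual of the LP relaxation of \eqref{eq:k-column_sparse} is $\min\{\sum_i b_i y_i : \sum_i a_{ij} y_i \ge \tilde c_j\ (j=1,\dots,m),\ y\ge 0\}$. Suppose its optimal value is less than $\mu$, and let $x$ be a primal optimum; as in the hypergraph case we may assume $x_j>0$ only if $\tilde c_j\ge 1$, so $\sum_j x_j \le \sum_j \tilde c_j x_j < \mu$. For any tight constraint $i$ (one with $\sum_j a_{ij} x_j = b_i$), the (WLOG) normalization $a_{ij}\le b_i$ gives $\sum_{j:a_{ij}\neq 0} x_j \ge \sum_j (a_{ij}/b_i) x_j = 1$; summing over tight constraints and using $k$-column sparsity, the number of tight constraints is at most $\sum_i\sum_{j:a_{ij}\neq 0} x_j = \sum_j x_j\,|\{i:a_{ij}\neq 0\}| \le k\sum_j x_j < k\mu$. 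By complementary slackness there is then a dual optimum $y$ with $|\supp(y)| < k\mu$, and Lemma~\ref{lem:poly_nontdi_sparse} with $\gamma = k$ yields an $(\epsilon,\epsilon/2)$-witness cover for every $\mu\ge 1$ of size at most $M^\mu$ with $M=\exp(O(k\log(n/\epsilon)+1/\epsilon))$.

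Next I would re-run the proof of Theorem~\ref{thm:adaptive} for the modified algorithm. The only place the modification enters is Claim~\ref{lem:feasibility}: inequality \eqref{eq:delta-mu} is unchanged, but since each $j$ in the support of $x$ is now queried with probability $x_j/w$, the probability that a witness $y$ survives one step becomes $\prod_{j:\text{violated}}(1-px_j/w)\le \exp(-p\epsilon'\mu/(w\Delta_c))$; the remainder of the proof (iterating this over $T$ steps and applying the union bound over a witness cover of size $M^{\tilde\mu}$) is verbatim, so $T\ge \frac{w\Delta_c}{\epsilon' p}\log(M/\delta)$ suffices to make the pessimistic LP-optimal value at least $(1-\epsilon)\tilde\mu$ with probability at least $1-\delta$. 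Taking $\epsilon'=\epsilon/2$ from the witness cover and $\delta=\epsilon$ gives exactly $T=\Omega(\Delta_c w(k\log(n/\epsilon)+1/\epsilon)/\epsilon p)$, and applying the LP-relative $(1/2k)$-approximation algorithm to this pessimistic LP outputs an integral feasible solution of value at least $\frac{1}{2k}(1-\epsilon)\tilde\mu \ge \frac{1-\epsilon}{2k}\,\mathrm{OPT}$, where $\mathrm{OPT}$ is the omniscient optimum of \eqref{eq:k-column_sparse}.

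The step I expect to be the main obstacle is the dual-sparsity bound $|\supp(y)|<k\mu$: once the implied box constraints are $x_j\le w_j$ rather than $x_j\le 1$, the counting used for $k$-hypergraph matching no longer applies directly, and one must lean on the normalization $a_{ij}\le b_i$ (together with $b\ge 1$ from Assumption~\ref{asmp:constraint}.a) to recover $\sum_{j:a_{ij}\neq 0} x_j\ge 1$ on tight rows, and check that this does not interfere with the scaling factor $w$ in the query rule. The rest — the extra factor $w$ in Theorem~\ref{thm:adaptive} and the cardinality estimate inherited from Lemma~\ref{lem:poly_nontdi_sparse} — is routine.
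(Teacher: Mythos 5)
Your proposal is correct and follows essentially the same route the paper takes, which for this corollary simply says ``use the same approach as for the $k$-hypergraph matching problem'' after the $x_j/w$ modification; you fill in the details faithfully, including the right generalization of the dual-sparsity bound. The step you flag as a potential obstacle is already resolved correctly in your own argument: summing $\sum_{j:a_{ij}\neq 0} x_j \geq \sum_j (a_{ij}/b_i)x_j = 1$ over tight rows and using $k$-column sparsity gives the $<k\mu$ count without any interaction with the scaling factor $w$, exactly as in the paper's hypergraph computation (which replaces the $a_{ij}/b_i$ normalization with the fact that every hyperedge has size $k$).
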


\subsection{Matroid Problems}
Now we apply our technique to matroid-related optimization problems
(see, e.g., \cite{schrijver2003combinatorial} for basics of matroids and related optimization problems).
In this section, unless otherwise noted,
$m$ denotes the ground set size of the matroids in question.

\subsubsection{Maximum Independent Set}

Let ${\mathbf M} = (E, \mathcal{I})$ be a matroid on a finite set $E$, and let $r\colon 2^E \to \mathbb{Z}_+$ be its rank function.
A set $S \subseteq E$ is a \emph{flat} if $r(S \cup e) \ne r(S)$ for all $e \in E \setminus S$,
and let ${\mathcal F}_{\mathbf M}$ denote the family of flats in ${\mathbf M}$.
For a subset $S \subseteq E$,
the smallest flat containing $S$ is called the \emph{closure} of $S$.
We assume that the rank of the matroid is relatively small to ensure that Algorithm~\ref{alg:adaptive} does not reveal all the elements.

Let $\tilde c \in \mathbb{Z}_+^E$ be a stochastic weight.
The maximum independent set problem can be represented as follows:
\begin{align}
\begin{array}{lll}
\text{maximize} &\ \displaystyle\sum_{e \in E} \tilde c_e x_e \\
\text{subject to} &\ \displaystyle\sum_{e \in S} x_e \le r(S) &\quad (S \subseteq E), \\
&\ x \in \{0,1\}^E.
\end{array}
\end{align}
Edmonds~\cite{edmonds1970submodular} showed that the LP relaxation of the above system is TDI, so our algorithm has an approximation ratio of $(1 - \epsilon)$ with high probability.
Moreover, the number of revealed elements is $O(r T)$, where $r = r(E)$ is the rank of the matroid in question.

The only remaining issue is the number of iterations.
Since the system has exponentially many constraints, we have to exploit the combinatorial structure of the problem to reduce the number of possibilities.
The dual problem is given by
\begin{align}
\begin{array}{lll}
\text{minimize} &\ \displaystyle\sum_{S \subseteq E} r(S) y_S \\
\text{subject to} &\ \displaystyle\sum_{S \subseteq E\colon e \in S} y_S \ge \tilde c_e &\quad (e \in E), \\
&\ y \in \mathbb{R}_+^{2^E}.
\end{array}
\end{align}
Since the closure of each $S \subseteq E$ contributes the objective value by $r(S)$ and contains all elements in $S$,
we can restrict the supports of $y$ to the subfamilies of ${\mathcal F}_{\mathbf M}$.
Then, for $\mu \geq 1$,
let us define a set $W \subseteq \mathbb{R}_+^{2^E}$ by
\begin{align}
	W = \left\{\, y \in \mathbb{Z}_+^{2^E} :\, \sum_{S \in {\mathcal F}_{\mathbf M}} r(S) y_S \le (1 - \epsilon) \mu,\ \supp(y) \subseteq {\mathcal F}_{\mathbf M} \, \right\}.
\end{align}
It is clear that $W$ is an $(\epsilon, \epsilon)$-witness cover, and we can evaluate the size of $W$ as follows.

\begin{Claim}\label{cl:maximum_independent_set}
$|W| = e^{O(\mu \log m)}$.
\end{Claim}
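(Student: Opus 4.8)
The plan is to bound $|W|$ by separately counting the number of possible supports $\supp(y) \subseteq {\mathcal F}_{\mathbf M}$ and, for each fixed support, the number of ways to assign nonnegative integer multiplicities $y_S$ subject to the objective bound $\sum_{S} r(S) y_S \le (1-\epsilon)\mu$. The key structural observation is that every flat $S$ appearing in $\supp(y)$ with $y_S \ge 1$ must satisfy $r(S) \ge 1$ (otherwise $S$ contributes nothing to the objective but we may drop it; more precisely, only flats of rank $\ge 1$ are relevant since a rank-$0$ flat imposes the constraint $0 \ge \tilde c_e$ which we can ignore for elements with $\tilde c_e = 0$). Hence each flat in the support contributes at least $1$ to $\sum_S r(S) y_S < \mu$, which forces $|\supp(y)| < \mu$ and also $\sum_S y_S < \mu$.

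First I would observe that a flat of rank $k$ in a matroid on $m$ elements is the closure of some independent set of size $k$, so the number of flats of rank exactly $k$ is at most $\binom{m}{k}$, and the total number of flats of rank at most $\lfloor \mu \rfloor$ is at most $\sum_{k=0}^{\lfloor\mu\rfloor}\binom{m}{k} \le (\lfloor\mu\rfloor+1)\binom{m}{\lfloor\mu\rfloor} = e^{O(\mu\log m)}$. Next, since $|\supp(y)| < \mu$, I would choose the support as a sub-multiset: select at most $\lfloor\mu\rfloor$ flats (with repetition, encoding the multiplicities $y_S$ directly), each of rank at most $\lfloor\mu\rfloor$ since $r(S) y_S \le r(S) \le \mu$. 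The number of such choices is at most
\begin{align}
\sum_{k=0}^{\lfloor\mu\rfloor}\binom{N + k - 1}{k} \le (\lfloor\mu\rfloor + 1)\left(\frac{e(N+\lfloor\mu\rfloor)}{\lfloor\mu\rfloor}\right)^{\lfloor\mu\rfloor},
\end{align}
where $N = e^{O(\mu\log m)}$ is the number of relevant flats. Since $N \ge \lfloor\mu\rfloor$ in the nontrivial regime, this is $\left(e^{O(\mu\log m)}\right)^{\mu} \cdot e^{O(\mu)}$; taking logarithms, $\log|W| = O(\mu \cdot \mu\log m)$ — which is too weak by a factor of $\mu$ and must be tightened.

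The fix, and the step I expect to be the main obstacle, is to avoid paying $\log N$ per chosen flat. Instead I would encode a witness vector $y$ by first listing, for each flat $S$ in the support, an independent set $I_S$ of size $r(S)$ whose closure is $S$ (these generators are disjoint accounting is \emph{not} needed, only that the total size $\sum_S r(S) < \mu$). Concatenating these generators gives a string of fewer than $\mu$ elements of $E$ together with a bracketing into groups of sizes $r(S) = $ (rank of each flat), and the multiset structure of $\supp(y)$ is recovered by closure; the multiplicity $y_S$ is the number of times the group generating $S$ is repeated, absorbed into the same accounting since $\sum_S r(S) y_S < \mu$. Thus the whole witness is determined by a sequence of fewer than $\mu$ elements from the $m$-element ground set (choosing the generators, with repetition allowed across groups but bounded total length), plus a composition of an integer $< \mu$ recording the group sizes. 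The number of sequences is at most $\sum_{\ell < \mu} m^{\ell} = e^{O(\mu\log m)}$, and the number of compositions of an integer less than $\mu$ is $2^{O(\mu)}$. Multiplying, $|W| = e^{O(\mu\log m)}$, as claimed. The delicate point to verify carefully is that this encoding is injective — distinct $y \in W$ yield distinct (sequence, composition) pairs — which follows because the closure operator recovers each flat $S$ from its generator $I_S$, and the composition recovers the group boundaries hence the multiset $\supp(y)$ with multiplicities.
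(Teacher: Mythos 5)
Your fixed argument is correct and is essentially the paper's proof: the paper also regards $y$ as a multiset of flats with total rank $r_1+\cdots+r_k<\mu$, uses that a rank-$r$ flat is the closure of an independent set of size $r$ (hence $\le\binom{m}{r}$ candidates), and sums $\prod_i\binom{m}{r_i}$ over compositions of integers $\le\mu$. Your ``sequence of $<\mu$ ground-set elements plus a bracketing'' encoding is exactly this count ($m^{\sum r_i}$ in place of $\prod\binom{m}{r_i}$, composition in place of the bracketing), so there is no substantive difference.
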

\begin{proof}
To evaluate the size of $W$, as for the non-bipartite matching problem, we regard $y$ as a multiset of flats. 
Let $r_1, \ldots, r_k$ be the ranks of flats in $y$. 
Then we have $r_1 + \cdots + r_k < \mu$.
Since each flat is the closure of some independent set, the number of flats of rank $r$ is at most the number of independent sets of size $r$, which is at most $\binom{m}{r}$.
Therefore, the number of dual candidates for $y$ is given by
\begin{align}
	\quad \sum_{k=0}^{\floor{\mu}} \sum_{\substack{r_1 + \cdots + r_k \leq \mu \\ r_1, \ldots, r_k \ge 1}} \binom{m}{r_1} \cdots \binom{m}{r_k} \le e^{O(\mu \log m)}. \quad \qedhere
\end{align}
\end{proof}
Therefore, we obtain the following.
\begin{corollary}
By taking $T = \Omega(\Delta_c \log (m / \epsilon)/\epsilon p)$,
for the maximum independent set problem,
Algorithm~\ref{alg:adaptive} outputs a $(1 - \epsilon)$-approximate solution with probability at least $1 - \epsilon$.
\end{corollary}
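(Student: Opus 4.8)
The plan is to bound $|W|$ by the same multiset‑of‑sets enumeration used for the non‑bipartite matching polytope in Claim~\ref{claim:nonbipartite}, exploiting that $W$ consists of \emph{integral} dual vectors supported on flats with total weighted rank $\sum_{S \in \mathcal{F}_{\mathbf M}} r(S) y_S \le (1-\epsilon)\mu$. First I would dispose of a finiteness technicality: loops of ${\mathbf M}$ may be deleted without loss of generality (they lie in no independent set, hence in no integral optimum), after which the unique flat of rank $0$ is $\emptyset$; since $\emptyset$ contains no element it appears in no dual constraint, so we may freely assume $y_\emptyset = 0$, and thus that every flat in $\supp(y)$ has rank at least $1$.

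Next I would reinterpret $y$ as a multiset of flats, flat $S$ occurring $y_S$ times. Writing $k=\sum_S y_S$ and listing the ranks of these flats with multiplicity as $r_1,\dots,r_k \ge 1$, the objective bound gives $r_1+\cdots+r_k = \sum_S r(S)y_S \le (1-\epsilon)\mu \le \mu$, so (the sum being an integer) $r_1+\cdots+r_k \le \floor{\mu}$ and hence $k \le \floor{\mu}$. The combinatorial heart of the argument is a cheap bound on the number of flats of a fixed rank: every rank‑$r$ flat $F$ equals the closure of any of its bases, and a basis of $F$ is an independent set of size $r$, so the map $I \mapsto \mathrm{closure}(I)$ from size‑$r$ independent sets onto rank‑$r$ flats is surjective; hence there are at most $\binom{m}{r} \le m^r$ flats of rank $r$.

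Putting these together, $|W|$ is at most the number of ordered tuples of flats with positive ranks summing to at most $\mu$, which I would bound by grouping according to the rank sequence:
\begin{align}
|W| \;\le\; \sum_{k=0}^{\floor{\mu}} \sum_{\substack{r_1+\cdots+r_k \le \mu \\ r_1,\dots,r_k \ge 1}} \binom{m}{r_1}\cdots\binom{m}{r_k} \;\le\; \sum_{k=0}^{\floor{\mu}} \sum_{\substack{r_1+\cdots+r_k \le \mu \\ r_1,\dots,r_k \ge 1}} m^{\,r_1+\cdots+r_k} \;\le\; 2^{\floor{\mu}}\,m^{\floor{\mu}},
\end{align}
where I used that each summand is at most $m^{\floor{\mu}}$ and that the number of positive integer compositions of all sizes at most $\floor{\mu}$ is at most $2^{\floor{\mu}}$. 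Taking $m \ge 2$ (the trivial matroid being immediate), this yields $|W| \le (2m)^{\floor{\mu}} = e^{O(\mu \log m)}$, the asserted bound; it also certifies a witness‑cover size of the form $M^\mu$ with $\log M = O(\log m)$, as required to invoke Theorem~\ref{thm:adaptive}.

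I do not anticipate a real obstacle: the estimate is a direct transcription of the non‑bipartite matching count, and the reduction of $\supp(y)$ to flats was already justified just before the claim. The only points demanding slight care are the loop/rank‑$0$ technicality above (needed so that the literally written $W$ is finite at all) and the elementary matroid fact that the closure of a basis of a flat recovers that flat — which is exactly what makes rank‑$r$ flats inject into size‑$r$ independent sets and powers the whole count.
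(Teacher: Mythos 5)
Your proposal takes essentially the same route as the paper: interpret integral dual vectors supported on flats as multisets of flats, bound the number of rank-$r$ flats by $\binom{m}{r}$ via closures of independent sets, and sum over rank sequences with $r_1+\cdots+r_k\le\mu$ to get $|W|=e^{O(\mu\log m)}$, after which Theorem~\ref{thm:adaptive} with $\delta=\epsilon$ gives the stated $T$. The extra care you take with rank-$0$ flats/loops is sound but is already precluded by Assumption~\ref{asmp:constraint}.b (a loop $e$ would violate $A\chi_e\le b$ via the row $S=\{e\}$), and the explicit $2^{\floor{\mu}}m^{\floor{\mu}}$ bound is a clean way to certify the $e^{O(\mu\log m)}$ estimate the paper leaves implicit.
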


\subsubsection{Matroid Intersection}

The same technique can also be applied to the matroid intersection problem.
Let ${\mathbf M}_j = (E, \mathcal{I}_j)$ $(j = 1, 2)$ be two matroids whose rank functions are $r_j \colon 2^E \to \mathbb{Z}_+$, and $\tilde c \in \mathbb{Z}_+^E$ be a stochastic weight.
The weighted matroid intersection problem can be represented as 
\begin{align}
\begin{array}{lll}
\text{maximize} &\ \displaystyle\sum_{e \in E} \tilde c_e x_e \\
\text{subject to} &\ \displaystyle\sum_{e \in S} x_e \le r_j(S) &\quad (S \subseteq E, \ j \in \{1, 2\}), \\
&\ x \in \{0,1\}^{E}.
\end{array}
\end{align}
Edmonds~\cite{edmonds1970submodular} showed that the LP relaxation of the above system is TDI, so our algorithm has an approximation ratio of $(1 - \epsilon)$ with high probability for sufficiently large $T$.
Moreover, the number of revealed elements is $O(r^\ast T)$,
where $r^\ast$ is the maximum rank of a common independent set in the two matroids.

The only remaining issue is the number of iterations.
As the analysis of the maximum independent set problem,
we can restrict the supports of dual vectors $y \in {\mathbb R}_+^{2^E} \times {\mathbb R}_+^{2^E}$ to the subfamilies of ${\mathcal F}_{{\mathbf M}_1} \times {\mathcal F}_{{\mathbf M}_2}$,
and we obtain the following by the same argument.
\begin{corollary}
By taking $T = \Omega(\Delta_c \log (m/\epsilon)/\epsilon p)$, 
for the matroid intersection problem, Algorithm~\ref{alg:adaptive} outputs a $(1 - \epsilon)$-approximate solution with probability at least $1 - \epsilon$.%
\footnote{Note that the bipartite matching problem is a special case of the matroid intersection problem, and Corollary~\ref{cor:bipartite} is obtained from a naive application of this result. Using the vertex sparsification lemma shown in Section~\ref{sec:sparsification}, a stronger result is obtained for bipartite matching (Corollary~\ref{cor:bipartite_stronger}).}
\end{corollary}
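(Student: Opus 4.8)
The plan is to transcribe, almost verbatim, the argument used for the maximum independent set problem, exploiting that the matroid-intersection system is TDI and that the same ``closure trick'' bounds the number of integral dual patterns. First I would invoke Edmonds' theorem that the system is TDI; this does double duty, giving an integral optimal solution to the optimistic LP in Line~\ref{line:2} (so no randomization is needed there) and an integral optimal solution to every dual pessimistic LP, which is exactly what lets us build an $(\epsilon,\epsilon)$-witness cover (i.e.\ take $\epsilon' = \epsilon$) out of integer vectors only. Concretely, for each $\mu \ge 1$ I would set
\[
  W = \left\{\, (y^{(1)}, y^{(2)}) \in \mathbb{Z}_+^{2^E} \times \mathbb{Z}_+^{2^E} :\ \sum_{S} r_1(S)\, y^{(1)}_S + \sum_{S} r_2(S)\, y^{(2)}_S \le (1-\epsilon)\mu,\ \supp(y^{(j)}) \subseteq \mathcal{F}_{\mathbf{M}_j}\ (j = 1, 2) \,\right\}.
\]

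To check that $W$ meets Definition~\ref{def:witness}: given any $c \in \mathbb{Z}_+^E$ and any fractional dual-feasible $(y^{(1)}, y^{(2)})$ with objective at most $(1-\epsilon)\mu$, TDI yields an integral dual-feasible pair of objective at most $(1-\epsilon)\mu$; then pushing, for each $S$ with $y^{(j)}_S > 0$, its mass onto the closure of $S$ in $\mathbf{M}_j$ leaves the objective unchanged (rank is invariant under closure in each matroid) and can only strengthen the covering constraints $\sum_{S \ni e}(y^{(1)}_S + y^{(2)}_S) \ge c_e$, since every variable is moved only to a superset; the resulting flat-supported integral pair therefore lies in $W$ and still covers $c$. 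Property~2 is immediate with $\epsilon' = \epsilon$. The single quantitative step is $|W| = e^{O(\mu \log m)}$: count the two matroids' contributions separately and multiply. For a single matroid, view $y^{(j)}$ as a multiset of flats whose ranks $r_1, \dots, r_k$ satisfy $r_1 + \dots + r_k < \mu$; since every flat is the closure of an independent set of the same rank, the number of flats of rank $r$ is at most $\binom{m}{r}$, and the same summation as in Claim~\ref{cl:maximum_independent_set} gives $e^{O(\mu \log m)}$. Multiplying the two bounds keeps the exponent $O(\mu \log m)$, so one can take $M = e^{O(\log m)}$. Feeding $M$ and $\delta = \epsilon$ into Theorem~\ref{thm:adaptive} (with $\epsilon' = \epsilon$) gives $T = \Omega(\Delta_c \log(M/\delta)/\epsilon p) = \Omega(\Delta_c \log(m/\epsilon)/\epsilon p)$; and since the pessimistic LP is integral, its $(1-\epsilon)$-approximate fractional solution is already an integral one, whose value is at least $(1-\epsilon)\tilde\mu$, hence at least $(1-\epsilon)$ times the omniscient integral optimum — the claimed corollary. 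The reduction to $\mu \ge 1$ is handled exactly as in the proof of Theorem~\ref{thm:adaptive}.

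The only real subtlety — and the step I would write out most carefully — is the support-restriction-to-flats move in the two-matroid setting: one must be explicit that the closure is applied per variable and separately in $\mathbf{M}_1$ versus $\mathbf{M}_2$ (there is no single joint closure), that rank is preserved under closure in each matroid so the objective is unaffected, and that coverage of $c$ is preserved because each variable migrates only to a superset. Everything else is a routine adaptation of the maximum-independent-set proof, with the dual support living in $\mathcal{F}_{\mathbf{M}_1} \times \mathcal{F}_{\mathbf{M}_2}$ in place of $\mathcal{F}_{\mathbf M}$.
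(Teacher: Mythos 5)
Your proposal is correct and follows exactly the route the paper intends: the paper itself dispenses with this corollary in one sentence by invoking ``the same argument'' as the maximum independent set case with supports restricted to $\mathcal{F}_{\mathbf{M}_1}\times\mathcal{F}_{\mathbf{M}_2}$, which is precisely what you carry out. Your extra care with the per-matroid closure move (pushing each $y^{(j)}_S$ onto the closure of $S$ in $\mathbf{M}_j$ separately) is a faithful and welcome filling-in of the detail the paper leaves implicit.
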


\subsubsection{$k$-Matroid Intersection}

Let ${\mathbf M}_j = (E, \mathcal{I}_j)$ $(j = 1, 2, \ldots, k)$ be $k$ matroids whose rank functions are $r_j \colon 2^E \to \mathbb{Z}_+$, and $\tilde c \in \mathbb{Z}_+^E$ be a stochastic weight.
The $k$-matroid intersection problem can be represented as 
\begin{align}\label{eq:k-matroid_intersection}
\begin{array}{lll}
\text{maximize} &\ \displaystyle\sum_{e \in E} \tilde c_e x_e \\
\text{subject to} &\ \displaystyle\sum_{e \in S} x_e \le r_j(S) &\quad (S \subseteq E,\ j \in \{1, 2, \ldots, k\}), \\
&\ x \in \{0,1\}^E.
\end{array}
\end{align}
The important difference between the $2$-intersection and $k$-intersection ($k \ge 3$) problems is that the latter is NP-hard in the non-stochastic case.
Moreover, the LP relaxation of the system is not a kind of TDI.

Adamczyk et al.~\cite{adamczyk2016submodular} proposed an LP-relative $(1/k)$-approximation algorithm. 
The expected number of revealed elements is $O(\hat{r} T)$, where $\hat{r}$ is the minimum of the ranks of the $k$ matroids.

The only remaining issue is the number of iterations.
Since the LP relaxation of \eqref{eq:k-matroid_intersection} is not TDI, we have to discretize the dual variables.
Moreover, since we could not prove the dual optimal solution is sparse, we use Theorem~\ref{thm:kolliopoulos2005}.

For $\theta = \Theta\left(\frac{\epsilon^2}{\log m}\right)$ and $\mu \geq 1$, 
let us define a set $W \subseteq \left(\mathbb{R}_+^{2^V}\right)^k$ by
\begin{align}
  W = \biggl\{\, &(y^1, \ldots, y^k) \in \left(\theta\mathbb{Z}_+^{2^V}\right)^k \, : \notag\\& \sum_{j,S} r_j(S) y^j_S \le \left(1 - \frac{\epsilon}{2}\right) \mu,\ \supp(y^j) \subseteq {\mathcal F}_{{\mathbf M}_j} \ (\forall j = 1, \ldots, k) \,\biggr\}.
\end{align}
By Theorem~\ref{thm:kolliopoulos2005}, $W$ is an $(\epsilon, \epsilon/2)$-witness cover,
and we can evaluate its size as follows.

\begin{Claim}
$|W| = e^{O(\mu k \log^2 m/\epsilon^2)}$.
\end{Claim}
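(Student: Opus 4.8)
The plan is to bound $|W|$ by counting the discretized dual vectors $(y^1,\dots,y^k)$ separately for each matroid ${\mathbf M}_j$ and then multiplying the $k$ resulting counts together. Fix one index $j$. The relevant quantities are that $\supp(y^j)\subseteq{\mathcal F}_{{\mathbf M}_j}$, that each entry of $y^j$ is a nonnegative multiple of $\theta$, and that the partial objective contribution $\sum_S r_j(S) y^j_S$ is at most $(1-\epsilon/2)\mu < \mu$. Just as in the proof of Claim~\ref{cl:maximum_independent_set}, I would regard $y^j$ as a multiset of flats, where a flat $S$ of rank $r_j(S)=r$ is listed with multiplicity $y^j_S/\theta$; each such listed copy contributes $\theta r$ to the objective, so if $r_1,\dots,r_\ell$ are the ranks of the listed flats (with multiplicity), then $\theta(r_1+\cdots+r_\ell) < \mu$, i.e.\ $r_1+\cdots+r_\ell < \mu/\theta$. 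In particular $\ell < \mu/\theta$ since each $r_i\ge 1$. The number of flats of rank $r$ is at most the number of independent sets of size $r$, which is at most $\binom{m}{r}$, so the count for ${\mathbf M}_j$ is bounded by
\begin{align}
\sum_{\ell = 0}^{\floor{\mu/\theta}} \ \sum_{\substack{r_1+\cdots+r_\ell \le \mu/\theta \\ r_1,\dots,r_\ell \ge 1}} \binom{m}{r_1}\cdots\binom{m}{r_\ell} \le e^{O((\mu/\theta)\log m)}.
\end{align}
Here the inner sum is controlled exactly as in Claim~\ref{cl:maximum_independent_set}: each factor $\binom{m}{r_i}\le m^{r_i}$, so the product is at most $m^{r_1+\cdots+r_\ell}\le m^{\mu/\theta}$, and the number of choices of the composition $(\ell;r_1,\dots,r_\ell)$ is at most $\binom{\floor{\mu/\theta}+\ell-1}{\ell}\le 2^{O(\mu/\theta)}$ summed over $\ell$, which is absorbed into the $e^{O((\mu/\theta)\log m)}$ bound.

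Multiplying the $k$ per-matroid counts gives $|W| \le e^{O(k(\mu/\theta)\log m)}$, and substituting $\theta = \Theta(\epsilon^2/\log m)$ turns $\mu/\theta$ into $\Theta(\mu\log m/\epsilon^2)$, yielding $|W| = e^{O(\mu k \log^2 m/\epsilon^2)}$, as claimed. One subtlety to check is that the budget $(1-\epsilon/2)\mu$ must be split across the $k$ matroids, but this only helps: the total contribution being $<\mu$ forces each partial contribution $\sum_S r_j(S)y^j_S$ to be $<\mu$ as well (all terms are nonnegative), so the per-matroid count above is valid regardless of the split, and we never need to enumerate the split explicitly.

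The main obstacle — really the only nonroutine point — is the same one that appeared for the maximum independent set problem: justifying that the support of each $y^j$ may be taken within ${\mathcal F}_{{\mathbf M}_j}$, i.e.\ that restricting to flats loses nothing. This is already baked into the definition of $W$ and was argued before Claim~\ref{cl:maximum_independent_set} via the closure operation (replacing any $S$ in the support by its closure $\bar S$ keeps $r_j$ unchanged and only enlarges which elements are covered by the constraint $\sum_{S\ni e} y_S\ge \tilde c_e$); combined with Theorem~\ref{thm:kolliopoulos2005} applied to the dual LP in the flat-restricted form, this guarantees that $W$ as defined is genuinely an $(\epsilon,\epsilon/2)$-witness cover, so the counting bound above is all that remains. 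Everything else is the elementary token-distribution estimate already used twice in the paper.
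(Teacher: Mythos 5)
Your argument is correct and follows essentially the same route as the paper: treat each $y^j$ separately as a multiset of flats scaled by $\theta$, bound the sum of ranks in the multiset by $\mu/\theta$, reuse the counting estimate from the maximum independent set case to get $e^{O((\mu/\theta)\log m)}$ per matroid, multiply across the $k$ coordinates, and substitute $\theta = \Theta(\epsilon^2/\log m)$. The extra remarks you add (that the budget split across the $k$ matroids can only help, and that restricting supports to flats is already baked into $W$ via the closure argument plus Theorem~\ref{thm:kolliopoulos2005}) are sound clarifications but do not change the substance; you also helpfully avoid the paper's reuse of $k$ for the multiset size by introducing $\ell$.
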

\begin{proof}
To evaluate the size of $W$, as for the maximum independent set problem, we count each $y^j$ separately,
where we regard $y^j$ as a multiset in which each flat $S$ contributes $\theta$. 
Let $r_1, \ldots, r_k$ be the ranks of flats in $y^j$. 
Then we have $r_1 + \cdots + r_k < \mu / \theta$.
By the same argument as for the maximum independent set problem, the number of dual candidates for $y^j$ is at most $e^{O((\mu/\theta)\log m)}$.
By multiplying the numbers of candidates for the $k$ coordinates, we obtain the required result
(recall $\theta = \Theta(\epsilon^2/\log m)$).
\end{proof}
Therefore, we obtain the following.
\begin{corollary}
By taking $T = \Omega(\Delta_c k \log m \log (m/\epsilon)/\epsilon^3 p)$,
for the $k$-matroid intersection problem, 
Algorithm~\ref{alg:adaptive} outputs a $(1 - \epsilon)/k$-approximate solution with probability at least $1 - \epsilon$. 
\end{corollary}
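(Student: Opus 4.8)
The plan is to obtain this corollary as a direct instantiation of Theorem~\ref{thm:adaptive}, combining the witness-cover bound just established with the LP-relative $(1/k)$-approximation algorithm of Adamczyk et al.~\cite{adamczyk2016submodular} used for the rounding step at Line~\ref{line:5} of Algorithm~\ref{alg:adaptive}.

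First I would collect the ingredients. By the preceding claim, for every $\mu \ge 1$ the set $W$ (supports confined to the product of the flat families ${\mathcal F}_{{\mathbf M}_j}$, entries in $\theta\mathbb{Z}_+$ with $\theta = \Theta(\epsilon^2/\log m)$) is an $(\epsilon,\epsilon/2)$-witness cover of size at most $M^\mu$, where $M = \exp\!\left(O(k\log^2 m/\epsilon^2)\right)$; its validity as a witness cover rests on Theorem~\ref{thm:kolliopoulos2005}, after observing that the mass of any $\theta$-granular near-optimal dual solution can be pushed onto closures of its supporting sets without increasing the objective or losing feasibility, which keeps the support inside the flat families and the entries multiples of $\theta$. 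This is precisely the hypothesis of Theorem~\ref{thm:adaptive} with $\epsilon' = \epsilon/2$.

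Next I would apply Theorem~\ref{thm:adaptive} with $\delta = \epsilon$: it tells us that taking
\[
  T \;\ge\; \frac{\Delta_c}{\epsilon' p}\log\!\left(\frac{M}{\delta}\right)
      \;=\; \frac{2\Delta_c}{\epsilon p}\left(O\!\left(\tfrac{k\log^2 m}{\epsilon^2}\right)+\log\tfrac1\epsilon\right)
      \;=\; O\!\left(\frac{\Delta_c\,k\log m\,\log(m/\epsilon)}{\epsilon^3 p}\right)
\]
suffices for the pessimistic LP at Line~\ref{line:5} to have optimal value at least $(1-\epsilon)\tilde\mu$ with probability at least $1-\epsilon$, where $\tilde\mu$ is the omniscient LP-optimum. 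Then I would feed that pessimistic LP to the LP-relative $(1/k)$-approximation algorithm of Adamczyk et al.\ to get an integral feasible $x$ with $\underline{c}^\top x \ge \tfrac1k(1-\epsilon)\tilde\mu$; since the omniscient integer optimum of \eqref{eq:k-matroid_intersection} is at most $\tilde\mu$, this $x$ is $(1-\epsilon)/k$-approximate, as claimed. I would also record the query bound in passing: at each iteration the optimistic fractional solution satisfies $\sum_{e\in E}x_e \le r_j(E)$ for every $j$, hence $\sum_e x_e \le \hat r$, so the expected number of revealed elements is $O(\hat r)$ per iteration and $O(\hat r\,T)$ in total.

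The main work is not really in this corollary — it is an assembly step — but in the inputs it leans on. The delicate point is the validity and cardinality of $W$ despite the exponentially many constraints of \eqref{eq:k-matroid_intersection}: one uses Theorem~\ref{thm:kolliopoulos2005} as a pure existence statement about a $(1+\epsilon/2)$-approximate, $\theta$-granular dual solution, and must check the support can simultaneously be confined to the flat families so that the count (each $y^j$ viewed as a multiset of flats of total rank below $\mu/\theta$, with at most $\binom{m}{r}$ flats of rank $r$) yields $|W| = e^{O(\mu k\log^2 m/\epsilon^2)}$ — this is handled in the claim above. A secondary bookkeeping point, done in the display above, is reconciling the $\log M = O(k\log^2 m/\epsilon^2)$ term coming from the cover size with the $\log(1/\delta)$ term from the failure probability, so that the final bound takes the stated $\log m\,\log(m/\epsilon)$ shape.
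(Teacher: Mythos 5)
Your proposal is correct and follows essentially the same route as the paper: you invoke the witness cover $W$ with $\theta$-granular entries and support confined to $\prod_j {\mathcal F}_{{\mathbf M}_j}$ (validity via Theorem~\ref{thm:kolliopoulos2005}, cardinality bound $M=\exp(O(k\log^2 m/\epsilon^2))$ via the multiset-of-flats counting), plug $\epsilon'=\epsilon/2$ and $\delta=\epsilon$ into Theorem~\ref{thm:adaptive}, and round with the LP-relative $(1/k)$-approximation of Adamczyk et al.\ in the last step of Algorithm~\ref{alg:adaptive}. The only cosmetic difference is your explicit ``push mass to closures'' justification for confining the $\theta$-granular dual to flats, which the paper leaves implicit; your version is a bit more careful but yields the same cover.
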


\subsubsection{Matchoid}
The matchoid problem is a common generalization of the matching problem and the matroid intersection problem.
Let $(V, E)$ be a graph with $|V| = n$ and $|E| = m$,
${\mathbf M}_v = (\delta(v), \mathcal{I}_v)$ be a matroid whose rank function is $r_v \colon 2^{\delta(v)} \to \mathbb{Z}_+$ for each vertex $v \in V$,
and $\tilde c \in \mathbb{Z}_+^E$ be a stochastic edge weight.
The task is to find a maximum-weight subset $F \subseteq E$ of edges such that $F \cap \delta(v) \in \mathcal{I}_v$ for every $v \in V$.
A naive LP formulation is as follows:
\begin{align}
\begin{array}{lll}
\text{maximize} &\ \displaystyle\sum_{e \in E} \tilde c_e x_e \\
\text{subject to} &\ \displaystyle\sum_{e \in S} x_e \le r_v(S) &\quad (v \in V,~S \subseteq \delta(v)), \\
&\ x \in \{0, 1\}^E.
\end{array}
\end{align}
Lee, Sviridenko, and Vondr\'ak \cite{lee2013matroid}\footnote{Precisely,
the discussion is given via a reduction to the matroid matching problem, which preserves the variables and the feasible region.}
proposed an LP-relative $(2/3)$-approximation algorithm.
For each vertex $v \in V$,
the expected number of revealed edges incident to $v$ is $O(r_vT)$, where $r_v = r_v(\delta(v)) \ge \sum_{e \in \delta(v)} x_e$.

The only remaining issue is the number of iterations.
Since it has exponentially many constraints (in the maximum degree), we have to exploit its combinatorial structure to reduce the number of possibilities.
The dual problem is given by
\begin{align}
\begin{array}{lll}
\text{minimize} &\ \displaystyle\sum_{v \in V}\sum_{S \subseteq \delta(v)} r_v(S) y_{v, S} \\
\text{subject to} &\ \displaystyle\sum_{v \in V}\sum_{S \subseteq \delta(v)\colon e \in S} y_{v, S} \ge \tilde c_e &\quad (e \in E), \\
&\ y \in \mathbb{R}_+^{\mathcal{S}},
\end{array}
\end{align}
where $\mathcal{S} = \{\, (v, S) \mid v \in V,~S \subseteq \delta(v) \,\} \subseteq V \times 2^E$.
Similarly to the other matroid problems, we can restrict the support of $y$ so that, if $y_{v, S} > 0$, then $S \subseteq \delta(v)$ is a flat in $\mathbf{M}_v$.
Let $\mathcal{F}_v$ be the set of flats in $\mathbf{M}_v$ and $\mathcal{F} = \{\, (v, S) \mid v \in V,~S \in \mathcal{F}_v \,\}$.
Based on the (TDI/2)-ness of the matroid matching polyhedron due to Gijswijt and Pap~\cite{gijswijt2013algorithm},
the following set is an $(\epsilon, \epsilon)$-witness cover:
\begin{align}
	W = \left\{\, y \in \frac{1}{2} \mathbb{Z}_+^{\mathcal{S}} : \, \sum_{v \in V}\sum_{S \subseteq \delta(v)} r_v(S) y_{v, S} \le (1 - \epsilon) \mu,\ \supp(y) \subseteq {\mathcal F} \,\right\}.
\end{align}
Similarly to the maximum independent set case, $|W|$ can be bounded by $e^{O(\mu \log m)}$.
Thus we obtain the following.
\begin{corollary}
By taking $T = \Omega(\Delta_c \log (m/\epsilon)/\epsilon p)$, for the matchoid problem, \mbox{Algorithm~\ref{alg:adaptive}} outputs a $(2 - \epsilon)/3$-approximate solution with probability at least $1 - \epsilon$.
\end{corollary}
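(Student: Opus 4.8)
The plan is to derive the corollary by feeding the three problem-specific ingredients already assembled above into the general bound of Theorem~\ref{thm:adaptive}. The ingredients are: (i) the LP-relative $(2/3)$-approximation of Lee, Sviridenko, and Vondr\'ak~\cite{lee2013matroid} for the matroid matching LP, which, through the variable- and region-preserving reduction, applies verbatim to the pessimistic LP of the matchoid problem and thus realizes Line~\ref{line:5} of Algorithm~\ref{alg:adaptive} with $\alpha = 2/3$; (ii) the inequality $\sum_{e \in \delta(v)} x_e \le r_v(\delta(v)) = r_v$, valid for every feasible $x$ and vertex $v$, which bounds the expected number of edges at $v$ revealed per iteration by $r_v$ and keeps the algorithm from trivially revealing everything; and (iii) the witness cover $W$ consisting of half-integral, flat-supported dual vectors of objective at most $(1-\epsilon)\mu$.

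First I would certify that $W$ is a genuine $(\epsilon,\epsilon)$-witness cover for every $\mu \ge 1$ (and, instantiating the same construction with $\epsilon/2$, an $(\epsilon/2,\epsilon/2)$-witness cover of the same form). Property~2 is immediate. For Property~1, if some $y \in \mathbb{R}_+^{\mathcal S}$ satisfies $y^\top A \ge c^\top$ with $y^\top b \le (1-\epsilon)\mu$ for an integral $c$, then the dual optimum is at most $(1-\epsilon)\mu$, so by the total dual half-integrality of the matroid matching polyhedron (Gijswijt--Pap~\cite{gijswijt2013algorithm}) there is a half-integral feasible dual vector of objective at most $(1-\epsilon)\mu$; replacing each set $S$ in its support by its closure in $\mathbf{M}_v$ leaves every $r_v(S)$ unchanged, only enlarges the set of covered constraints, and preserves half-integrality, so the result lies in $W$. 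Next I would re-confirm $|W| = e^{O(\mu\log m)}$ exactly as in Claim~\ref{cl:maximum_independent_set}: viewing $2y$ as a multiset of pairs $(v,S)$ with $S$ a nonzero-rank flat of $\mathbf{M}_v$, the total rank over the multiset is below $2\mu$, so it has fewer than $2\mu$ members; the number of rank-$r$ flats of $\mathbf{M}_v$ is at most the number of independent sets of size $r$, hence at most $\binom{m}{r}$; summing over $v$ (with $n = O(m)$, isolated vertices being irrelevant) and over the at-most-$2\mu$ rank choices gives $e^{O(\mu\log m)}$. In particular $|W| \le M^\mu$ with $M = m^{O(1)}$ independent of $\mu$ and of $\epsilon$.

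Then I would apply Theorem~\ref{thm:adaptive} with the $(\epsilon/2,\epsilon/2)$-witness cover, i.e.\ $\epsilon' = \epsilon/2$, and with failure probability $\delta = \epsilon$. Since $\log M = O(\log m)$, the hypothesis $T \ge \frac{\Delta_c}{\epsilon' p}\log(M/\delta)$ holds for $T = \Omega(\Delta_c \log(m/\epsilon)/\epsilon p)$, and the theorem gives that with probability at least $1-\epsilon$ the pessimistic LP at Line~\ref{line:5} has optimal value at least $(1-\epsilon/2)\tilde\mu$, where $\tilde\mu$ is the omniscient LP-optimal value. Running the LP-relative $(2/3)$-approximation on this pessimistic LP yields an integral feasible $x$ with $\underline c^\top x \ge \frac{2}{3}(1-\frac{\epsilon}{2})\tilde\mu = \frac{2-\epsilon}{3}\tilde\mu$; since $\underline c \le c = \tilde c$ entrywise, the true value $c^\top x$ is at least that, and $\tilde\mu$ dominates the omniscient integral optimum, so $x$ is $(2-\epsilon)/3$-approximate with probability at least $1-\epsilon$.

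The routine parts are the two counting estimates and the rescaling of constants; the only load-bearing external facts are the total dual half-integrality of the matroid matching polyhedron and the applicability of the Lee--Sviridenko--Vondr\'ak rounding to the matchoid LP via the reduction. The mild friction point I anticipate is checking that the support-on-flats reduction for the dual can be carried out jointly with, and without destroying, half-integrality; but this is entirely parallel to the maximum independent set and non-bipartite matching analyses already completed, so it should go through without incident.
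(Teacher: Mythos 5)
Your proposal is correct and follows essentially the same route as the paper: the LP-relative $(2/3)$-approximation of Lee--Sviridenko--Vondr\'ak applied via the variable-preserving reduction, the bound on revealed edges via $\sum_{e\in\delta(v)} x_e \le r_v$, the Gijswijt--Pap (TDI/2)-ness to justify the half-integral flat-supported witness cover, and the multiset count giving $|W| = e^{O(\mu\log m)}$, all fed into Theorem~\ref{thm:adaptive}. Your explicit use of an $(\epsilon/2,\epsilon/2)$-witness cover to convert $\frac{2}{3}(1-\epsilon/2)$ into the stated $(2-\epsilon)/3$ is exactly the rescaling the paper uses implicitly, so the two arguments coincide.
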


\subsubsection{Degree Bounded Matroid}

Let $(V, E)$ be a hypergraph with $|V| = n$ whose maximum degree is $d = \max_{u \in V} |\delta(u)|$, and $b \colon E \to \mathbb{Z}_+$ give a capacity of each hyperedge.
Let ${\mathbf M} = (V, \mathcal{I})$ be a matroid whose rank function is $r \colon 2^V \to \mathbb{Z}_+$,
and $\tilde c \in \mathbb{Z}_+^V$ be a stochastic weight.
The degree bounded matroid problem can be represented as 
\begin{align}
\begin{array}{lll}
\text{maximize} &\ \displaystyle\sum_{u \in V} \tilde c_u x_u \\
\text{subject to} &\ \displaystyle\sum_{u \in e} x_u \le b(e) &\quad (e \in E), \\ 
&\ \displaystyle\sum_{u \in S} x_u \le r(S) &\quad (S \subseteq V), \\
&\ x \in \{0,1\}^V.
\end{array}
\end{align}
Kir{\'a}ly et al.~\cite{kiraly2012degree} proposed an 
algorithm that finds a (possibly infeasible) solution whose objective value is at least the LP-optimal value
and which violates each capacity constraint by at most $d - 1$. 
Since $\sum_{u \in V} x_u \le r(V) =: r$,
the expected number of revealed elements is $O(r T)$.

The only remaining issue is the number of iterations.
Since this system has exponentially many constraints, we have to exploit its combinatorial structure to reduce the number of possibilities.
The dual problem is given by
\begin{align}\label{eq:dual_DBM}
\begin{array}{lll}
\text{minimize} &\ \displaystyle\sum_{e \in E} b(e) y_e + \sum_{S \subseteq V} r(S) z_S =: \tau(y, z)\\
\text{subject to} &\ \displaystyle\sum_{e \in \delta(u)} y_e + \sum_{S \subseteq V \colon u \in S} z_S \ge \tilde c_u &\quad (u \in V), \\
&\ y \in \mathbb{R}_+^E,\ z \in \mathbb{R}_+^{2^V}.
\end{array}
\end{align}
Since the system is not TDI, we have to discretize the dual variables.
We could not prove the sparsity of $z$ but, by observing the sparsity of $y$ and using the matroid property, we can see that there exists a good discretization.

\begin{Claim}
Let $(y,z) \in \mathbb{R}_+^E \times \mathbb{R}_+^{2^V}$ be an optimal solution to \eqref{eq:dual_DBM} with $\tau(y, z) < \mu$.
Then, there exists a feasible solution $(y',z')$ with $\tau(y', z') < (1 - \epsilon/2)\mu$
whose entries are multiple of $\epsilon/2d$.
\end{Claim}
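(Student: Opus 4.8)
The plan is to mimic the structure of Lemma~\ref{lem:poly_nontdi_sparse} but applied only to the $y$-part of the dual vector, and then to handle the $z$-part via the matroid/TDI structure without discretizing it at all. First I would exploit sparsity of $y$. Let $(y,z)$ be an optimal solution to \eqref{eq:dual_DBM} with $\tau(y,z)<\mu$. Take a primal optimal solution $x\in\mathbb{R}_+^V$; we may assume $x_u>0$ only if $\tilde c_u\ge 1$, so $\sum_u x_u\le\sum_u\tilde c_u x_u<\mu$. Each hyperedge has at most $n$ vertices but, more usefully, each vertex lies in at most $d$ hyperedges, so $\sum_{e\in E}(\text{number of tight }y\text{-constraints})$ is controlled by $\sum_{e}\bigl(\sum_{u\in e}x_u\bigr)=\sum_u|\delta(u)|\,x_u\le d\sum_u x_u< d\mu$ once we restrict to $e$ with $x_e$-weight in their support; by complementary slackness this bounds $|\supp(y)|<d\mu$. (This is the same trick as the $k$-hypergraph matching claim with $k$ replaced by the degree bound $d$.)

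Next I would round up only $y$. By Lemma~\ref{lem:nontdi_sparse} applied with $\gamma=d$ (and the objective vector $b$ now playing the role of the cost coefficients), rounding each $y_e$ up to the next multiple of $\epsilon/2b(e)d$ gives $y'\ge y$ entrywise with $\sum_e b(e)y'_e\le\sum_e b(e)y_e+\epsilon\mu/2$. For this to yield entries that are multiples of $\epsilon/2d$ one wants the granularity to be a multiple of $\epsilon/2d$; since $b(e)\in\mathbb{Z}_+$, the multiple $\epsilon/2b(e)d$ divides $\epsilon/2d$, so after rounding, every $y'_e$ is automatically a multiple of $\epsilon/2d$ (being an integer multiple of $\epsilon/2b(e)d$ times $b(e)$ contributes an integer multiple of $\epsilon/2d$ to the objective — more precisely, I set $y'_e$ to the next multiple of $\epsilon/2b(e)d$ and note $b(e)y'_e$ is then a multiple of $\epsilon/2d$; if one instead wants $y'_e$ itself a multiple, round to the next multiple of $\epsilon/2d$, losing at most $\epsilon b(e)/2d\le \epsilon/2$ per coordinate, hence at most $|\supp(y)|\cdot\epsilon/(2d)\le\mu\epsilon/2$ total using $|\supp(y)|<d\mu$). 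Either way the $y$-part loses only $\epsilon\mu/2$ in the objective and still satisfies the constraints, since increasing $y$ only helps feasibility.

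The key point for the $z$-part is that we do not round it. Once $y'$ is fixed, consider the residual requirement $\tilde c_u - \sum_{e\in\delta(u)}y'_e$ at each vertex; wherever this is positive it is bounded above by the original $\tilde c_u - \sum_{e\in\delta(u)}y_e \le \sum_{S\ni u}z_S$, so the original $z$ is already feasible for the residual system $\sum_{S\ni u}z_S\ge (\tilde c_u-\sum_{e\in\delta(u)}y'_e)^+$ — and that residual right-hand side is integral. The matroid rank constraint system $\{\sum_{u\in S}x_u\le r(S)\}$ is TDI (Edmonds), so its dual with an \emph{integral} right-hand side has an integral optimal $z'\in\mathbb{Z}_+^{2^V}$ of objective value no larger than the original $\sum_S r(S)z_S$. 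Then $(y',z')$ is feasible for \eqref{eq:dual_DBM}, every entry of $y'$ is a multiple of $\epsilon/2d$, every entry of $z'$ is an integer (hence a fortiori a multiple of $\epsilon/2d$, since $\epsilon/2d\le 1$ can be arranged or one simply notes integers lie in $\frac{\epsilon}{2d}\mathbb{Z}_+$ when $\frac{2d}{\epsilon}\in\mathbb{Z}$; more carefully, replace the target granularity by $\min\{\epsilon/2d,1\}$ or round integers down — they are already on a coarser grid), and $\tau(y',z')\le\tau(y,z)+\epsilon\mu/2<(1-\epsilon/2)\mu$ once we start from $\tau(y,z)\le(1-\epsilon)\mu$.

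I expect the main obstacle to be the bookkeeping that makes \emph{both} parts land on the single common grid $\frac{\epsilon}{2d}\mathbb{Z}_+$ while simultaneously (i) keeping the total objective loss at $\epsilon\mu/2$ and (ii) not destroying feasibility — in particular one must be careful that after replacing $z$ by an integral optimum of the residual system, no vertex constraint is re-violated, which is exactly why I phrase the residual right-hand side as $(\tilde c_u - \sum_{e\in\delta(u)}y'_e)^+$ with the plus-part and invoke TDI-ness only after $y'$ is frozen. A secondary nuisance is reconciling $\epsilon/2d\le 1$: if $2d/\epsilon$ is not an integer one should state the granularity as $\Theta(\epsilon/d)$ (as the corollary does) and absorb the rounding of the integral $z'$ coordinates down to the nearest multiple of that granularity, which costs nothing since integers already sit above any such grid point by less than one grid step only when the step exceeds $1$, contradiction — so in fact integral $z'$ needs no adjustment whenever the grid step is at most $1$, which holds for all $d\ge 1$ and $\epsilon\le 2$.
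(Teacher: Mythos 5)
Your strategy matches the paper's: bound the $b$-weighted support of $y$ via complementary slackness against a primal optimum $x$, round $y$ up onto $\frac{\epsilon}{2d}\mathbb{Z}_+$, freeze $y'$, and recover a gridded $z'$ for the residual $z$-system using total dual integrality of the matroid polytope. Two slips should be fixed, though. First, in the rounding bound you correctly identify the per-coordinate objective loss as $b(e)\cdot\frac{\epsilon}{2d}$, but then total it as $|\supp(y)|\cdot\frac{\epsilon}{2d}$, dropping the $b(e)$ weights (and the side claim $\epsilon b(e)/2d\le\epsilon/2$ is unjustified since $b(e)\le d$ need not hold). The quantity your complementary-slackness computation actually controls is $\sum_{e\colon y_e>0} b(e) = \sum_{e\colon y_e>0}\sum_{u\in e}x_u \le d\sum_u x_u < d\mu$, and this $b$-weighted count is exactly what is needed to conclude $\sum_{e\colon y_e>0}b(e)\cdot\frac{\epsilon}{2d}<\frac{\epsilon\mu}{2}$; the weaker $|\supp(y)|<d\mu$ you pass to does not justify the sum. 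Second, the residual right-hand side $(\tilde c_u - \sum_{e\in\delta(u)}y'_e)^+$ is not integral: it is an integer minus a multiple of $\frac{\epsilon}{2d}$, hence a multiple of $\frac{\epsilon}{2d}$ (taking $2d/\epsilon\in\mathbb{Z}$). The correct invocation of TDI is on the residual system scaled by $2d/\epsilon$, which yields an integral dual optimum that scales back to $z'\in\frac{\epsilon}{2d}\mathbb{Z}_+^{2^V}$; with that, your closing discussion about embedding integers onto the grid is unnecessary, since $z'$ is gridded rather than integral. Both issues are local fixes and the underlying approach is the paper's.
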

\begin{proof}
Let $x \in \mathbb{R}_+^V$ be a primal LP-optimal solution.
By complementary slackness, $y_e > 0$ only if the constraint $\sum_{u \in e} x_u \le b(e)$ holds in equality. 
Thus, by summing up, we have 
\begin{align}
	\sum_{e\colon y_e > 0} b(e) = \sum_{e\colon y_e > 0} \sum_{u \in e} x_u \le \sum_{e \in E} \sum_{u \in e} x_u \le d \sum_{u \in V} x_u < d \mu.
\end{align}

Now we round up each entry of $y$ to the minimum multiple of $\epsilon/2 d$ to obtain $y'$.
This increases objective value at most $(\epsilon/2d) \sum_{e\colon y_e > 0} b(e) < \epsilon \mu/2$.
Therefore the objective value of $(y',z)$ is at most $(1 - \epsilon/2) \mu$.
To discretize $z$, we consider the minimization problem with respect to $z$:
\begin{align}
\begin{array}{lll}
\text{minimize} &\ \displaystyle\sum_{S \subseteq V} r(S) z_S \\
\text{subject to} &\ \displaystyle\sum_{S \subseteq V \colon u \in S} z_S \ge \tilde c_u - \sum_{e \in \delta(u)} y'_e &\quad (u \in V).
\end{array}
\end{align}
This problem is the dual of the maximum independent set problem whose cost vector is a multiple of $\epsilon/2 d$.
Therefore, by the TDIness of the maximum independent set problem,
there exists an optimal solution $z'$ whose entries are multiples of $\epsilon/2 d$.
Thus, $(y',z')$ is feasible by construction and has an objective value of at most $(1 - \epsilon/2) \mu$.
\end{proof}

As for the other matroid problems, we can assume that $\supp(z)$ is a set of flats.
For $\mu \geq 1$,
let us define a set $W \subseteq \mathbb{R}_+^E \times \mathbb{R}_+^{2^V}$ by
\begin{align}
  W = \left\{\, (y,z) \in \frac{\epsilon}{2d}\left(\mathbb{Z}_+^E \times \mathbb{Z}_+^{2^V}\right) \, : \ \tau(y, z) \leq \left(1 - \frac{\epsilon}{2}\right) \mu,\ \supp(z) \subseteq {\mathcal F}_{\mathbf M} \,\right\}.
\end{align}
By construction, $W$ is an $(\epsilon,\epsilon/2)$-witness cover,
and we can evaluate its size as follows.

\begin{Claim}
$|W| = e^{O\left(\mu d \log n / \epsilon\right)}$.
\end{Claim}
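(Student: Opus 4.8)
The plan is to bound $|W|$ by counting the candidates for $y$ and for $z$ separately and then multiplying, exactly as in the proofs of Claim~\ref{cl:maximum_independent_set} (maximum independent set) and the $k$-hypergraph matching counting argument. Throughout I would use the standard estimate $\binom{N+K}{K}\le\bigl(e(N+K)/K\bigr)^{K}$ already invoked in Lemmas~\ref{lem:tdi}--\ref{lem:poly_nontdi_nonsparse}, and, after the usual preprocessing sanctioned by Assumption~\ref{asmp:constraint}, assume $b(e)\ge 1$ for every $e\in E$ and $r(S)\ge 1$ for every nonempty flat $S$ (rank-$0$ flats contribute nothing and may be dropped).

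First I would bound the number of choices for $y$. Every entry is a multiple of $\epsilon/2d$, say $y_e=\frac{\epsilon}{2d}k_e$ with $k_e\in\mathbb{Z}_+$; since $b(e)\ge 1$ and $\sum_{e}b(e)y_e\le\tau(y,z)\le(1-\epsilon/2)\mu<\mu$, we get $\frac{\epsilon}{2d}\sum_e k_e<\mu$, i.e.\ $\sum_e k_e<2d\mu/\epsilon$. The number of such integer vectors is the number of ways to place fewer than $2d\mu/\epsilon$ tokens into $|E|$ slots, which is at most $\binom{|E|+\lfloor 2d\mu/\epsilon\rfloor}{\lfloor 2d\mu/\epsilon\rfloor}$. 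Here the maximum degree enters: double counting gives $|E|\le\sum_{e\in E}|e|=\sum_{u\in V}|\delta(u)|\le dn$, so this binomial is $\exp\!\bigl(O\bigl(\tfrac{d\mu}{\epsilon}\log(1+\tfrac{n\epsilon}{\mu})\bigr)\bigr)=\exp\!\bigl(O(\tfrac{d\mu}{\epsilon}\log n)\bigr)$, using $\mu\ge1$ and $\epsilon<2$.

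Next I would bound the choices for $z$ by the ``multiset of flats'' trick of Claim~\ref{cl:maximum_independent_set}. Writing $z_S=\frac{\epsilon}{2d}\ell_S$ with $\ell_S\in\mathbb{Z}_+$ and viewing $z$ as a multiset in which a flat $S$ appears $\ell_S$ times, the bound $\sum_S r(S)z_S<\mu$ becomes $\rho_1+\cdots+\rho_t<2d\mu/\epsilon$, where $\rho_1,\dots,\rho_t$ are the ranks of the flats in the multiset; since each $\rho_i\ge1$ we also get $t<2d\mu/\epsilon$. As in Claim~\ref{cl:maximum_independent_set}, the number of flats of rank $\rho$ in $\mathbf{M}$ is at most the number of independent sets of size $\rho$, hence at most $\binom{n}{\rho}\le n^{\rho}$. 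Summing over the choice of $t$, of a composition $(\rho_1,\dots,\rho_t)$ of an integer $s<2d\mu/\epsilon$ into positive parts (there are $2^{s-1}$ such compositions of $s$ in total), and of flats of the prescribed ranks, the number of candidates for $z$ is at most $1+\sum_{s=1}^{\lfloor 2d\mu/\epsilon\rfloor}2^{s-1}n^{s}\le\exp\!\bigl(O(\tfrac{d\mu}{\epsilon}\log n)\bigr)$.

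Multiplying the two estimates gives $|W|=\exp\!\bigl(O(\mu d\log n/\epsilon)\bigr)$, as claimed. The only genuine subtlety is obtaining a $\log n$ rather than a $\log|E|$ in the count for $y$ (a priori a hypergraph on $n$ vertices has up to $2^{n}$ edges); this is precisely where the degree bound is used, through $|E|\le dn$. Everything else reduces to the token-counting estimate and the flat-versus-independent-set bound already established earlier, so I expect no further obstacle.
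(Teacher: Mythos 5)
Your proof is correct and takes essentially the same route as the paper's: count the candidates for $y$ by distributing $O(d\mu/\epsilon)$ tokens among the hyperedges and the candidates for $z$ as a multiset of flats with weight $\epsilon/2d$, then multiply. Your use of the double-counting bound $|E|\le dn$ is in fact slightly more careful than the paper's passing claim that $|E|=O(dn)=O(n^2)$ (the latter implicitly assumes $d\le n$), but both give the stated $e^{O(\mu d\log n/\epsilon)}$.
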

\begin{proof}
To evaluate the size of $W$, we separately count $y$ and $z$.
The number of candidates for $y$ is evaluated as similar to Lemma~\ref{lem:tdi} by distributing $2 d \mu/\epsilon$ tokens to $|E| = O(dn) = O(n^2)$ components, and is bounded by $e^{O(\mu d \log n / \epsilon)}$.
The number of candidates for $z$ is evaluated as similar to Claim~\ref{cl:maximum_independent_set} by counting multisets with weight $\epsilon/2 d$, and is bounded by $e^{O(\mu d \log n / \epsilon)}$.
By multiplying these two numbers of candidates, we obtain the required result.
\end{proof}
Therefore, we obtain the following.
\begin{corollary}
For the degree bounded matroid problem with maximum degree $d$, by taking $T = \Omega(\Delta_c d \log (n/\epsilon) / \epsilon^2 p)$, Algorithm~\ref{alg:adaptive} outputs a $(1 - \epsilon)$-approximate solution that violates each constraint at most $d - 1$ with probability at least $1 - \epsilon$. 
\end{corollary}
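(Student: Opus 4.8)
The plan is to combine the witness-cover bound established in the preceding claim with Theorem~\ref{thm:adaptive} and the rounding algorithm of Kir{\'a}ly et al.~\cite{kiraly2012degree}. First I would instantiate Theorem~\ref{thm:adaptive} with the $(\epsilon,\epsilon/2)$-witness cover just constructed, so $\epsilon' = \epsilon/2$, and since that claim shows $|W| = e^{O(\mu d \log n/\epsilon)}$, we may take $M = e^{O(d\log n/\epsilon)}$, i.e. $\log M = O(d\log n/\epsilon)$. Setting the failure probability $\delta = \epsilon$, the hypothesis $T \ge \frac{\Delta_c}{\epsilon' p}\log(M/\delta)$ of Theorem~\ref{thm:adaptive} becomes $T = \Omega\!\left(\frac{\Delta_c}{\epsilon p}\left(\frac{d\log n}{\epsilon} + \log\frac{1}{\epsilon}\right)\right) = \Omega\!\left(\frac{\Delta_c d\log(n/\epsilon)}{\epsilon^2 p}\right)$, which is exactly the claimed iteration bound (using $\log(n/\epsilon) = \log n + \log(1/\epsilon)$ to absorb the lower-order term). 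Hence, with probability at least $1-\epsilon$, the pessimistic LP reached after $T$ iterations has optimal value at least $(1-\epsilon)\tilde\mu$, where $\tilde\mu$ denotes the omniscient LP-optimal value.

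Next I would carry out the last step of Algorithm~\ref{alg:adaptive} using the algorithm of Kir{\'a}ly et al.~\cite{kiraly2012degree}: applied to the pessimistic LP, it returns an integral vector $x$ whose objective value with respect to the pessimistic cost satisfies $\underline{c}^\top x \ge$ (pessimistic LP-optimal value) $\ge (1-\epsilon)\tilde\mu$, which respects every matroid constraint $\sum_{u \in S} x_u \le r(S)$ and violates each capacity constraint $\sum_{u \in e} x_u \le b(e)$ by at most $d-1$. Since $c_u \ge \underline{c}_u$ entrywise (equality on revealed coordinates, $c_u \ge c_u^-$ on unrevealed ones) and $x$ is integral, its true value satisfies $c^\top x \ge \underline{c}^\top x \ge (1-\epsilon)\tilde\mu$; and $\tilde\mu$ upper-bounds the omniscient integral optimum of \eqref{eq:packing}. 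Therefore $x$ is a $(1-\epsilon)$-approximate solution violating each constraint by at most $d-1$, with probability at least $1-\epsilon$. As noted before the statement, the expected number of revealed elements per iteration is $O(r)$ with $r = r(V)$, so the algorithm remains nontrivial for small-rank matroids; this is not needed for the approximation guarantee but completes the picture of the $T$ chosen above.

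I expect the genuinely delicate points, rather than the wrap-up itself, to be twofold. First, the witness cover here is only $(\epsilon,\epsilon/2)$ rather than $(\epsilon,\epsilon)$, because discretizing $y$ to multiples of $\epsilon/2d$ (using the sparsity of $y$ from complementary slackness) and then the TDI-ness of the maximum-independent-set polytope for $z$ costs a factor of $2$ in the second witness-cover parameter; one must therefore feed $\epsilon' = \epsilon/2$ into Theorem~\ref{thm:adaptive}, which is precisely the source of the extra $1/\epsilon$ in the iteration count compared with the purely TDI problems. Second, the approximation is measured against the pessimistic LP value, and the Kir{\'a}ly et al.\ rounding is only ``LP-relative'' in the relaxed sense that it matches the LP value at the cost of a bounded capacity-constraint violation; hence the final guarantee must be phrased with the ``$\le d-1$ violation'' caveat rather than as an ordinary feasible $(1-\epsilon)$-approximation, and care is needed to confirm that the matroid constraints themselves are not violated so that ``feasibility'' fails only in the intended, controlled way.
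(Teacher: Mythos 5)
Your proof is correct and follows essentially the same path as the paper: plug the $(\epsilon,\epsilon/2)$-witness cover with $\log M = O(d\log n/\epsilon)$ and $\delta = \epsilon$ into Theorem~\ref{thm:adaptive}, then invoke the Kir\'aly et al.\ rounding, which matches the pessimistic LP value while respecting the matroid constraints and violating only the capacity constraints by at most $d-1$, and finish with $c \geq \underline{c}$. One small misattribution in your closing commentary: the extra $1/\epsilon$ factor in $T$ relative to the TDI applications does not come from feeding $\epsilon' = \epsilon/2$ rather than $\epsilon$ into Theorem~\ref{thm:adaptive} (that only costs a constant factor of $2$); it comes from the discretization granularity $\epsilon/2d$, which inflates $\log M$ from $O(\log n)$ to $O(d\log n/\epsilon)$ by multiplying the number of tokens by $\Theta(d/\epsilon)$.
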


\subsection{Stable Set Problems}
We finally show applications to stable set problems.
In this section,
$n$ and $m$ denote the numbers of vertices and edges, repsectively, in the graph in question.

\subsubsection{Stable Set in Some Perfect Graphs}

We assume that the stability number $\alpha$ (the maximum size of a stable set) is relatively small to ensure that Algorithm~\ref{alg:adaptive} does not reveal all the vertices.
By the Tur\`an theorem~\cite{turan1954theory}, the average degree is required to be relatively large.

Let $(V, E)$ be a graph and $\tilde c \colon V \to \mathbb{Z}_+$ be a stochastic vertex weight.
The maximum stable set problem can be represented as
\begin{align}
\begin{array}{lll}
\text{maximize} &\ \displaystyle\sum_{u \in V} \tilde c_u x_u \\
\text{subject to} &\ x_u + x_v \le 1 &\quad ((u, v) \in E), \\
&\ x \in \{0,1\}^V.
\end{array}
\end{align}
The LP relaxation of this system is half-integral.
However, this is not helpful because the number of revealed vertices can be large: there is a solution $x_u = 1/2$ for all $u \in V$, which corresponds to revealing half of the vertices in expectation.

We instead consider the following formulation, which introduces the \emph{clique inequalities}:
\begin{align}
\begin{array}{lll}
\text{maximize} &\ \displaystyle\sum_{u \in V} c_u x_u \\
\text{subject to} &\ \displaystyle\sum_{u \in C} x_u \le 1 &\quad (C \in \mathcal{C}), \\
&\ x_u \in \{0,1\}^V,
\end{array}
\end{align}
where $\mathcal{C}$ is the set of maximal cliques.
A graph is \emph{perfect} if the LP relaxation of the above system is TDI.
If we assume that the graph is perfect, Algorithm~\ref{alg:adaptive} has an approximation ratio of $(1 - \epsilon)$ with high probability for sufficiently large $T$, and the number of revealed vertices is $O(\alpha T)$ in expectation.

The dual problem is given by
\begin{align}
\begin{array}{lll}
\text{minimize} &\ \displaystyle\sum_{C \in \mathcal{C}} y_C \\
\text{subject to} &\ \displaystyle\sum_{C \in \mathcal{C}\colon u \in C} y_C \ge c_u &\quad (u \in V), \\
&\ y \in \mathbb{R}_+^{\mathcal{C}}.
\end{array}
\end{align}
If the number of maximal cliques is $O(n^k)$ for some fixed constant $k$, we immediately see that the required number of iterations is $O(k \log (n/\epsilon)/\epsilon p)$.
A perfect graph may have exponentially many maximal cliques in general, but the following graph classes have at most polynomially many maximal cliques.

\begin{itemize}
  \item 
    If a graph is chordal, it has only linear number of maximal cliques.

  \item
    If a graph has a bounded clique number (i.e., the size of cliques are bounded by a constant $k$), the number of cliques is at most $\binom{n}{k} = O(n^k)$.
    This includes a graph class that can be characterized by forbidden minors and subgraphs.
\end{itemize}

\subsubsection{Stable Set in t-Perfect Graphs}

Another tractable graph class for the stable set problem is t-perfect graphs.
A graph $(V, E)$ is \emph{$t$-perfect} if the relaxation of the following formulation is integral, i.e., it has an integral optimal solution:
\begin{align}
\label{eq:tperfect}
\begin{array}{lll}
\text{maximize} &\ \displaystyle\sum_{u \in V} \tilde c_u x_u \\
\text{subject to} &\ x_u + x_v \le 1 &\quad ((u,v) \in E), \\
&\ \displaystyle\sum_{u \in C} x_u \le \floor{\frac{|C|}{2}} &\quad (C \in \mathcal{C}), \\
&\ x_u \in \{0,1\}^V,
\end{array}
\end{align}
where $\mathcal{C}$ is the set of odd cycles.
We assume that the graph is t-perfect. 
Then, Algorithm~\ref{alg:adaptive} has an approximation ratio of $(1 - \epsilon)$ with high probability for sufficiently large $T$,
and the number of revealed vertices is $O(\alpha T)$.

The only remaining issue is the number of iterations.
Since the system \eqref{eq:tperfect} is not required to be TDI\footnote{A graph is \emph{strongly t-perfect} if the system in \eqref{eq:tperfect} is TDI.
Any strongly t-perfect graph is t-perfect, but the converse is open.}, we have to discretize the dual variables.
We use Theorem~\ref{thm:kolliopoulos2005}. 
Let $\theta = \Theta\left(\frac{\epsilon^2}{\log n}\right)$.
The corresponding dual problem is given by
\begin{align}
\label{eq:tperfectdual}
\begin{array}{lll}
\text{minimize} &\ \displaystyle\sum_{e \in E} y_e + \sum_{C \in \mathcal{C}} \floor{\frac{|C|}{2}} \tilde z_C \\
\text{subject to} &\ \displaystyle\sum_{e \in \delta(u)} y_e + \sum_{C \in \mathcal{C}\colon u \in C} z_C \ge \tilde c_u &\quad (u \in V), \\
&\ y \in \mathbb{R}_+^E,\ z \in \mathbb{R}_+^\mathcal{C}.
\end{array}
\end{align}
We regard $z_C$ as a multiset in which each odd cycle $C$ contributes $\theta$.
Let $c_1, \ldots, c_k$ be the sizes of each odd cycles.
We then have $c_1 + \cdots + c_k < \mu / \theta$.
We define the witness cover by
\begin{align}
	W = \left\{\, (y, z) \in \theta \left(\mathbb{Z}_+^E \times \mathbb{Z}_+^{\mathcal{C}}\right) \, : \
        \sum_{e \in E} y_e + \sum_{C \in \mathcal{C}} \floor{\frac{|C|}{2}} z_C\le \left(1 - \frac{\epsilon}{2}\right) \mu \,\right\}.
\end{align}
\begin{Claim}
$|W| \le e^{O((\mu/\theta) \log n)}$.
\end{Claim}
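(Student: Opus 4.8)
The plan is to bound $|W|$ by counting the candidates for $y$ and for $z$ separately and then multiplying, mirroring the proof of Claim~\ref{claim:nonbipartite} for the non-bipartite matching case, with odd cycles playing the role of the odd vertex sets $S \in \mathcal{V}_{\rm odd}$.

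First I would handle the $y$-part. Every $y \in \theta\mathbb{Z}_+^E$ appearing in $W$ satisfies $\sum_{e \in E} y_e \le (1-\epsilon/2)\mu < \mu$, so writing $y_e = \theta k_e$ with $k_e \in \mathbb{Z}_+$ we get $\sum_{e \in E} k_e < \mu/\theta$. The number of such $y$ is thus the number of ways to distribute fewer than $\mu/\theta$ tokens among the $m = |E|$ edges, which by the same estimate as in Lemma~\ref{lem:tdi} is $e^{O((\mu/\theta)\log m)}$; since $m = O(n^2)$ this is $e^{O((\mu/\theta)\log n)}$.

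Next I would handle the $z$-part exactly as in Claim~\ref{claim:nonbipartite}. Regard $z$ as a multiset of odd cycles, each cycle $C$ occurring with multiplicity $z_C/\theta$, and let $c_1, \ldots, c_k$ be the sizes of its members (with repetition). Since each $c_i \ge 3$ is odd we have $c_i \le 3\lfloor c_i/2 \rfloor$, and the objective bound gives $\theta\sum_i \lfloor c_i/2 \rfloor \le (1-\epsilon/2)\mu < \mu$, so $\sum_i c_i < 3\mu/\theta$. For a fixed size-tuple the number of odd cycles of length $c_i$ in an $n$-vertex graph is at most $n^{c_i}$ (choose and cyclically order $c_i$ vertices), so the number of multisets realizing those sizes is at most $\prod_i n^{c_i} = n^{\sum_i c_i} < n^{3\mu/\theta}$, while the number of size-tuples $(k; c_1,\ldots,c_k)$ with $\sum_i c_i < 3\mu/\theta$ is at most $2^{O(\mu/\theta)}$. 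Hence the number of choices of $z$ is $2^{O(\mu/\theta)}\cdot n^{O(\mu/\theta)} = e^{O((\mu/\theta)\log n)}$, and multiplying the two bounds yields $|W| \le e^{O((\mu/\theta)\log n)}$.

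The calculations are routine: they are the binomial-coefficient estimates already carried out in Lemma~\ref{lem:tdi} and Claim~\ref{claim:nonbipartite}. The one spot needing a little care, and the closest thing to an obstacle, is that $z$ is indexed by odd \emph{cycles} rather than by arbitrary vertex subsets, so in place of the bound $\binom{n}{s} \le n^s$ one must invoke the bound $n^\ell$ on the number of cycles of length $\ell$, and one must keep straight the $\theta$-scaling of $z$ versus plain token counting. Neither is a genuine difficulty.
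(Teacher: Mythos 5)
Your proposal is correct and follows essentially the same route as the paper: the paper also counts $y$ and $z$ separately, bounding the $y$-part by token distribution over $m=O(n^2)$ edges and deferring the $z$-part to the multiset argument of Claim~\ref{claim:nonbipartite}. You merely make explicit the details the paper leaves implicit (the bound $n^{\ell}$ on the number of cycles of length $\ell$ in place of $\binom{n}{s}$, and the $\theta$-scaling), and these are handled correctly.
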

\begin{proof}
To evaluate the size of $W$, we count $y$ and $z$ separately.
The number of candidates for $y$ is clearly $e^{O((\mu/\theta) \log m)}$ (and $m = O(n^2)$),
while the number of candidates for $z$ is bounded by $e^{O((\mu/\theta) \log n)}$ as the similar argument to Claim~\ref{claim:nonbipartite}.
\end{proof}
Therefore, we obtain the following.
\begin{corollary}
By taking $T = \Omega(\Delta_c \log n \log (n/\epsilon)/\epsilon^3 p)$,
for the t-stable set problem, Algorithm~\ref{alg:adaptive} outputs a $(1 - \epsilon)$-approximate solution with probability at least $1 - \epsilon$.
\end{corollary}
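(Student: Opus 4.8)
The plan is to derive the corollary by a direct application of Theorem~\ref{thm:adaptive} to the witness cover $W$ just exhibited, using the size bound from the preceding Claim, after first disposing of the last step of Algorithm~\ref{alg:adaptive}. First I would record what $t$-perfectness buys us: by definition the polyhedron cut out by the edge inequalities and odd-cycle inequalities of \eqref{eq:tperfect} is integral, so a basic optimal solution to the pessimistic LP (the LP relaxation of the pessimistic version of \eqref{eq:tperfect}) is already an integral feasible solution to the stable set instance whose weight equals the pessimistic LP-optimal value. Thus the final line of Algorithm~\ref{alg:adaptive} can be realized as an LP-relative $1$-approximation, and it suffices to show that after $T$ iterations the pessimistic LP-optimal value is at least $(1-\epsilon)\tilde\mu$ with probability at least $1-\epsilon$; this is exactly what Theorem~\ref{thm:adaptive} delivers once we supply a suitable witness cover.

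The hypothesis of Theorem~\ref{thm:adaptive} asks for an $(\epsilon,\epsilon')$-witness cover of size at most $M^\mu$ for every $\mu\ge 1$. Since \eqref{eq:tperfect} is only known to be $t$-perfect (integral) and not TDI, I cannot enumerate integral dual vectors, so I would instead invoke Theorem~\ref{thm:kolliopoulos2005} on the dual \eqref{eq:tperfectdual}: here the primal LP has $n$ variables, so the relevant dimension parameter is $n$ and $\theta=\Theta(\epsilon^2/\log n)$. Consequently, whenever some $y$ feasible to \eqref{eq:tperfectdual} (for a given realized $c$) has objective at most $(1-\epsilon)\mu$, there is a feasible solution with all entries in $\theta\mathbb{Z}_+$ of objective at most $(1+\epsilon/2)(1-\epsilon)\mu\le(1-\epsilon/2)\mu$, which lies in $W$. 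This establishes property~1 of Definition~\ref{def:witness} with $\epsilon'=\epsilon/2$, and property~2 holds by the defining inequality of $W$.

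Next I would carry out the counting (this is the content of the Claim). Split a vector of $W$ into its edge part $y\in\theta\mathbb{Z}_+^E$ and its odd-cycle part $z\in\theta\mathbb{Z}_+^{\mathcal C}$. For $y$, the objective bound forces $\sum_e y_e<\mu$, i.e.\ at most $\mu/\theta$ tokens of size $\theta$ distributed among $|E|=O(n^2)$ coordinates, which by the estimate in Lemma~\ref{lem:tdi} gives $e^{O((\mu/\theta)\log n)}$ choices. For $z$, viewing $z/\theta$ as a multiset of odd cycles of sizes $c_1,\dots,c_k\ge 3$, the objective constraint gives $\sum_i\lfloor c_i/2\rfloor\le\mu/\theta$, hence $\sum_i c_i<3\mu/\theta$ and $k\le\mu/\theta$; since there are at most $\binom{n}{c}$ odd cycles on a fixed $c$-vertex set, the multiset count from Claim~\ref{claim:nonbipartite} yields again $e^{O((\mu/\theta)\log n)}$. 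Multiplying, $|W|=e^{O((\mu/\theta)\log n)}$, so one may take $M=e^{O(\log^2 n/\epsilon^2)}$; substituting into \eqref{eq:iterations} with $\epsilon'=\epsilon/2$ and $\delta=\epsilon$ gives $T=\Omega\!\bigl(\tfrac{\Delta_c}{\epsilon p}(\log^2 n/\epsilon^2+\log(1/\epsilon))\bigr)=\Omega(\Delta_c\log n\log(n/\epsilon)/\epsilon^3 p)$, as claimed, and then Theorem~\ref{thm:adaptive} together with the $1$-approximation from $t$-perfectness finishes the proof.

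I expect the only delicate points to be parameter bookkeeping rather than any real obstacle: making sure the dimension entering Theorem~\ref{thm:kolliopoulos2005} is the number of primal variables $n$ (not the exponentially many constraints), so that $\theta=\Theta(\epsilon^2/\log n)$, and checking that the discretization loss is absorbed into the slack $\epsilon\mapsto\epsilon/2$ precisely because $(1+\epsilon/2)(1-\epsilon)\le 1-\epsilon/2$. The rest is a faithful instantiation of the counting pattern already used for the non-bipartite matching and maximum independent set cases.
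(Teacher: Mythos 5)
Your proposal is correct and follows essentially the same approach as the paper: invoke the integrality of the $t$-perfect polytope to make the final step an LP-relative $1$-approximation, apply the Kolliopoulos--Young discretization (Theorem~\ref{thm:kolliopoulos2005}) with $\theta=\Theta(\epsilon^2/\log n)$ on the dual \eqref{eq:tperfectdual}, count the edge part and odd-cycle part of the witness cover separately in the pattern of Claim~\ref{claim:nonbipartite} to obtain $|W|=e^{O((\mu/\theta)\log n)}$, and substitute into Theorem~\ref{thm:adaptive}. Your reading of the dimension parameter in Theorem~\ref{thm:kolliopoulos2005} (the number of primal variables $n$, not the constraint count) and the arithmetic $\sum_i\lfloor c_i/2\rfloor\le\mu/\theta\Rightarrow\sum_i c_i<3\mu/\theta$ are both right and in fact a touch more carefully stated than the paper's own claim.
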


\section{Vertex Sparsification Lemma}
\label{sec:sparsification}	

\subsection{Vertex Sparsification Lemma}

For the (unweighted) stochastic matching problem, Assadi et al.~\cite{assadi2016stochastic} proposed a procedure called \emph{vertex sparsification}, which reduces the number of vertices proportional to the maximum matching size $\mu$ while approximately preserving any matchings of size $\nu = \omega(1)$ with high probability.
This procedure is very useful as a preprocessing step for this problem since it makes $n/\mu = O(1)$, and so the required number of iterations becomes constant.

Here, we extend this procedure to an independence system on a $k$-uniform hypergraph and improve the result to preserve \emph{any} independence set with high probability without assuming $\nu = \omega(1)$.
In next section. we improve the performances of the algorithms for the bipartite matching problem, $k$-hypergraph matching problem, and $k$-column sparse packing integer programming problem by using this lemma.

In general, sparsification procedures are kinds of \emph{kernelization} procedure, which is studied in the area of parametrized complexity~\cite{downey2012parameterized}.
In particular, our and Assadi et al.~\cite{assadi2016stochastic}'s procedures are similar to the one in \cite{chitnis2016kernelization}, which aims to reduce space complexity of packing problems in streaming setting, but the conducted analyses and the provided guarantees are both different.

\medskip

Let $(V, E)$ be a $k$-uniform hypergraph and $(E, \mathcal{I})$ be an independence system
(which is a nonempty, downward-closed set system, i.e., ${\mathcal I} \neq \emptyset$, and $X \subseteq Y \in {\mathcal I} \implies X \in {\mathcal I}$),
whose rank function $r \colon 2^E \to {\mathbb Z}_+$ is defined by $r(S) = \max\{\, |I| : I \subseteq S,~I \in {\mathcal I} \,\}$.
We focus on the following special case of the stochastic packing integer programming problem \eqref{eq:packing} in this section:
\begin{align}\label{eq:sparsifiable}
\begin{array}{lll}
\text{maximize} &\ \displaystyle\sum_{e \in E} \tilde c_e x_e \\
\text{subject to} &\ \displaystyle\sum_{e \in S} x_e \le r(S) &\quad (S \subseteq E), \\
&\ x \in \{0,1\}^E.
\end{array}
\end{align}
Note that the constraint is equivalent to $\supp(x) \in {\mathcal I}$,
and this formulation still includes the $k$-column sparse PIP \eqref{eq:k-column_sparse}
(and hence all the matching problems shown in Section~\ref{sec:matching_problems}) as follows:
let $V = \{1, \ldots, n\}$ and $E = \{1, \ldots, m\}$
such that each hyperedge $j \in E$ is associated with a subset $\{\, i \in V : a_{ij} \neq 0 \,\}$
(if the size is less than $k$, add arbitrary vertices $i$ with $a_{ij} = 0$),
and define ${\mathcal I} = \{\, S \subseteq E : \sum_{j \in S} a_{ij} \leq b_i \ (\forall i \in V)\,\}$.

\begin{algorithm}[t]
\caption{Vertex sparsification.}
\label{alg:sparsification}
\begin{algorithmic}[1]
\State{Assign a random color in $\{1, \ldots, \frac{\beta(k,\epsilon,\delta) k^2 s}{2 \delta} \}$ to each vertex, where $\beta(k,\epsilon,\delta) = \frac{2 e^{\epsilon/k} \log (1 / \delta)}{\epsilon}$.}
\State{Return all colorful hyperedges.}
\end{algorithmic}
\end{algorithm}

Our procedure is shown in Algorithm~\ref{alg:sparsification}, which is a kind of color coding.
Let $s \in \mathbb{Z}_+$ be an upper bound on $r = r(E)$,
and $\epsilon, \delta \in (0, 1)$ be parameters for the accuracy and the probability, respectively.
It first assigns a random color in $\{1, \ldots, n^\circ\}$ to each vertex, where $n^\circ = \beta(k, \epsilon, \delta) k^2 s / \delta$ with $\beta(k, \epsilon, \delta) = 2 e^{\epsilon/k} \log (1/\delta)/\epsilon$.
It then returns all ``colorful'' hyperedges that consists entirely of differently colored vertices.
This yields an independence system on the color class consisting from $n^\circ = \Theta(s)$ vertices.

\begin{lemma}[Vertex Sparsification Lemma]
\label{lem:sparsification}
Suppose that $n \ge 2 k$. 
Then, after Algorithm~\ref{alg:sparsification},
for any independent set $I \in \mathcal{I}$ in the original instance,
there exists an independent set $I^\circ \subseteq I$ of size at least $(1 - \epsilon) |I|$
in the sparsified instance with probability at least $1 - \delta$.
\end{lemma}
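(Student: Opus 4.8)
The plan is to prove the statement \emph{pointwise}: fix an arbitrary $I\in\mathcal{I}$ and show that, with probability at least $1-\delta$ over the random coloring chosen by Algorithm~\ref{alg:sparsification}, at most an $\epsilon$-fraction of the hyperedges of $I$ fail to be colorful. The desired $I^\circ$ is then simply the set of colorful hyperedges of $I$; since $I^\circ\subseteq I\in\mathcal{I}$ and $\mathcal{I}$ is downward-closed we have $I^\circ\in\mathcal{I}$, and all edges of $I^\circ$ are colorful, so $I^\circ$ is a genuine independent set of the sparsified instance, which is the restriction $(E^\circ,\{\,S\in\mathcal{I}:S\subseteq E^\circ\,\})$ of $(E,\mathcal{I})$ to the set $E^\circ$ of colorful hyperedges. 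The key structural point, which is exactly what lets us drop the $\nu=\omega(1)$ hypothesis of Assadi et al.~\cite{assadi2016stochastic}, is that this $\epsilon$-fraction will be controlled by Markov's inequality, whose bound turns out not to depend on $|I|$.

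First I would bound, for a single hyperedge $e$ (a set of exactly $k$ vertices), the probability that $e$ is \emph{not} colorful. Two fixed vertices receive the same color with probability $1/n^\circ$, so a union bound over the $\binom{k}{2}$ vertex-pairs inside $e$ gives $\Pr[e\text{ not colorful}]\le\binom{k}{2}/n^\circ\le k^2/(2n^\circ)$; alternatively one can keep the exact value $\prod_{i=1}^{k-1}(1-i/n^\circ)$ and use $n\ge 2k$ to obtain a slightly tighter estimate, but this is not needed. Write $\ell:=|I|$; the case $\ell=0$ is trivial, and otherwise $1\le\ell\le r(E)\le s$ since $I$ is independent. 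Let $X$ count the non-colorful hyperedges of $I$. By linearity of expectation — which crucially does \emph{not} require the events ``$e$ not colorful'' ($e\in I$) to be independent, and they generally are not, as the hyperedges of $I$ may share vertices — we get $\mathbb{E}[X]=\sum_{e\in I}\Pr[e\text{ not colorful}]\le\ell k^2/(2n^\circ)$.

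Then I would apply Markov's inequality: $\Pr[X\ge\epsilon\ell]\le\mathbb{E}[X]/(\epsilon\ell)\le k^2/(2\epsilon n^\circ)$, and observe that this quantity is independent of $\ell$. Substituting $n^\circ=\beta(k,\epsilon,\delta)\,k^2 s/\delta$ with $\beta(k,\epsilon,\delta)=2e^{\epsilon/k}\log(1/\delta)/\epsilon$ makes it at most $\delta$ after a routine computation, the factors $e^{\epsilon/k}$, $\log(1/\delta)$, and $s\ge 1$ providing the slack. Hence, with probability at least $1-\delta$, at most $\epsilon\ell$ hyperedges of $I$ are non-colorful, so $|I^\circ|\ge(1-\epsilon)|I|$, which is what was claimed.

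The main point requiring care is not a genuine obstacle but an insight: one must bound $X$ by Markov rather than by a Chernoff-type tail. A Chernoff bound would give a failure probability of order $\exp(-\Omega(\epsilon^2\ell))$, which is small only for large $\ell$, and that is the usual source of a restriction like $\nu=\omega(1)$; Markov instead gives the $\ell$-free bound $k^2/(2\epsilon n^\circ)$, at the price of letting $n^\circ$ scale like $1/\delta$ rather than $\log(1/\delta)$. The only additional thing to check is that distinct original hyperedges are not identified when passing to the $n^\circ$ color classes, so that $I^\circ=I\cap E^\circ$ really has cardinality $|I|-X$; this holds because the ground set of the sparsified instance is $E^\circ$ itself — only the vertex universe is contracted to the color classes — and it is already implicit in the restriction formulation above.
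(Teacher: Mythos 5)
There is a genuine gap in the proposal. You treat the sparsified instance as the restriction of $(E,\mathcal{I})$ to the set $E^\circ$ of colorful hyperedges, so that downward-closedness makes every colorful subset of $I$ automatically independent. But the sparsification contracts the \emph{vertex} universe to the $n^\circ$ color classes --- that is exactly what makes $n^\circ$, rather than $n$, the relevant parameter in the iteration bound afterwards --- and the sparsified independence constraint lives on color classes, not on the original vertices. Two colorful hyperedges $e_1,e_2\in I$ that are vertex-disjoint can still share a color, and then they cannot \emph{both} belong to an independent set of the sparsified instance. Concretely for $k=2$: take $e_1=\{u_1,u_2\}$, $e_2=\{u_3,u_4\}$, and suppose $u_1,u_3$ receive color $c_1$ and $u_2,u_4$ receive color $c_2$. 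Both edges are colorful, so your $X=0$; yet both map to the same pair of color classes, and at most one can appear in a matching on the contracted graph. Your random variable $X$ records only within-hyperedge color collisions and is silent about these between-hyperedge collisions, which is precisely what must be controlled.

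That control is what the paper's case split is for. When $|I|\le\beta$, it bounds the probability that \emph{all} $\approx k|I|$ vertices touched by $I$ receive distinct colors (the falling-factorial estimate), which rules out both kinds of collision at once. When $|I|\ge\beta$, that event is too rare, so the paper subsamples to $k^2|I|/\epsilon$ color classes, applies a Chernoff bound to the number of ``good'' classes, and picks one representative vertex per good class so the surviving hyperedges cannot collide. A first-moment variant of your idea can be attempted --- bound the expected number of colliding \emph{vertex pairs} touched by $I$, not merely non-colorful hyperedges, and apply Markov, observing that deleting at most one hyperedge per colliding pair restores independence --- but the resulting bound works out to $\delta/(4 e^{\epsilon/k}\log(1/\delta))$, which is $\le\delta$ only when $\log(1/\delta)\ge 1/4$. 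Note this same residual constraint on $\delta$ already lurks, unremarked, in the ``routine computation'' at the end of your own argument. So a Markov-only route is not obviously impossible, but it requires counting a different quantity than you do and, as written down naively, does not cover the full range $\delta\in(0,1)$ that the paper's proof handles.
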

\begin{proof} 
Let $\nu = |I|$. 
For notational simplicity, we denote by $\beta = \beta(k, \epsilon, \delta)$.
We now make the following case analysis.

\paragraph{Case 1: $\nu \le \beta$ (the rank of $I$ is small).}

If all vertices incident to $I$ have different colors, the size of $I$ is preserved after the mapping. 
Since the number of the incident vertices is at most $k \nu$, the probability that this has occurred is at least
\begin{align}
	\frac{n^\circ (n^\circ - 1) \cdots (n^\circ - k \nu + 1)}{n^{\circ k \nu}} 
    \ge \exp \left( - \frac{k^2 \nu^2}{n^\circ} \right) 
   \ge \exp \left( - \frac{\delta \nu^2}{\beta s} \right) \ge e^{-\delta} \ge 1 - \delta.
\end{align}
Here, the first inequality follows from the falling factorial approximation (the next lemma), and the second inequality follows from $\nu \le r \le s$ and $\nu \le \beta$.
\begin{lemma}[Falling Factorial Approximation]
\label{thm:fallingfactorial}
\begin{align}
	\frac{n (n-1) \cdots (n-k+1)}{n^k} \ge \exp \left( -\frac{k^2}{n} \right).
\end{align}
\end{lemma}
\begin{proof}
Recall that $\log (1 - x) \ge - x/(1-x)$ for all $x \in (0,1)$. 
The logarithm of the above is
\begin{align}
	\quad &\sum_{i=1}^{k-1} \log \left(1 - \frac{i}{n}\right) \ge -\sum_{i=1}^{k-1} \frac{i}{n - i} \ge -\sum_{i=1}^{k-1} \frac{i}{n - k} 
    = -\frac{k (k-1)}{2(n - k)} \ge -\frac{k^2}{n}. \quad \qedhere
\end{align}
\end{proof}

\paragraph{Case 2: $\nu \ge \beta$ (the rank of $I$ is large).}

We further reduce the number of colors by mapping each color class to $\{1, \ldots, k^2 \nu/\epsilon \}$. (Note that $k^2 \nu/\epsilon \le n^\circ$ since $\beta \ge 1/\epsilon$.)
We say that a color class $c$ is \emph{good} if some vertex in color $c$ is covered by some hyperedge $e \in I$, and otherwise we say that $c$ is \emph{bad}.

For each color class $c$,
let $X_c$ be the indicator of the event that $c$ is bad,
i.e., $X_c = 1$ if $c$ is bad and $X_c = 0$ otherwise.
Then $\Pr(X_c = 1) = (1 - \epsilon/k^2 \nu)^{k \nu} \le e^{-\epsilon/k}$. 
Therefore $\E\left[\sum_c X_c\right] \le e^{-\epsilon/k} k^2 \nu / \epsilon$. 
Since $X_c$ are negatively correlated random variables, we can apply the Chernoff bound~\cite{panconesi1997randomized}:
\begin{align}
  \Pr\left(\sum_c X_c \ge \frac{k^2 \nu}{\epsilon} - \left(1 - \frac{\epsilon}{k}\right) k \nu\right)
  &= \Pr\left(\sum_c X_c \ge \left(1 - \frac{\epsilon}{k} + \frac{\epsilon^2}{k^2}\right) \frac{k^2 \nu}{\epsilon} \right) \notag \\
  &\le \Pr\left(\sum_c X_c \ge \left(1 + \frac{\epsilon^2}{2 k^2}\right) e^{-\epsilon/k} \frac{k^2 \nu}{\epsilon} \right) \notag \\
  &\le \exp\left(-\epsilon e^{-\epsilon/k} \frac{\nu}{2}\right) \le \exp\left(-\epsilon e^{-\epsilon/k} \frac{\beta}{2}\right) = \delta, 
\end{align}
where the first inequality follows from $(1+x^2/2) e^{-x} \le 1 - x + x^2$ and the last equality follows from the definition of $\beta$.
Therefore, there are at least $(1 - \epsilon/k) k\nu$ good color classes with high probability.

For each good color class, we select one covered vertex and remove all other vertices.
The number of removed vertices is at most $\epsilon \nu$, so at most $\epsilon \nu$ hyperedges in the independent set are removed.
The remaining hyperedges form an independent set of size at least $(1 - \epsilon) \nu$.
\end{proof}

\begin{remark}
The second part is a simple extension of Assadi et al.~\cite{assadi2016stochastic}.
Since they only analyzed this case, $\nu = \omega(1)$ was required.
\end{remark}

\subsection{Usage of Vertex Sparsification Lemma}

Here, we describe how to use the vertex sparsification lemma to improve the performance of Algorithms~\ref{alg:adaptive} and \ref{alg:nonadaptive}.
For simplicity, we only describe the result for Algorithm~\ref{alg:adaptive}, as Algorithm~\ref{alg:nonadaptive} can be handled using the same argument.

Let $(V, E)$ be a $k$-uniform hypergraph with $|V| = n$ and $|E| = m$ and $(E, \mathcal{I})$ be an independence system.
We consider the problem \eqref{eq:sparsifiable}, where we assume the following.
\begin{enumerate}
\item There exists an LP-relative $\alpha$-approximation algorithm.
\item The number of iterations required to guarantee $(1 - \epsilon) \alpha$-approximation with probability at least $1 - \delta$ is bounded by $T(\log (n/\mu), \epsilon, \delta)$.
\end{enumerate}

The method is shown in Algorithm~\ref{alg:speedup}, where $\epsilon, \delta \in (0, 1)$ are parameters for the accuracy and the probability, respectively, and $c_{\max} = \max_j c_j^+$.
We first estimate the maximum size $s$ of the independent sets such that $\alpha s \le r \le s$, which is computed via LP relaxation.
We then apply Algorithm~\ref{alg:sparsification} to obtain a sparsified instance, and finally apply Algorithm~\ref{alg:adaptive} or \ref{alg:nonadaptive} with an LP-relative $\alpha$-approximation algorithm to obtain a solution.
We now analyze the performance of this procedure.

\begin{algorithm}[tb]
\caption{Speedup by vertex sparsification.}
\label{alg:speedup}
\begin{algorithmic}[1]
\State{Estimate the size $s$ of maximum independent set such that $\alpha s \le r \le s$.}
\State{Sparsify the instance by Algorithm~\ref{alg:sparsification} with accuracy parameter $\epsilon' = \epsilon/(1 + c_\text{max})$ and probability parameter $\delta' = \delta/4$.}
\State{Run Algorithm~\ref{alg:adaptive} or \ref{alg:nonadaptive} with an LP-relative $\alpha$-approximation algorithm
by setting $T = T(O(\log (k / p \alpha \epsilon' \delta')), \epsilon', \delta')$.}\label{line:3''}
\end{algorithmic}
\end{algorithm}

\begin{theorem}
Algorithm~\ref{alg:speedup} finds a $(1 - \epsilon) \alpha$-approximate solution with probability at least $1 - \delta$.
\end{theorem}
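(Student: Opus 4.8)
The plan is to chain together the three sources of error introduced by the subroutines in Algorithm~\ref{alg:speedup}: the sparsification loses a $(1-\epsilon')$ factor on the omniscient optimum, the $\alpha$-approximation and iteration bound of Algorithm~\ref{alg:adaptive} lose another $(1-\epsilon')\alpha$ factor, and finally passing a solution from the sparsified instance back to the original instance costs one more small factor because queried colorful hyperedges have weights that differ from $c_j^+$ by at most $c_{\max}$. Choosing $\epsilon' = \epsilon/(1+c_{\max})$ is exactly calibrated so that $(1-\epsilon')(1-\epsilon') \cdot (1/(1+c_{\max}))$-type bookkeeping collapses to a single $(1-\epsilon)\alpha$ guarantee; the probability budget $\delta$ is split as $\delta/4$ for the sparsification event and $\delta'=\delta/4$ (plus slack) for the iteration bound, so a union bound keeps the total failure probability below $\delta$.

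First I would set up notation: let $\tilde\mu$ be the omniscient LP-optimal value of the original instance \eqref{eq:sparsifiable}, and let $I^*$ be an integral optimal solution (an independent set) witnessing, up to the integrality gap, the value $\alpha\tilde\mu$ — more carefully, since we only have an LP-relative $\alpha$-approximation, I would instead track the LP value and apply the $\alpha$-approximation only at the very end. Apply Lemma~\ref{lem:sparsification} with parameters $\epsilon'$, $\delta'$ and the bound $s$ computed in Line~1 (noting $n\ge 2k$ holds or is trivial otherwise): with probability at least $1-\delta'$, every independent set $I$ of the original instance has a sub-independent-set of size at least $(1-\epsilon')|I|$ surviving in the sparsified instance. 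Applying this to a maximum-weight independent set and using that weights are nonnegative integers bounded by $c_{\max}$, the sparsified omniscient optimum is at least $(1-\epsilon')\tilde\mu - (\text{error from dropped hyperedges})$; I would argue the weighted version by observing that dropping at most $\epsilon'|I|$ hyperedges, each of weight at most $c_{\max}$, while the surviving value is at least the total minus that, gives sparsified value $\ge \tilde\mu - \epsilon' c_{\max}\cdot(\text{something}) \ge (1-\epsilon)\tilde\mu$ after the $\epsilon'=\epsilon/(1+c_{\max})$ substitution — this is the step that needs the most care.

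Next, on the sparsified instance (which has $n^\circ = \Theta(s) = \Theta(k^2 s \log(1/\delta')/\epsilon')$ vertices, so $n^\circ/\mu = O(k/(p\alpha\epsilon'\delta'))$ after noting $\mu \ge \alpha s / \text{const}$-type bounds on the sparsified optimum), I invoke the assumed guarantee: Algorithm~\ref{alg:adaptive} with an LP-relative $\alpha$-approximation, run for $T = T(O(\log(k/p\alpha\epsilon'\delta')),\epsilon',\delta')$ iterations, outputs with probability at least $1-\delta'$ a feasible solution of value at least $(1-\epsilon')\alpha$ times the sparsified omniscient LP value. Here I would cite Theorem~\ref{thm:adaptive} together with the witness-cover bound for \eqref{eq:sparsifiable} (which, being of the form $\sum_{e\in S}x_e \le r(S)$, enjoys a witness cover of size $e^{O(\mu\log n^\circ)}$ by the same counting as in Claim~\ref{cl:maximum_independent_set}), so that $T$ scales like $\Delta_c \log(n^\circ/\mu\epsilon')/\epsilon' p$, and substituting $n^\circ/\mu = O(k/p\alpha\epsilon'\delta')$ gives the stated form.

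Finally I would assemble the bound. A feasible solution to the sparsified instance is, by construction, a set of colorful hyperedges forming an independent set in $(E,\mathcal I)$, hence feasible in the original instance; its true weight (after the queries conducted by Algorithm~\ref{alg:adaptive}) is at least its pessimistic value, which is at least $(1-\epsilon')$ times the sparsified LP value via the witness-cover argument, itself at least $(1-\epsilon')\alpha$ of the sparsified omniscient optimum, itself at least $(1-\epsilon')\tilde\mu$. Multiplying, $(1-\epsilon')^3\alpha \ge (1-3\epsilon')\alpha \ge (1-\epsilon)\alpha$ once $3\epsilon' \le \epsilon$, which holds comfortably for $\epsilon' = \epsilon/(1+c_{\max})$ when $c_{\max}\ge 2$ (and the $c_{\max}\le 1$ unweighted case is the classical one where $\epsilon'=\epsilon/2$ suffices). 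The failure events — sparsification not preserving the optimum, and Algorithm~\ref{alg:adaptive} not reaching its guarantee — each occur with probability at most $\delta' = \delta/4$, so by the union bound the procedure succeeds with probability at least $1-\delta$. The main obstacle, as flagged, is the weighted sparsification accounting: Lemma~\ref{lem:sparsification} is stated for unweighted independence systems (it preserves cardinality, not weight), so I must reduce the weighted case to it, e.g.\ by a standard bucketing of weights into $O(\log c_{\max})$ scales or by the cruder bound that dropping an $\epsilon'$-fraction of hyperedges costs at most an $\epsilon' c_{\max}$-fraction of the weight relative to the unit lower bound $\tilde c_e \ge 1$ on support elements — and verifying that this cruder bound still yields $(1-\epsilon)$ after the $\epsilon'$ rescaling is exactly where the constant $1+c_{\max}$ enters.
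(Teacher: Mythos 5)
Your overall decomposition is the right one and mirrors the paper's: sparsify with accuracy $\epsilon' = \epsilon/(1+c_{\max})$ and probability budget $\delta'=\delta/4$, run Algorithm~\ref{alg:adaptive} on the sparsified instance, and chain the losses. You also correctly flag the central subtlety — that Lemma~\ref{lem:sparsification} preserves cardinality, not weight, and that the crude ``dropped hyperedges cost at most $c_{\max}$ each, versus a lower bound of $1$ per surviving hyperedge'' argument is what drives the choice $\epsilon'=\epsilon/(1+c_{\max})$; that is exactly the paper's Claim~\ref{cl:sparsified_bound}. However, there are two genuine gaps.

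First, your proposal does not actually establish the bound on $\log(n^\circ/\tilde\mu^\circ)$ that is needed to justify plugging $T = T(O(\log(k/p\alpha\epsilon'\delta')),\epsilon',\delta')$ into Algorithm~\ref{alg:adaptive}. You invoke ``$\mu\ge\alpha s/\text{const}$-type bounds on the sparsified optimum,'' but this is not available a priori: the sparsified instance has rank $r^\circ\ge(1-\epsilon')r \ge (1-\epsilon')\alpha s$ by Lemma~\ref{lem:sparsification}, but its \emph{realized} optimum $\tilde\mu^\circ$ can be much smaller than $r^\circ$ when $p$ is small — indeed your own displayed ratio $O(k/(p\alpha\epsilon'\delta'))$ already carries a $p$ in the denominator, which must come from somewhere. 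The paper closes this with an explicit case analysis: when $r^\circ\ge 8\log(1/\delta')/p$, a Chernoff bound on the i.i.d.\ realization events shows $\tilde\mu^\circ\ge p r^\circ/2$ with probability at least $1-\delta'$, whence $n^\circ/\tilde\mu^\circ$ is bounded; when $r^\circ$ is smaller, one uses the trivial integer bound $\tilde\mu^\circ\ge 1$. Without that concentration step and case split, one cannot tie $\tilde\mu^\circ$ to $s$ and hence cannot justify the iteration count, so this is a real hole rather than a detail to be filled in mechanically. It also means the union bound is over \emph{four} failure events, not two, which is precisely why $\delta'=\delta/4$.

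Second, your final assembly is internally inconsistent with your own earlier accounting. You write $(1-\epsilon')^3\alpha\ge(1-3\epsilon')\alpha\ge(1-\epsilon)\alpha$ (requiring $c_{\max}\ge 2$), but the sparsification step does not lose a $(1-\epsilon')$ factor of the \emph{value}: by the crude argument you yourself sketch, it loses $(1-c_{\max}\epsilon')$ (drop at most $\epsilon'|I|$ edges, each worth $\le c_{\max}$, against $\tilde\mu\ge|I|$ from minimality of the support). The correct multiplication is $(1-\epsilon')(1-c_{\max}\epsilon')\ge 1-(c_{\max}+1)\epsilon' = 1-\epsilon$, which is what the paper does and requires no side-condition on $c_{\max}$. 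Substituting your ``$(1-\epsilon')$ from sparsification'' back in undoes the whole point of choosing $\epsilon'=\epsilon/(1+c_{\max})$; be consistent and carry the $(1-c_{\max}\epsilon')$ factor through the end.
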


\begin{proof}
Let $r = r(E)$ be the rank of the original independence system and $\tilde\mu$ the optimal value of the original instance \eqref{eq:sparsifiable}, which is a random variable determined by nature. 
Also, let $r^\circ$ be the rank of the sparsified instance, and $\tilde \mu^\circ$ be the optimal value of the sparsified instance, which is also a random variable.

\begin{Claim}\label{cl:sparsified_bound}
\begin{align}
  \Pr\left(r^\circ \ge (1 - \epsilon')r\right) &\geq 1 - \delta',\label{eq:sparsified_bound_1}\\
  \Pr\left(\tilde \mu^\circ \ge (1 - c_{\max} \epsilon') \tilde \mu \right) &\geq 1 - \delta'.\label{eq:sparsified_bound_2}
\end{align}
\end{Claim}

\begin{proof}
The first inequality \eqref{eq:sparsified_bound_1} immediately follows from Lemma~\ref{lem:sparsification},
and we focus on the second \eqref{eq:sparsified_bound_2}.
Fix a realization of $\tilde{c}$,
and let $x \in \{0, 1\}^E$ be an optimal solution to \eqref{eq:sparsifiable}
such that $I = \supp(x) \in {\mathcal I}$ is minimal.
By Lemma~\ref{lem:sparsification},
there exists an independent set $I^\circ \subseteq I$ of size $|I^\circ| \geq (1 - \epsilon')|I|$
in the sparsified instance with probability $1 - \delta'$.
Let $x^\circ \in \{0, 1\}^E$ be the vector with $\supp(x^\circ) = I^\circ$,
whose restriction to the sparsified hyperedge set is a feasible solution to the sparsified instance.
We then have
\begin{align}
  \tilde\mu^\circ \geq \tilde c^\top x^\circ \geq \tilde c^\top x -  c_{\max} \epsilon'|I| \geq \tilde\mu - c_{\max} \epsilon' \tilde\mu,
\end{align}
where the last inequality follows from the minimality of $I = \supp(x)$
(for each $j \in I$, we must have $\tilde{c}_j \geq 1$, and hence $\tilde\mu = \tilde c^\top x \geq \supp(x) = |I|$).
\end{proof}

By Claim \ref{cl:sparsified_bound},
we have $r^\circ \ge (1 - \epsilon')r$ and $\tilde \mu^\circ \ge (1 - c_{\max} \epsilon') \tilde \mu$
with probability at least $1 - 2\delta'$.
Under this event, by using Algorithm~\ref{alg:adaptive} in Line~\ref{line:3''} of Algorithm~\ref{alg:speedup},
we obtain a solution whose objective value is at least $(1 - \epsilon') \tilde \mu^\circ \ge (1 - (c_{\max} + 1) \epsilon') \tilde \mu = (1 - \epsilon) \tilde \mu$ with probability at least $1 - \delta'$
(and hence with probability at least $1 - 3\delta'$ in total).

The remaining issue is the number of iterations.
That is, for $\beta' = \beta(k, \epsilon', \delta') = 2e^{\epsilon'/k}\log(1/\delta')/\epsilon'$ and $n^\circ = \beta'k^2s/\delta'$, we prove
\begin{align}\label{eq:sparsified_ratio}
  \log\frac{n^\circ}{\tilde\mu^\circ} = O\left(\log \frac{k}{p \alpha \epsilon' \delta'}\right),
\end{align}
with probability at least $1 - \delta'$,
which implies that we succeed with probability at least $1 - 4\delta' = 1 - \delta$ through Algorithm \ref{alg:speedup}.
We make a case analysis.

\paragraph{Case 1. $r^\circ \ge 8\log(1 / \delta')/p$ (the rank of the sparsified instance is large).}

We evaluate the objective value of the independent set in the sparsified instance that corresponds to the maximum independent set in the original instance.
Since each element in the sparsified independent set contributes at least $1$ with probability at least $p$, we can apply the Chernoff bound
\begin{align}
	\Pr\left( \tilde \mu^\circ \ge \frac{p r^\circ}{2} \right) \ge \Pr\left( \sum_{i=1}^{r^\circ} X_i \ge \frac{p r^\circ}{2} \right)
    \ge 1 - e^{-p r^\circ / 8} \ge 1 - \delta',
\end{align}
where $X_i$ ($i = 1, \ldots, r^\circ$) are i.i.d.\ random variables following the Bernoulli distribution with probability $p$.
Under this event ($\tilde \mu^\circ \ge p r^\circ / 2$), we have 
\begin{align}
	\frac{n^\circ}{\tilde \mu^\circ} \le \frac{2 n^\circ}{p r^\circ} \le \frac{4 n^\circ}{p r} \le \frac{4 n^\circ}{p \alpha s} = \frac{4 \beta' k^2}{p \alpha \delta'} = \frac{8e^{\epsilon'/k}\log(1/\delta')}{p\alpha\epsilon'\delta'},
\end{align}
where the second inequality follows from $r^\circ \geq (1 - \epsilon')r \geq r/2$ (because $\epsilon' = \epsilon/(1 + c_{\max}) \leq 1/2$), and
the third from $r \geq \alpha s$.
Since $e^{\epsilon'/k} = O(1)$ and $\log(1/\delta') \leq 1/\delta'$, we have \eqref{eq:sparsified_ratio}.

\paragraph{Case 2. $r^\circ \le 8\log(1 / \delta')/p$ (the rank of the sparsified instance is small).}
We have
\begin{align}
	\frac{n^\circ}{\tilde \mu^\circ} \le n^\circ \leq \frac{\beta' k^2 r}{\alpha \delta'} \le \frac{2 \beta' k^2 r^\circ}{\alpha \delta'} \le \frac{16 \beta' k^2 \log(1/\delta')}{p \alpha \delta'},
\end{align}
where the first inequality follows from $\tilde\mu^\circ \geq 1$ (because it is an integer with $\tilde\mu^\circ \geq (1 - \epsilon' c_{\max})\tilde\mu > 0$),
the second from $r \geq \alpha s$, and the third from $r^\circ \geq r/2$.
This leads to \eqref{eq:sparsified_ratio} similarly in Case 1.
\end{proof}

The sizes of witness covers of bipartite matching, $k$-hypergraph matching, and $k$-column-sparse packing integer programming depend on $n/\mu$. 
Thus these are improved by using this technique.

\begin{corollary}\label{cor:bipartite_stronger}
For the bipartite matching problem with $c_j = O(1)$ for all $j$, there is an algorithm that conducts $O(\log(1/\epsilon p) / \epsilon p)$ queries per vertex and finds $(1 - \epsilon)$-approximate solution with probability at least $1 - \epsilon$.
\end{corollary}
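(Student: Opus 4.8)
The plan is to obtain the stronger bipartite matching guarantee by specializing the vertex-sparsification speedup of Algorithm~\ref{alg:speedup}: bipartite matching is exactly the case $k = 2$ of the sparsifiable formulation \eqref{eq:sparsifiable} with $\mathcal{I}$ the family of matchings, so it suffices to verify that this problem meets the two interface conditions of Section~\ref{sec:sparsification} and then to plug in the problem-specific constants.

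First I would assemble the required ingredients. Since the LP relaxation of bipartite matching is TDI (K\H{o}nig--Egerv\'ary), there is an LP-relative $\alpha$-approximation algorithm with $\alpha = 1$; moreover the optimistic polyhedron is integral, so one may always take a basic optimal solution $x$ at Line~\ref{line:2} of Algorithm~\ref{alg:adaptive}, which is the indicator of a matching, and hence every iteration reveals at most one edge incident to each vertex. For the iteration count, Lemma~\ref{lem:tdi} supplies an $(\epsilon,\epsilon)$-witness cover of size $e^{O(\mu\log(1+n/\mu))}$ for every $\mu \ge 1$, so Theorem~\ref{thm:adaptive} (with $\epsilon' = \epsilon$) lets us take $T(\log(n/\mu),\epsilon,\delta) = O\!\left(\tfrac{\Delta_c}{\epsilon p}\bigl(\log(1+n/\mu) + \log(1/\delta)\bigr)\right)$, which is exactly the form of bound the speedup framework expects.

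Now run Algorithm~\ref{alg:speedup} with $\delta = \epsilon$, so that $\epsilon' = \epsilon/(1 + c_{\max}) = \Theta(\epsilon)$ (using $c_{\max} = O(1)$) and $\delta' = \delta/4 = \Theta(\epsilon)$. Its performance guarantee directly yields a $(1-\epsilon)\alpha = (1-\epsilon)$-approximate solution with probability at least $1 - \delta = 1 - \epsilon$, so only the query count remains. By Line~\ref{line:3''} the algorithm uses $T = T\bigl(O(\log(k/(p\alpha\epsilon'\delta'))),\epsilon',\delta'\bigr)$ iterations; with $k = 2$, $\alpha = 1$ and $\epsilon' = \delta' = \Theta(\epsilon)$ the argument of $T$ is $O(\log(1/\epsilon^2 p)) = O(\log(1/\epsilon p))$, so this is $T\bigl(O(\log(1/\epsilon p)),\epsilon',\delta'\bigr) = O\!\left(\tfrac{\Delta_c}{\epsilon p}\bigl(\log(1/\epsilon p) + \log(1/\epsilon)\bigr)\right) = O\!\left(\tfrac{\log(1/\epsilon p)}{\epsilon p}\right)$, using $\Delta_c = O(1)$. (That this value of $T$ really does certify the claimed accuracy is precisely where \eqref{eq:sparsified_ratio} enters, inside the proof of the guarantee for Algorithm~\ref{alg:speedup}.)

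Finally I would transfer the iteration count to a per-vertex query count. Each iteration on the sparsified color graph reveals a matching of that graph; since every original vertex carries a single fixed color and at most one matching edge uses a given color, each original vertex is queried at most once per iteration, so the total number of queries per vertex is $O(T) = O(\log(1/\epsilon p)/\epsilon p)$, as claimed. The only genuinely delicate steps are making the problem-specific constants ($k = 2$, $\alpha = 1$, $c_{\max} = \Delta_c = O(1)$, $\delta = \epsilon$) collapse both \eqref{eq:sparsified_ratio} and the iteration interface down to $O(\log(1/\epsilon p)/\epsilon p)$, and checking that a matching of the \emph{color} graph still meets each \emph{original} vertex at most once, so that the ``$O(T)$ iterations'' bound carries over verbatim into ``$O(T)$ queries per vertex''.
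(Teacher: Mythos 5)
Your proposal is correct and follows the same route the paper intends: the corollary is stated without an explicit proof as an immediate consequence of the vertex-sparsification framework (Algorithm~\ref{alg:speedup}), and your derivation fills in exactly the details the paper leaves implicit — identifying bipartite matching as the $k=2$, $\alpha=1$, TDI case, setting $\delta=\epsilon$, observing $\epsilon'=\Theta(\epsilon)$ and $\delta'=\Theta(\epsilon)$ under $c_{\max}=O(1)$, plugging into the iteration interface $T(O(\log(k/p\alpha\epsilon'\delta')),\epsilon',\delta')$, and using integrality of the matching polytope to convert iterations into a per-vertex query bound. Your closing check — that a matching of the color graph meets each original vertex at most once because an original vertex's incident edges all share its single color — is the right subtlety to address and is argued correctly.
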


\begin{corollary}
For the $k$-hypergraph matching problem with $c_j = O(1)$ for all $j$, there is an algorithm that conducts $O(k (\log(k /\epsilon p) + 1/\epsilon)/\epsilon p)$ queries per vertex and finds $(1 - \epsilon)/(k - 1 + 1/k)$-approximate solution with probability at least $1 - \epsilon$.
\end{corollary}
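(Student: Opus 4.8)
The plan is to combine the $k$-hypergraph matching analysis of Corollary~\ref{cor:khypergraphmatching1} with the speedup framework of Algorithm~\ref{alg:speedup}. First I would check that the formulation \eqref{eq:k-hypergraph} is an instance of the sparsifiable problem \eqref{eq:sparsifiable}: the hypergraph is $k$-uniform, and the constraints $\sum_{e \in \delta(u)} x_e \le 1$ are exactly the rank constraints of the matching independence system $\mathcal{I} = \{\, S \subseteq E : |S \cap \delta(u)| \le 1 \ (\forall u \in V) \,\}$. Hence the two hypotheses needed to apply the main theorem of Section~\ref{sec:sparsification} hold: (1) Chan and Lau's LP-relative $\alpha$-approximation algorithm with $\alpha = 1/(k - 1 + 1/k)$ runs on any $k$-uniform hypergraph, in particular on the sparsified instance; and (2) by Theorem~\ref{thm:adaptive} together with the sparse-dual witness cover of Lemma~\ref{lem:poly_nontdi_sparse} (with sparsity parameter $\gamma = k$, as the Claim established just before Corollary~\ref{cor:khypergraphmatching1} provides $|\supp(y)| < k\mu$), the number of iterations of Algorithm~\ref{alg:adaptive} that suffices for a $(1 - \epsilon)\alpha$-approximation with probability $1 - \delta$ is
\[
  T\bigl(\log(n/\mu),\,\epsilon,\,\delta\bigr) = O\!\left(\frac{\Delta_c}{\epsilon p}\left(k\log\frac{n}{\mu} + \frac{1}{\epsilon} + \log\frac{1}{\delta}\right)\right),
\]
and $\Delta_c = O(1)$ since all coefficients are $O(1)$.

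Next I would invoke the speedup theorem with accuracy parameter $\epsilon$ and probability parameter $\delta = \epsilon$, so that $\epsilon' = \epsilon/(1 + c_{\max}) = \Theta(\epsilon)$ and $\delta' = \delta/4 = \Theta(\epsilon)$. That theorem immediately yields a $(1 - \epsilon)\alpha = (1 - \epsilon)/(k - 1 + 1/k)$-approximate solution with probability at least $1 - \delta = 1 - \epsilon$, which is the claimed ratio and success probability. It remains to bound the queries per vertex. In Line~\ref{line:3''} of Algorithm~\ref{alg:speedup} the number of iterations is set to $T = T(O(\log(k/p\alpha\epsilon'\delta')),\,\epsilon',\,\delta')$; substituting $1/\alpha = O(k)$, $\epsilon' = \Theta(\epsilon)$, $\delta' = \Theta(\epsilon)$, $\Delta_c = O(1)$ into the displayed bound and absorbing iterated logarithms gives $T = O\!\left(\frac{1}{\epsilon p}\bigl(k\log\frac{k}{\epsilon p} + \frac{1}{\epsilon}\bigr)\right)$, which is at most $O\!\left(k\bigl(\log(k/\epsilon p) + 1/\epsilon\bigr)/\epsilon p\right)$. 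Finally, exactly as in the discussion of Corollary~\ref{cor:khypergraphmatching1}, at each iteration Algorithm~\ref{alg:adaptive} reveals at most one hyperedge incident to each vertex of the sparsified instance in expectation (by the constraint $\sum_{e \in \delta(c)} x_e \le 1$ on each colour class $c$), hence $O(T)$ per sparsified vertex over all iterations; and since every original vertex lies in a single colour class and every colourful hyperedge through it is incident to that class, the expected number of queries touching any fixed original vertex is also $O(T)$, which is the claimed per-vertex bound.

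The main work here is bookkeeping rather than a new idea: one must verify that the two lossy steps inside Algorithm~\ref{alg:speedup} — replacing $\tilde\mu$ by $\tilde\mu^\circ \ge (1 - c_{\max}\epsilon')\tilde\mu$ and then losing a further $(1 - \epsilon')$ factor in the LP-relative rounding — compose to precisely $(1 - \epsilon)\alpha$ under the choice $\epsilon' = \epsilon/(1 + c_{\max})$, and that after sparsification $n^\circ/\tilde\mu^\circ = \mathrm{poly}(k, 1/p, 1/\epsilon)$ so that $\log(n^\circ/\tilde\mu^\circ)$ collapses to $O(\log(k/\epsilon p))$. Both of these are already carried out in the proof of the Section~\ref{sec:sparsification} theorem, so here it suffices to quote that theorem with $\alpha = 1/(k - 1 + 1/k)$ and simplify; the only genuinely new (but routine) point is the translation from ``queries per sparsified vertex'' to ``queries per original vertex'' sketched above.
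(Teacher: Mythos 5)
Your proposal is correct and follows exactly the route the paper intends: the corollary is a direct instantiation of the speedup theorem in Section~\ref{sec:sparsification} with $\alpha = 1/(k-1+1/k)$ and the iteration bound $T(\log(n/\mu),\epsilon,\delta) = O(\Delta_c(k\log(n/\mu) + 1/\epsilon + \log(1/\delta))/\epsilon p)$ coming from Lemma~\ref{lem:poly_nontdi_sparse} via Corollary~\ref{cor:khypergraphmatching1}. The paper leaves this bookkeeping implicit, and your added remark translating ``queries per colour class'' into ``queries per original vertex'' is a correct and worthwhile detail that the paper does not spell out.
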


\begin{corollary}
For the $k$-column sparse packing integer programming problem with $c_j = O(1)$, $b_i = O(1)$, and $A_{ij} = O(1)$ for all $i, j$, there is an algorithm that conducts $O(k (\log(k /\epsilon p) + 1/\epsilon)/\epsilon p)$ queries per vertex and finds $(1 - \epsilon)/2 k$-approximate solution with probability at least $1 - \epsilon$.
\end{corollary}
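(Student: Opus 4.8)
The plan is to derive this corollary from the unsparsified analysis of Section~\ref{sec:k-CSPIP} by running it through the speedup procedure of Algorithm~\ref{alg:speedup}, exactly as was done to turn Corollary~\ref{cor:khypergraphmatching1} into its sparsified version. First I would recast the $k$-column sparse PIP~\eqref{eq:k-column_sparse} as the independence-system problem~\eqref{eq:sparsifiable} on a $k$-uniform hypergraph, using the translation given at the end of Section~\ref{sec:sparsification}: put $V=\{1,\dots,n\}$, $E=\{1,\dots,m\}$, associate each hyperedge $j$ with $\{i:a_{ij}\neq 0\}$ (padding with zero-coefficient vertices when it has fewer than $k$), and let ${\mathcal I}=\{S\subseteq E:\sum_{j\in S}a_{ij}\le b_i\ \forall i\}$. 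The bounded-coefficient hypotheses then make $\Delta_c=\max_j(c_j^+-c_j^-)=O(1)$, $c_{\max}=\max_j c_j^+=O(1)$, and $w=\max_j\min_{i:a_{ij}\neq 0}b_i/a_{ij}=O(1)$ (since $1\le a_{ij}\le b_i$), so these quantities can all be hidden inside the $O(\cdot)$.

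Next I would assemble the iteration function $T(\log(n/\mu),\epsilon,\delta)$ demanded by the hypotheses of the speedup theorem. Parekh's algorithm~\cite{parekh2011iterative} is an LP-relative $\alpha$-approximation with $\alpha=1/2k$. For the iteration count, the same complementary-slackness counting as for $k$-hypergraph matching shows the dual of~\eqref{eq:k-column_sparse} has an optimal solution of support $O(k\mu)$ (the bounded coefficients only contribute $O(1)$ slack), so Lemma~\ref{lem:poly_nontdi_sparse} yields an $(\epsilon,\epsilon/2)$-witness cover of size $M^\mu$ with $\log M=O(k\log(n/(k\mu))+1/\epsilon)$. Plugging this into Theorem~\ref{thm:adaptive}, and applying the ``reveal $j$ with probability $x_j/w$'' modification of Algorithm~\ref{alg:adaptive} from Section~\ref{sec:k-CSPIP} (which scales the iteration count by $w$), gives $T(\log(n/\mu),\epsilon,\delta)=O\bigl(w\,(k\log(n/\mu)+1/\epsilon+\log(1/\delta))/\epsilon p\bigr)$.

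Then I would invoke Algorithm~\ref{alg:speedup} and its theorem with $\alpha=1/2k$, accuracy parameter $\epsilon'=\epsilon/(1+c_{\max})=\Theta(\epsilon)$, and probability parameter $\delta'=\delta/4$, taking $\delta=\epsilon$ so that $\delta'=\Theta(\epsilon)$. The theorem then produces a $(1-\epsilon)\alpha=(1-\epsilon)/2k$-approximate solution with probability at least $1-\delta=1-\epsilon$, running Algorithm~\ref{alg:adaptive} on the sparsified instance with the $\log(n/\mu)$ slot of $T$ replaced by $O(\log(k/p\alpha\epsilon'\delta'))=O(\log(k/\epsilon p))$. Substituting back into the displayed bound for $T$, using $w=O(1)$ and $\log(1/\delta')=O(\log(k/\epsilon p))$, yields $T=O\bigl((k\log(k/\epsilon p)+1/\epsilon)/\epsilon p\bigr)\le O\bigl(k(\log(k/\epsilon p)+1/\epsilon)/\epsilon p\bigr)$; and since constraint $i$ reveals $O(b_iT)=O(T)$ elements in expectation over the whole run, the number of queries per vertex is of the same order.

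The main obstacle is the bookkeeping forced by the fact that~\eqref{eq:k-column_sparse} violates Assumption~\ref{asmp:constraint}.c. I need to verify that $w$ enters linearly in the iteration count (as asserted in Section~\ref{sec:k-CSPIP}) and, more delicately, that the sparsification of Lemma~\ref{lem:sparsification} --- stated for the independence-system form~\eqref{eq:sparsifiable} --- composes correctly with the $x_j/w$-rescaled Algorithm~\ref{alg:adaptive}: sparsification only relabels vertices, so the coefficients $a_{ij},b_i$ (hence $w$, $\alpha$, and the $O(k\mu)$ dual-sparsity bound) are preserved, and the witness-cover size bound with $n$ replaced by $n^\circ=\Theta(k^2 s/\delta')$ and $\mu$ by $\tilde\mu^\circ$ is therefore still valid. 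Granting this, the rest is a routine substitution of the $O(1)$ coefficient bounds into the general estimates.
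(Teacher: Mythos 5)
Your proof is correct and follows the route the paper itself intends but never spells out: the paper simply writes ``Thus these are improved by using this technique'' after Theorem 5.3, so the steps you reconstruct --- recasting the $k$-CSPIP as an independence system on a $k$-uniform hypergraph, noting that $c_j, b_i, a_{ij} = O(1)$ make $\Delta_c$, $c_{\max}$, and $w$ all $O(1)$, carrying through the $O(k\mu)$-support complementary-slackness bound for the dual of \eqref{eq:k-column_sparse} to get a witness cover of size $\exp(O(\mu(k\log\frac{n}{k\mu}+1/\epsilon)))$, folding in the factor-$w$ cost of the $x_j/w$ reveal rule, and then invoking Algorithm~\ref{alg:speedup} with $\alpha=1/2k$, $\epsilon'=\Theta(\epsilon)$, $\delta=\epsilon$, so the $\log(n/\mu)$ slot collapses to $O(\log(k/\epsilon p))$ --- are exactly what is needed. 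Your closing observation that vertex sparsification only discards hyperedges and therefore preserves $a_{ij}$, $b_i$, $w$, $\alpha$, and the dual-sparsity constant is the right thing to check given that \eqref{eq:k-column_sparse} violates Assumption~\ref{asmp:constraint}.c, and it resolves the only real delicacy. The final arithmetic giving $T = O((k\log(k/\epsilon p)+1/\epsilon)/\epsilon p)$ and $O(b_iT)=O(T)$ queries per row/vertex matches the stated bound (indeed, the paper's bound with the extra factor $k$ is looser than what you derive).
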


\section*{Acknowledgments}
The authors thank anonymous reviewers for their careful reading and a number of valuable comments.
This work was supported by JSPS KAKENHI Grant Numbers 16H06931 and 16K16011.

\bibliographystyle{plain}
\bibliography{main}

\end{document}